\newtheorem{proposition}{Proposition}
\newtheorem{lemma}{Lemma}
\newtheorem{definition}{Definition}
\newtheorem{theorem}{Theorem}
\newtheorem{assumption}{Assumption}
\newtheorem{condition}{Condition}
\newtheorem{remark}{Remark}
\newenvironment{customprop}[1]
  {\innercustomprop}
  {\endinnercustomprop}
\newenvironment{customthm}[1]
  {\innercustomthm}
  {\endinnercustomthm}
\newenvironment{customremark}[1]
  {\innercustomremark}
  {\endinnercustomremark}
\newenvironment{customlemma}[1]
  {\innercustomlemma}
  {\endinnercustomlemma}  
\newcommand*{\rom}[1]{\expandafter\@slowromancap\romannumeral #1@}
\newcommand{\R}{\mathbb{R}}
\newcommand{\indep}{\rotatebox[origin=c]{90}{$\models$}}
\title[Estimating Optimal Treatment Rules with an Instrumental Variable]{Estimating Optimal Treatment Rules with an Instrumental Variable: A Partial Identification Learning Approach}
\author[Author 1 {\it et al.}]{Hongming Pu}
\address{University of Pennsylvania, Philadelphia, USA}
\author[H. Pu and B. Zhang]{Bo Zhang}
\address{University of Pennsylvania, Philadelphia, USA}
\begin{document}
\let\thefootnote\relax\footnotetext{\textit{Address for correspondence:} Bo Zhang, Department of Statistics, The Wharton School, University of Pennsylvania, Philadelphia, PA 19104 (e-mail: \textsf{bozhan@wharton.upenn.edu}).}

\begin{abstract}
Individualized treatment rules (ITRs) are considered a promising recipe to deliver better policy interventions. One key ingredient in optimal ITR estimation problems is to estimate the average treatment effect conditional on a subject's covariate information, which is often challenging in observational studies due to the universal concern of unmeasured confounding. Instrumental variables (IVs) are widely-used tools to infer the treatment effect when there is unmeasured confounding between the treatment and outcome. In this work, we propose a general framework of approaching the optimal ITR estimation problem when a valid IV is allowed to only partially identify the treatment effect. We introduce a novel notion of optimality called ``IV-optimality''. A treatment rule is said to be IV-optimal if it minimizes the maximum risk with respect to the putative IV and the set of IV identification assumptions. We derive a bound on the risk of an IV-optimal rule that illuminates when an IV-optimal rule has favorable generalization performance. We propose a classification-based statistical learning method that estimates such an IV-optimal rule, design computationally-efficient algorithms, and prove theoretical guarantees. We contrast our proposed method to the popular outcome weighted learning (OWL) approach via extensive simulations, and apply our method to study which mothers would benefit from traveling to deliver their premature babies at hospitals with high level neonatal intensive care units. \textsf{R} package \textsf{ivitr} implements the proposed method.

\end{abstract}

\keywords{Causal inference; Individualized treatment rule; Instrumental variable; Partial identification; Statistical learning theory}

\section{Introduction}
\pagenumbering{arabic}
\subsection{Estimating Individualized Treatment Rules with a Valid Instrumental Variable}
Individualized treatment rules (henceforth ITRs) are now recognized as a general recipe for leveraging vast amount of clinical, prognostic, and socioeconomic status data to deliver the best possible healthcare or other policy interventions. Researchers across many disciplines have responded to this trend by developing novel data-driven strategies that estimate ITRs. Some seminal works include \citet{murphy2003optimal}, \citet{robins2004optimal}, \citet{qian2011performance}, \citet{zhang2012estimating, zhang2012robust}, and \citet{zhao2012estimating}, among others. See \citet{kosorok2019precision} and \citet{tsiatis2019dynamic} for comprehensive and up-to-date surveys. One key ingredient in ITR estimation problems is to estimate the average treatment effect conditional on patients' clinical and prognostic features. However, estimation of the \emph{conditional average treatment effect} (CATE) can be challenging in randomized control trials (henceforth RCTs) with high-dimensional covariates, limited sample size, and individual noncompliance, and observational studies due to the universal concern of \emph{unmeasured confounding} (\citealp{kallus2018confounding}; \citealp{kallus2018interval};
\citealp{cui2019semiparametric};
\citealp{zhang2020selecting}; \citealp{qiu2020optimal}).

In non-ITR settings, instrumental variables (IVs) are commonly used tools to infer treatment effects in observational studies where observed covariates cannot adequately adjust for confounding between the treatment and outcome. However, few works have explored using an IV to estimate optimal ITRs. Two exceptions are \citet{cui2019semiparametric} and \citet{qiu2020optimal}, both of which studied ITR estimation problems when an IV can be used to \emph{point identify} the conditional average treatment effect under assumptions introduced in \citet{wang2018bounded}. However, one limitation of their approaches is that their assumptions that allow a valid IV to point identify the CATE may be quite stringent in some scenarios and do not necessarily hold, especially in ITR estimation settings where treatment effect heterogeneity is expected. 

This article takes a distinct perspective. Although a valid IV in general cannot point identify the average treatment effect (ATE) and similarly the CATE, it can \emph{partially identify} them, in the sense that a lower and upper bound of ATE and CATE can be obtained with a valid IV. Depending on the quality of the putative IV and various identification assumptions, lengths of partial identification intervals may vary, and in the extreme case the intervals collapse to points, i.e., the ATE or CATE (\citealp{AIR1996}; \citealp{robins1996identification}; \citealp{balke1997bounds}; \citealp{manski2003partial}). See \citet{swanson2018partial} for an up-to-date literature review on partial identification of ATE using an IV. It is worth pointing out that a partial identification interval is fundamentally different from a confidence interval, in that a confidence interval shrinks to a point as sample size goes to infinity, while a partial identification interval remains, in the limit, an interval, and represents the intrinsic uncertainty of an IV analysis. 

A popular approach to ITR estimation problems in non-IV settings is to transform the problem into a weighted classification problem, where the sign of the CATE constitutes a subject's label \{-1, +1\}, and the magnitude the weight. In an IV setting, the point identified CATE is replaced with an interval $I$. When such an interval avoids $0$, say $I = [1, 3]$, it is clear that the subject benefits from receiving the treatment and should be labeled as such. However, the situation becomes complicated when the interval covers $0$, say $I = [-1, 3]$, in which case the true CATE can be anything between $-1$ and $3$, and such a subject can no longer be labeled as benefiting or not benefiting from the treatment. In this way, partial identification of the CATE in an IV analysis poses a fundamental challenge to optimal ITR estimation. 

We have three objectives in this article. First, we propose a general classification-based (\citealp{zhang2012estimating}; \citealp{zhao2012estimating}) framework for optimal ITR estimation problems with an IV. The putative IV is allowed to only partially identify the conditional average treatment effect. Second, we introduce a novel notion of optimality called \emph{IV-optimality}: a treatment rule is said to be \emph{IV-optimal} if it minimizes the maximum risk that could incur given a putative IV and a set of IV identification assumptions. One remarkable feature of ``IV-optimaltiy'' is that it is \emph{amenable} to different IVs and identification assumptions, and allows empirical researchers to weigh estimation precision against credibility. We also derive bounds on the risk of an IV-optimal rule, which illuminates when an IV-optimal ITR has favorable generalization performance. Finally, we derive an estimator of such an IV-optimal rule which we call the IV-PILE estimator, develop computationally-efficient algorithms, and prove theoretical guarantees. The worst-case risk of the IV-PILE estimator can be estimated and gives practitioners important guidance on the applicability of their estimated ITRs. Via extensive simulations, we demonstrate that our proposed method has favorable generalization performance compared to applying the outcome weighted learning (OWL) (\citealp{zhao2012estimating}) and similar methods to observational data in the presence of unmeasured confounding. Finally, we apply our developed method to study the differential impact of delivery hospital on neonatal health outcomes. \textsf{R} package \textsf{ivitr} implements the proposed methodology.

\subsection{A Motivating Example: Differential Impact of Delivery Hospital on Premature Babies' Health Outcomes}
\label{subsec: intro motivating example}
We consider a concrete example to carry forward our discussion. \citet{lorch2012differential} constructed a cohort-based retrospective study out of all hospital-delivered premature babies in Pennsylvania and California between $1995$ and $2005$ and Missouri between $1995$ and $2003$. Using the differential travel time as an instrumental variable, \citet{lorch2012differential} find that there is benefit to neonatal outcomes when premature babies are delivered at hospitals with high-level neonatal intensive care units (NICUs) compared to hospitals without high-level NICUs.

An IV analysis is crucial in this study and similar studies based on observational data. Studies that directly use the treatment, e.g., delivery hospitals in the NICU study, to infer the treatment effect often cannot sufficiently adjust for unmeasured and unrecorded factors, such as the severity of comorbidities or laboratory results (\citealp{lorch2012differential}). Although \citet{lorch2012differential} have found evidence supporting a positive treatment effect of mothers delivering premature infants at high-level NICUs, some important questions remain. First and foremost, it is of great scientific interest to understand which mothers had better be sent to hospitals with high-level NICUs as compared to which mothers are just as well off at low level NICUs. Given the current limited capacity of high level NICUs, an answer to this question would facilitate our understanding of which mothers and their premature babies are most in need of high-level NICUs, and provide insight into how to construct optimal perinatal regionalization systems, systems that designate hospitals by the scope of perinatal service provided and designate where infants are born or transferred according to the level of care they need at birth (\citealp{lasswell2010perinatal}; \citealp{kroelinger2018comparison}). Such scientific inquiries elicit estimating individualized treatment rules using observational data consisting of mothers' observed characteristics, treatment received, outcome of interest, and a valid instrumental variable. We will revisit this example and apply our developed methodology to it near the end of the article.

\section{ITR Estimation with an IV: from Point to Partial Identification}
\label{sec: recast the problem into semi-sup classification}
We briefly review the problem of estimating ITRs from a classification perspective, and discuss the key impediment to generalizing the estimation strategy from RCTs to observational data in Section \ref{subsec: estimate ITR from classification perspective}. Section \ref{subsec: IV assumption and PI} discusses how to leverage a valid IV to partially identify the conditional average treatment effect, and Section \ref{subsec: recast prob in semi-sup learning} proposes a general framework of approaching the ITR estimation problem with a valid IV.

\subsection{ITR Estimation from a Classification Perspective: from Randomized Control Trials to Observational Studies}
\label{subsec: estimate ITR from classification perspective}
Suppose that the data $\{(\mathbf{X}_i, A_i, Y_i), i = 1, \dots, n\}$ are collected from a two-arm randomized trial. The $d$-dimensional vector $\mathbf{X}_i \in \mathcal{X}$ encodes subject $i$'s prognostic characteristics, $A_i \in \mathcal{A} = \{-1, +1\}$ a binary treatment, and $Y_i \in \mathbb{R}$ an outcome of interest. Let $f(\cdot): \mathcal{X} \mapsto \mathbb{R}$ be a discriminant function such that the sign of $f(\cdot)$ yields the desired treatment rule. Let $\mu(1, \mathbf{X}) = \mathbb{E}[Y(1) \mid \mathbf{X}] = \mathbb{E}[Y \mid A = 1, \mathbf{X}]$ and $\mu(-1, \mathbf{X}) = \mathbb{E}[Y(-1) \mid \mathbf{X}] = \mathbb{E}[Y \mid A = -1, \mathbf{X}]$ denote the average potential outcomes conditional on $\mathbf{X}$ in each arm, and $C(\mathbf{X}) = \mu(1, \mathbf{X}) - \mu(-1, \mathbf{X})$ the \emph{conditional average treatment effect}, or CATE, following the notation in \citet{zhang2012estimating}. 

The value function of a particular rule $f(\cdot)$ is defined to be $V(f) = \mathbb{E}[Y(\text{sgn}\{f(\mathbf{X})\})]$. An optimal rule is the one that maximizes $V(f)$ among a class of functions $\mathcal{F}$, or equivalently minimizes the following risk:
\begin{equation}
    \label{eqn: classical objective function}
  \mathcal{R}(f) = \mathbb{E}\left[|C(\mathbf{X})|\cdot \mathbbm{1}\big\{\text{sgn}\{C(\mathbf{X})\} \neq \text{sgn}\{f(\mathbf{X})\}\big\}\right],
\end{equation}
where $\text{sgn}(x) = 1, ~\forall x > 0$ and $-1$ otherwise.

Let $B_i = \text{sgn}\{C(\mathbf{X}_i)\}$ be a latent class label that assigns $+1$ to subject $i$ if she would benefit from the treatment and $-1$ otherwise. If a correct treatment decision is made, i.e., $\text{sgn}\{f(\mathbf{X}_i)\} = \text{sgn}\{C(\mathbf{X}_i)\}$, there is no loss incurred; otherwise, the decision is not optimal and the corresponding loss has magnitude $W_i = |C(\mathbf{X}_i)|$. In this way, the optimal ITR estimation problem is recast as a weighted classification problem whose expected weighted misclassification error is specified in (\ref{eqn: classical objective function}). In a typical classification problem, the training data contain class labels and weights. In the context of ITR estimation problems, the contrast function $C(\mathbf{X}_i)$ for subject $i$ is first estimated from data, say as $\widehat{C}(\mathbf{X}_i)$, and the associated label and weight are then constructed accordingly: $\widehat{B}_i = \text{sgn}\{\widehat{C}(\mathbf{X}_i)\}$ and $\widehat{W}_i = |\widehat{C}(\mathbf{X}_i)|$. To summarize, the original data $\{(\mathbf{X}_i, A_i, Y_i), i= 1, \dots, n\}$ are transformed into $\{(\mathbf{X}_i, \widehat{B}_i, \widehat{W}_i), i = 1, \dots, n\}$, and a standard classification routine is then applied to this derived dataset in order to estimate a function $f(\cdot)\in\mathcal{F}$ that minimizes $\mathcal{R}(f)$. This framework, as discussed in more detail in \citet{zhang2012estimating}, covers many popular ITR methodologies. Notably, the popular \emph{outcome weighted learning} (OWL) approach (\citealp{zhao2012estimating}) is a particular instance of this general framework, where $C(\mathbf{X})$ is estimated via an inverse probability weighted estimator (IPWE) and a support vector machine (SVM) is used to perform classification.

One critical task in estimating optimal ITRs from this classification perspective is to estimate well the contrast $C(\mathbf{X})$. With a known propensity score as in a randomized control trial, an inverse probability weighted estimator (IPWE) can be used to unbiasedly estimate $C(\mathbf{X})$. However, this task becomes much more challenging when data come from observational studies. A key assumption in drawing causal inference from observational data is the so-called \emph{treatment ignorability} assumption (\citealp{rosenbaum1983central}), also known as the \emph{no unmeasured confounding assumption} (henceforth NUCA) (\citealp{Robins1992}), or \emph{treatment exogeneity} (\citealp{Imbens2004}). A version of the no unmeasured confounding assumption states that
\begin{equation*}
     \label{eqn: treatment ignorability assumption}
\begin{split}
     &F(Y(1), Y(-1) \mid A = a, \boldsymbol{X} = \boldsymbol{x}) = F(Y(1), Y(-1) \mid \boldsymbol{X} = \boldsymbol{x}),~\forall(a, \boldsymbol{x}),
\end{split}
\end{equation*}
where $F(\cdot)$ denotes the cumulative distribution function. In words, the treatment assignment is effectively randomized within strata formed by observed covariates. However, when the NUCA fails, the conditional average treatment effect \emph{may not} be unbiasedly estimated from observed data as $\mathbb{E}[Y(1) \mid \mathbf{X}]$ is not necessarily equal to $\mathbb{E}[Y \mid A = 1, \mathbf{X}]$. 

\subsection{Instrumental Variables: Assumptions and Partial Identification}
\label{subsec: IV assumption and PI}

An instrumental variable is a useful tool to estimate the treatment effect when the treatment and outcome are believed to be confounded by unmeasured confounders. We consider the potential outcome framework that formalizes an IV as in \citet{AIR1996}. Let $Z_i \in \{-1, +1\}$ be a binary IV associated with subject $i$, and $\mathbf{Z}$ a length-n vector containing all IV assignments. Let $A_i(\mathbf{Z})$ be the indicator of whether subject $i$ would receive the treatment or not under IV assignment $\mathbf{Z}$, $\mathbf{A}$ a length-n vector of treatment assignment status with $A_i(\mathbf{Z})$ being the $i\text{th}$ entry, and $Y_i(\mathbf{Z}, \mathbf{A})$ the outcome of subject $i$ under IV assignment $\mathbf{Z}$ and treatment assignment $\mathbf{A}$. We assume that the following \emph{core} IV assumptions hold (\citealp{AIR1996}):
\begin{enumerate}
    \item[(IV.A1)] Stable Unit Treatment Value Assumption (SUTVA): $Z_{i}=Z_{i}^{\prime}$ implies $A_{i}(\mathbf{Z}) = A_{i}(\mathbf{Z}^{\prime})$; $Z_{i} = Z_{i}^{\prime}$ and $A_{i}=A_{i}^{\prime}$ together imply $Y_{i}(\mathbf{Z}, \mathbf{A}) = Y_{i}(\mathbf{Z}^{\prime}, \mathbf{A}^{\prime})$.
    \item[(IV.A2)] Positive correlation between IV and treatment: $P(A = 1\mid Z=1,\mathbf{X}=\mathbf{x}) > P(A = 1\mid Z = -1, \mathbf{X} = \mathbf{x})$ for all $\mathbf{x}$.
    \item[(IV.A3)] Exclusion restriction (ER): $Y_i(\mathbf{Z}, \mathbf{A}) = Y_i(\mathbf{Z}^\prime, \mathbf{A})$ for all $\mathbf{Z}, \mathbf{Z}^\prime, \mathbf{A}$. 
    \item[(IV.A4)] IV unconfoundedness conditional on $\mathbf{X}$: $Z \indep A(z), Y(z, a) \mid \mathbf{X}$ for $z \in \{-1, +1\}$ and $a \in \{-1, +1\}$. 
\end{enumerate}
These four core IV assumptions do not allow point identification of the average treatment effect (ATE); however, they lead to the well-known Balke-Pearl bound on the ATE for a binary outcome $Y$ (\citealp{balke1997bounds}). See \citet{swanson2018partial} for other (possibly weaker) versions of assumptions that lead to the same Balke-Pearl bound. For a continuous but bounded outcome, \citet{manski1998monotone} derived a nonparametric bound under the monotone instrumental variable assumption. Additional assumptions are typically needed to tighten these bounds or to identify the treatment effect in an identifiable subgroup or the entire population.

In a binary IV analysis, a subject belongs to one of the four compliance classes: 1) an always-taker if $\{A(1), A(-1)\} = (1,1)$; 2) a complier if $\{A(1), A(-1)\}=(1,-1)$; 3) a never-taker if $\{A(1), A(-1)\}=(-1,-1)$; 4) a defier if $\{A(1), A(-1)\}=(-1,1)$. When the outcome is binary and the proportion of defiers is known, \citet{richardson2010analysis} discussed how to bound the ATE among all four compliance classes as a function of the defier proportion. An important instance is when there is no defiers and a valid IV can be used to identify the \emph{local average treatment effect}, i.e., the average treatment effect among compliers, as shown in the seminal work by \citet{AIR1996}. 

The fundamental limitation of an IV analysis is that even a valid IV provides \emph{no} information regarding the counterfactual outcomes of always-takers and never-takers: we simply do not have information about what would happen had always-takers been forced to forgo the treatment, and never-takers had they been forced to accept the treatment. This suggests another strategy to further bound the population average treatment effect by setting bounds to $\mathbb{E}[Y(-1) \mid \text{always-takers}]$ and $\mathbb{E}[Y(1) \mid \text{never-takers}]$. \citet{swanson2018partial} contains a detailed account of various proposals on how to set these bounds. Some versions of IV identification assumptions, for instance those established in \citet{wang2018bounded}, would allow a valid IV to \emph{point identify} the population average treatment effect. Estimating optimal treatment rules when the CATE can be point identified using a valid IV is studied in \citet{cui2019semiparametric} and \citet{qiu2020optimal}, and is not the focus of the current paper, although it is a special case of our general framework. The rest of this article focuses on how to estimate useful individualized treatment rules when an IV only partially identifies the CATE, possibly under various IV-specific identification assumptions. 

\subsection{Estimating Optimal ITR with an IV from a Partial Identification Perspective}
\label{subsec: recast prob in semi-sup learning}
We now describe a general framework of approaching the problem of estimating optimal treatment rules using a valid IV in observational studies. Suppose that we have i.i.d data $\{(\mathbf{X}_i, Z_i, A_i, Y_i), i = 1, \dots, n\}$ with a binary IV $Z_i$, binary treatment $A_i$, observed covariates $\mathbf{X}_i \in \mathbb{R}^d$, and outcome of interest $Y_i \in \mathbb{R}$. Let $\mathcal{C}$ denote a set of IV identification assumptions and $I_i =  [L(\mathbf{X}_i), U(\mathbf{X}_i)] \ni C(\mathbf{X}_i)$ a partial identification interval of the conditional average treatment effect $C(\mathbf{X}_i)$ associated with subject $i$ under $\mathcal{C}$. We view each subject as belonging to one of the following three latent classes (with class label $B_i$): 
\begin{enumerate}
    \item $B_i = +1$ if $I_i > \Delta$ in the sense that $x > \Delta, \forall x \in I_i$; 
    \item $B_i = -1$ if $I_i < \Delta$ in the sense that $x < \Delta, \forall x \in I_i$;
    \item $B_i = \text{NA}$ if $\Delta \in I_i $.
\end{enumerate}
In words, the class $B = +1$ consists of those who would benefit at least $\Delta$ from the treatment, $B = -1$ those who would not benefit more than $\Delta$, and $B = \text{NA}$ those for whom the putative IV and the set of identification assumptions $\mathcal{C}$ together cannot assert that subject $i$ would benefit at least $\Delta$ from the treatment or not. We will refer to subjects with $B_i \in \{-1, +1\}$ as labeled subjects and $B_i = \text{NA}$ unlabeled. 

\begin{remark}\rm
We let $\Delta$ denote a margin of practical relevance. In many ITR settings, the treatment may only do harm (or good), and we would recommend taking/not taking the treatment only when the margin is large. In our application, a high-level NICU might never do harm to mothers and preemies compared to a low-level NICU; however, given the limited capacity of high level NICUs and long travel times for some mothers to a high level NICUs, it may be more reasonable for mothers and their newborns to be sent to the nearest NICU, unless a high-level NICU is \emph{significantly} better in reducing the mortality. This trade-off is reflected by the margin $\Delta$ set according to expert knowledge. One can always let $\Delta = 0$ and the problem is reduced to the more familiar setting. 
\end{remark}

In practice, $I_i$ is estimated from the observed data, say as $\widehat{I}_i$, and $B_i$ is constructed accordingly, say as $\widehat{B}_i$. Consider the derived dataset $\mathcal{D} = \{(\mathbf{X}_i, \widehat{I}_i, \widehat{B}_i), i = 1, \dots, n \}$. Write 
\[
\mathcal{D} = \mathcal{D}_l \cup \mathcal{D}_{ul} = \{(\mathbf{X}_i, \widehat{I}_i, \widehat{B}_i), i = 1, \dots, l\} \cup
\{(\mathbf{X}_i, \widehat{I}_i), i = l + 1, \dots, n\},
\]
where $l$ subjects in $\mathcal{D}_l$ have labels $\widehat{B}_i \in \{-1, +1\}$, and $u = n - l$ subjects in $\mathcal{D}_{ul}$ are unlabeled. Our goal is to still learn an ``optimal'' treatment rule $f(\cdot)$ such that some properly defined misclassification error is minimized. 

\section{Examples of Partial Identification Bounds for a Binary Outcome}
\label{sec: examples of C and application to NICU}

\subsection{Balke-Pearl Bound}
\label{subsec: balke pearl bound}
Assume that the four core IV assumptions (IV.A1 - IV.A4) stated in Section \ref{subsec: IV assumption and PI} hold within strata formed by observed covariates, i.e.,
\begin{equation*}
    \mathcal{C}_{\text{BP}} = \{\text{IV.A1 - IV.A4 hold within strata of}~\mathbf{X}\}.
\end{equation*}
The Balke-Pearl bounds state that the conditional average treatment effect $C(\mathbf{X})$ is lower bounded by (\citealp{balke1997bounds}; \citealp{cui2019semiparametric}):
\begin{equation}
\label{eqn: bp lower bound}
    L(\mathbf{X})  = \max \left\{\begin{array}{lr} p_{-1, -1 \mid -1, \mathbf{X}} + p_{1, 1 \mid 1, \mathbf{X}} - 1\\
        
        p_{-1, -1 \mid 1, \mathbf{X}} + p_{1, 1 \mid 1, \mathbf{X}} - 1\\
        
        p_{1, 1 \mid -1, \mathbf{X}} + p_{-1, -1 \mid 1, \mathbf{X}} - 1\\
        
        p_{-1, -1 \mid -1, \mathbf{X}} + p_{1, 1 \mid -1, \mathbf{X}} - 1  \\
        
         2p_{-1, -1 \mid -1, \mathbf{X}} +  p_{1, 1 \mid -1, \mathbf{X}} +  p_{1, -1 \mid 1, \mathbf{X}} +  p_{1, 1 \mid 1, \mathbf{X}} - 2 \\
         
         p_{-1, -1 \mid -1, \mathbf{X}} +  2p_{1, 1 \mid -1, \mathbf{X}} +  p_{-1, -1 \mid 1, \mathbf{X}} +  p_{-1, 1 \mid 1, \mathbf{X}} - 2 \\
         
          p_{1, -1 \mid -1, \mathbf{X}} +  p_{1, 1 \mid -1, \mathbf{X}} +  2p_{-1, -1 \mid 1, \mathbf{X}} +  p_{1, 1 \mid 1, \mathbf{X}} - 2 \\
         
          p_{-1, -1 \mid -1, \mathbf{X}} +  p_{-1, 1 \mid -1, \mathbf{X}} +  p_{-1, -1 \mid 1, \mathbf{X}} +  2p_{1, 1 \mid 1, \mathbf{X}} - 2 \\
        \end{array}\right\},
\end{equation}
and upper bounded by
\begin{equation}
\label{eqn: bp upper bound}
    U(\mathbf{X})  = \min \left\{\begin{array}{lr} 
        
        1 - p_{1, -1 \mid -1, \mathbf{X}} - p_{-1, 1 \mid 1, \mathbf{X}}\\
        
        1 - p_{-1, 1 \mid -1, \mathbf{X}} - p_{1, -1 \mid 1, \mathbf{X}}\\
        
        1 - p_{-1, 1 \mid -1, \mathbf{X}} - p_{1, -1 \mid -1, \mathbf{X}} \\
        
        1 - p_{-1, 1 \mid 1, \mathbf{X}} - p_{1, -1 \mid 1, \mathbf{X}}\\
        
         2 - 2p_{-1, 1 \mid -1, \mathbf{X}} -  p_{1, -1 \mid -1, \mathbf{X}} -  p_{1, -1 \mid 1, \mathbf{X}} -  p_{1, 1 \mid 1, \mathbf{X}}\\
         
         2 - p_{-1, 1 \mid -1, \mathbf{X}} -  2p_{1, -1 \mid -1, \mathbf{X}} -  p_{-1, -1 \mid 1, \mathbf{X}} -  p_{-1, 1 \mid 1, \mathbf{X}} \\
         
          2 - p_{1, -1 \mid -1, \mathbf{X}} -  p_{1, 1 \mid -1, \mathbf{X}} -  2p_{-1, 1 \mid 1, \mathbf{X}} -  p_{1, -1 \mid 1, \mathbf{X}}\\
         
          2 - p_{-1, -1 \mid -1, \mathbf{X}} -  p_{-1, 1 \mid -1, \mathbf{X}} -  p_{-1, 1 \mid 1, \mathbf{X}} -  2p_{1, -1 \mid 1, \mathbf{X}}\\
        \end{array}\right\},
\end{equation}
where $p_{y, a \mid z, \mathbf{X}}$ is a shorthand for $P(Y = y,  A = a \mid Z = z, \mathbf{X})$. Note that all conditional probabilities $\{P(Y = y,  A = a \mid Z = z, \mathbf{X} = \mathbf{x}),~y=\pm 1, a = \pm 1, z = \pm 1\}$ can in principle be nonparametrically identified. In practice, we may estimate them by re-coding $2 \times 2 = 4$ combinations of $Y \in \{-1, +1\}$ and $A \in \{-1, +1\}$ as four categories and fitting a flexible and expressive multi-class classification routine, e.g., random forests (\citealp{breiman2001random}).

\subsection{Bounds as in \citet{siddique2013partially}}
\citet{siddique2013partially} considers an assumption (in addition to four core IV assumptions) that limits treatment heterogeneity in the following way: 
\begin{itemize}
    \item[(IV.A5)] Correct Non-Compliant Decision:
    \begin{equation*}
        \begin{split}
             &\mathbb{E}[Y(1) \mid A = 1, Z = -1] - \mathbb{E}[Y(-1) \mid A = -1, Z = -1] \geq 0, \\
              &\mathbb{E}[Y(-1) \mid A = -1, Z = 1] - \mathbb{E}[Y(1) \mid A = -1, Z = 1] \geq 0.
        \end{split}
    \end{equation*}
\end{itemize}
In words, this assumption states that for those who take a treatment different from the encouragement (i.e., $A \neq Z$), their decisions are on average favorable. Under the four core IV assumptions and this extra assumption, the bound on ATE can be further tightened. Let
\begin{equation*}
    \mathcal{C}_{\text{Sid}} = \{\text{IV.A1 - IV.A4 \text{plus} IV.A5 hold within strata of}~\mathbf{X}\}.
\end{equation*}
Under IV identification set $\mathcal{C}_{\text{Sid}}$, the conditional average treatment effect is lower bounded by (\citealp{siddique2013partially}; \citealp{swanson2018partial}):
\begin{equation}
    \label{eqn: siddique lower bound}
    L(\mathbf{X}) = \max \left\{\begin{array}{lr} p_{1, 1 \mid 1, \mathbf{X}} + p_{1, -1 \mid 1, \mathbf{X}}\\
    p_{1, 1 \mid -1, \mathbf{X}} 
    \end{array}\right\} - \min \left\{\begin{array}{lr} p_{1, -1 \mid -1, \mathbf{X}} + p_{1 \mid -1, \mathbf{X}}\\
    p_{1, -1 \mid 1, \mathbf{X}} + p_{1 \mid 1, \mathbf{X}}
    \end{array}\right\},
\end{equation}
and upper bounded by
\begin{equation}
    \label{eqn: siddique upper bound}
    U(\mathbf{X}) = \min \left\{\begin{array}{lr} p_{1, 1 \mid 1, \mathbf{X}} + p_{-1 \mid 1, \mathbf{X}}\\
    p_{1, 1 \mid -1, \mathbf{X}} + p_{-1 \mid -1, \mathbf{X}}
    \end{array}\right\} - \max \left\{\begin{array}{lr} p_{1, -1 \mid -1, \mathbf{X}} + p_{1, 1 \mid -1, \mathbf{X}}\\
    p_{1, -1 \mid 1, \mathbf{X}} 
    \end{array}\right\},
\end{equation}
where $p_{y, a \mid z, \mathbf{X}}$ again stands for $P(Y = y, A = a \mid Z = z, \mathbf{X})$, and $p_{a \mid z, \mathbf{X}}$ is a shorthand for $P(A = a \mid Z = z, \mathbf{X})$. Observe that $P(A = a \mid Z = z, \mathbf{X}) = \sum_{y \in \{0, 1\}} P(Y = y, A = a \mid Z = z, \mathbf{X})$, and we can again estimate the lower bound and upper bound by estimating $\{P(Y = y,  A = a \mid Z = z, \mathbf{X} = \mathbf{x}),~y=\pm 1, a = \pm 1, z = \pm 1\}$.

\begin{remark}[Assumption Set $\mathcal{C}$]\rm
There are many other IV identification assumptions that help reduce the length of partial identification intervals in one way or another. Again, we would refer readers to \citet{baiocchi2014instrumental} and \citet{swanson2018partial} for bounds other than those considered above. We would like to point out that IV identification assumptions are typically not verifiable (although they might lead to testable implications), and depend largely on expert knowledge. Moreover, certain assumptions may be inappropriate in the context of ITR estimation problems, e.g., assumptions that largely restrict treatment heterogeneity, and should be made with caution.
\end{remark}

\begin{remark}[Continuous but Bounded $Y$]\rm
When $Y$ is continuous, partial identification bounds on $Y$ require additional assumptions that bound the support of $Y$. In Supplementary Material A, we further review assumptions and partial identification results that allow a valid IV to partially identify the counterfactual mean (and hence the ATE and CATE) of a continuous but bounded outcome.
\end{remark}

\section{An IV-Partial Identification Learning (IV-PILE) Approach to Estimating Optimal ITRs: IV-Optimality, Risk, and Optimization}

\label{sec: method and computation}
\subsection{IV-Optimality}
\label{subsec: IV-optimality}
Without loss of generality, we assume $\Delta=0$. Let $f(\cdot): \mathcal{X} \mapsto \mathbb{R}$ be a discriminant function and $\text{sgn}\{f(\cdot)\}$ a decision rule to be learned. Recall that the risk function to be minimized in an ITR estimation problem is \[
\mathcal{R}(f) = \mathbb{E}\left[|C(\mathbf{X})|\cdot \mathbbm{1}\big\{\text{sgn}\{f(\mathbf{X})\} \neq \text{sgn}\{C(\mathbf{X})\}\big\}\right].
\]As has been argued extensively, this optimal rule is in general \emph{not} identifiable when the collected observed covariates $\boldsymbol{X}$ cannot adequately address the confounding between the treatment and outcome.

To proceed, we define a new notion of optimality and a new estimand to target.
\begin{definition}[IV-Optimality]\rm
\label{def: IV-optimal}
A treatment rule $f(\cdot) \in \mathcal{F}$ is said to be IV-optimal if it is \emph{optimal with respect to the putative IV and assumption set $\mathcal{C}$} in the following sense:
\begin{equation*}
\begin{split}
 f &= \underset{f \in \mathcal{F}}{\text{argmin}} ~\mathcal{R}_{\text{upper}}(f;L(\cdot),U(\cdot)) \\
& =  \underset{f \in \mathcal{F}}{\text{argmin}}~\mathbb{E}\left[\sup_{C'(\mathbf{X}) \in [L(\mathbf{X}), U(\mathbf{X})]}  |C'(\mathbf{X})|\cdot \mathbbm{1}\big\{\text{sgn}\{f(\mathbf{X})\} \neq \text{sgn}\{C'(\mathbf{X})\}\big\}\right],
    \label{eqn: risk function with indicator}
\end{split}
\end{equation*}
where $[L(\mathbf{X}), U(\mathbf{X})]$ is the partial identification interval under the putative IV and identification assumption set $\mathcal{C}$.
\end{definition}

Proposition \ref{proposition: exchangeability} asserts that $\mathbb{E}[~\cdot~]$ and $\sup$ operators in Definition \ref{def: IV-optimal} are exchangeable.
\begin{proposition}\rm
\label{proposition: exchangeability}
\begin{equation*}
    \begin{split}
        f &= \underset{f \in \mathcal{F}}{\text{argmin}}~\mathbb{E}\left[\sup_{C'(\mathbf{X}) \in [L(\mathbf{X}), U(\mathbf{X})]}  |C'(\mathbf{X})|\cdot \mathbbm{1}\big\{\text{sgn}\{f(\mathbf{X})\} \neq \text{sgn}\{C'(\mathbf{X})\}\big\}\right] \\
&= \underset{f \in \mathcal{F}}{\text{argmin}}\sup_{C'(\cdot):~ L \preceq C' \preceq U} \mathcal{R}(f; C'(\cdot)),
    \end{split}
\end{equation*}
where $f_1 \preceq f_2$ denotes $f_1(\mathbf{x}) \leq f_2(\mathbf{x})$ for all $\mathbf{x}$. 
\end{proposition}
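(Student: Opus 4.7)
The plan is to prove the stronger pointwise statement that, for every fixed $f \in \mathcal{F}$,
\[
\mathbb{E}\!\left[\sup_{C'(\mathbf{X}) \in [L(\mathbf{X}), U(\mathbf{X})]}  |C'(\mathbf{X})|\cdot \mathbbm{1}\big\{\text{sgn}\{f(\mathbf{X})\} \neq \text{sgn}\{C'(\mathbf{X})\}\big\}\right] = \sup_{C'(\cdot):~ L \preceq C' \preceq U}\mathcal{R}(f;C'(\cdot)).
\]
Once this equality of the two objectives holds for every $f$, the two $\text{argmin}$ sets coincide, which is exactly the claim.

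One direction is immediate: for any admissible $C'(\cdot)$ satisfying $L \preceq C' \preceq U$, the integrand $|C'(\mathbf{X})|\cdot \mathbbm{1}\{\text{sgn}\{f(\mathbf{X})\} \neq \text{sgn}\{C'(\mathbf{X})\}\}$ is pointwise bounded above by the inner supremum. Taking expectation on both sides and then supremum over $C'(\cdot)$ gives $\sup_{C'(\cdot)}\mathcal{R}(f;C'(\cdot)) \leq \mathbb{E}\!\left[\sup_{C'} |C'(\mathbf{X})|\cdot \mathbbm{1}\{\cdot\}\right]$.

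For the reverse direction, I would exhibit a measurable selector that attains the pointwise supremum. Fix $f$ and $\mathbf{x}$. Using the paper's convention $\text{sgn}(x)=1$ for $x>0$ and $-1$ otherwise, the integrand as a function of $C' \in [L(\mathbf{x}), U(\mathbf{x})]$ is either (i) $|C'|\cdot\mathbbm{1}\{C' \le 0\}$ if $f(\mathbf{x})>0$, or (ii) $|C'|\cdot\mathbbm{1}\{C' > 0\}$ if $f(\mathbf{x})\le 0$. In case (i) the pointwise supremum equals $\max\{-L(\mathbf{x}),0\}$ and is attained at $C' = L(\mathbf{x})$; in case (ii) it equals $\max\{U(\mathbf{x}),0\}$ and is attained at $C' = U(\mathbf{x})$. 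Hence the selector
\[
C^{\star}_f(\mathbf{x}) \;=\; L(\mathbf{x})\,\mathbbm{1}\{f(\mathbf{x})>0\} + U(\mathbf{x})\,\mathbbm{1}\{f(\mathbf{x})\le 0\}
\]
is a measurable function of $\mathbf{x}$ (as a combination of the measurable maps $L(\cdot), U(\cdot), f(\cdot)$) that satisfies $L \preceq C^{\star}_f \preceq U$ and attains the inner supremum at every $\mathbf{x}$. Therefore $\mathcal{R}(f;C^{\star}_f(\cdot))$ equals the expected pointwise supremum, and so the supremum over admissible $C'(\cdot)$ is at least this value, closing the inequality.

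The only substantive obstacle is the measurable selection step, and this is why I would first carry out the two-case analysis above: because the inner maximizer always lies at an endpoint of the partial identification interval, one does not need a general measurable selection theorem and an explicit formula suffices, provided one assumes (as is standard and implicit in the setup) that $f, L, U$ are Borel measurable and that $U(\mathbf{X})$ and $L(\mathbf{X})$ are integrable so that the relevant expectations are well defined.
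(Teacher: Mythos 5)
Your proposal is correct and follows essentially the same route as the paper's own proof: both establish the pointwise equality of the two objectives for each fixed $f$, with the easy direction by monotonicity of expectation and the reverse direction by exhibiting the explicit endpoint selector $C^{\star}_f(\mathbf{x}) = L(\mathbf{x})\,\mathbbm{1}\{f(\mathbf{x})>0\} + U(\mathbf{x})\,\mathbbm{1}\{f(\mathbf{x})\le 0\}$, which is exactly the $C^\ast$ used in the paper. Your added case analysis verifying that the supremum is attained at an interval endpoint, and the remark on measurability, are fine elaborations of the same argument.
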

\begin{proof}
All proofs in the article are in Supplementary Materials C, D and E.
\end{proof}

\begin{remark}\rm
The risk function $\mathcal{R}_{\text{upper}}(f;L(\cdot),U(\cdot))$ considered in Definition \ref{def: IV-optimal} represents the expected worst-case weighted misclassification error among all $C'(\mathbf{X})$ compatible with $L(\mathbf{X})$ and $U(\mathbf{X})$ informed by the putative IV and IV identification assumptions. $\mathcal{R}_{\text{upper}}(f;L(\cdot),U(\cdot))$ is a natural upper bound on the risk $\mathcal{R}(f)$. Proposition \ref{proposition: exchangeability} further shows that $f$ in Definition \ref{def: IV-optimal} can be understood as a \emph{min-max} estimate.
\end{remark}

\begin{remark}\rm
When $C(\cdot) = L(\cdot) = U(\cdot)$, $\mathcal{R}_{\text{upper}}(f)$ would reduce to $\mathcal{R}(f)$, and IV-optimality reduces to the usual notation of optimality considered in \citet{zhang2012estimating}, \citet{zhao2012estimating}, \citet{cui2019semiparametric}, and \citet{qiu2020optimal}. 
\end{remark}

This new optimality criterion has at least three desirable features. First, it is always well-defined for any valid IV and under minimal IV identification assumptions. Second, it facilitates using IV identification assumptions as ``leading cases, not truths" (\citealp[Page 72]{tukey1986sunset}). According to Tukey, a statistical procedure is ``safe" if it is valid in a wide range of scenarios. The statistical procedure targeting the ``IV-optimal" rule is therefore ``safe" in the sense that the estimand is well-defined and can be learned under a \emph{wide range} of IV identification assumptions and mild modeling assumptions. In sharp contrast, \citet{cui2019semiparametric} and \citet{qiu2020optimal} aimed at learning the \emph{optimal} ITR with an IV; though the optimal ITR is always well-defined, it \emph{cannot} be learned even with a valid IV unless some often stringent IV identification assumptions are met. Third and perhaps most importantly, the notion of ``IV-optimality" leaves to IV identification assumptions ``the task of stringency" (\citealp[Page 72]{tukey1986sunset}), and captures the intuition that the quality of the estimated ITR should depend on the quality of the instrumental variable. According to Definition \ref{def: IV-optimal} and Proposition \ref{proposition: exchangeability}, an ``IV-optimal" ITR is more stringent, in the sense that it is ``closer" to the true underlying optimal ITR and has smaller risk and better generalization performance if the putative IV together with IV identification assumptions can help narrow down the partial identification intervals. This is the case, for instance, when the putative IV is a very strong one and the compliance rate is very high, or when assumptions in addition to the core IV assumptions, e.g., the correct non-compliant decision assumption (IV.A5), apply to the putative IV. We study more closely the risk of an IV-optimal ITR in Section \ref{subsec: Bayes rule and true generalization error}.

\begin{remark}[Multiple IVs and Weak IVs]\rm
In many empirical studies, researchers have multiple putative IVs, e.g., excess tuition \emph{and} excess distance in a study of the effect of community college on educational attainment (\citealp{rouse1995democratization}), and it is often unclear which one of these IVs, or if any of them, satisfies the point identification assumptions required in \citet{cui2019semiparametric} and \citet{qiu2020optimal} to identify the optimal ITR. However, these multiple IVs can be used to estimate their respective ``IV-optimal" ITRs, possibly under different, IV-specific, identification assumptions, and the quality of each resulting ``IV-optimal" ITR depends on how much each of these multiple IVs can narrow down the partial identification intervals and pinpoint the CATE. Multiple IVs can even be combined into a single stronger IV, and this stronger IV is likely to yield an ``IV-optimal" ITR that is more stringent and has better generalization performance compared to using any of the multiple IVs alone. On the other hand, if researchers only have a very weak IV, the corresponding partial identification intervals may be excessively long and non-informative, and as a result, the ``IV-optimal" ITR may be far from the optimal ITR in its generalization performance. Indeed, with a weak IV, researchers should expect little information to be learned about the treatment effect and perhaps the wisest thing to do is switching to a stronger IV.
\end{remark}

\begin{remark}\rm
\label{remark: we also develop plug-in estimators}
Although not the primary focus of this paper, one can directly minimize the expectation in Definition \ref{def: IV-optimal} in a pointwise manner by estimating $L(\mathbf{X})$ and $U(\mathbf{X})$, just like one can estimate $C(\mathbf{X})$ and then take $\text{sgn}\{\widehat{C}(\mathbf{X})\}$ to be the estimated optimal ITR in non-IV settings. These methods are called \emph{indirect methods} in the literature as they indirectly specify the form of the optimal ITR through postulated models for various aspects of $C(\mathbf{X})$ in non-IV settings (\citealp{zhao2019efficient}), and $L(\mathbf{X})$ and $U(\mathbf{X})$ in our setting. In Supplementary Material G, we construct simple plug-in estimators for an IV-optimal ITR based on this idea, and prove that this straightforward plug-in estimator is in fact minimax optimal. We pursue a classification perspective as in \citet{zhang2012estimating} and \citet{zhao2012estimating} here rather than the indirect methods because of the following consideration. In many practical scenarios, we would like to have control over the complexity of the estimated ITR. This in general cannot be fulfilled by indirect methods unless we specify some simple models to estimate $L(\mathbf{X})$ and $U(\mathbf{X})$ in the first place; however, $L(\mathbf{X})$ and $U(\mathbf{X})$ are unlikely to admit simple parametric forms in our settings as they are complicated combinations of maxima and/or minima of many conditional probabilities (see Section \ref{sec: examples of C and application to NICU}). On the other hand, if we use flexible machine learning tools to estimate conditional probabilities involved in $L(\mathbf{X})$ and $U(\mathbf{X})$, the corresponding ITR is often complicated and lacks interpretability. It may also suffer from the problem of overfitting (\citealp{zhao2012estimating}; \citealp{wang2016learning}; \citealp{zhao2019efficient}). These considerations motivate us to adopt the classification perspective as in \citet{zhang2012estimating} and \citet{zhao2012estimating}. A great appeal of the classification perspective is that by decoupling the task of estimating $L(\mathbf{X})$ and $U(\mathbf{X})$ from that of estimating the IV-optimal rule, the estimated ITR no longer suffers from the aforementioned problems. In principle, one can leverage flexible machine learning tools to estimate relevant conditional probabilities and hence $L(\mathbf{X})$ and $U(\mathbf{X})$, while still learning a parsimonious IV-optimal ITR within a pre-specified function class, e.g., the class of linear functions.
\end{remark}

\subsection{Bayes Decision Rule and Bayes Risk}
\label{subsec: Bayes rule and true generalization error}

Define the \emph{Bayes risk} $\mathcal{R}^\ast_{\text{upper}} = \inf_{f} \mathcal{R}_{\text{upper}}(f)$, where infimum is taken over all measurable functions. A decision rule $f$ is called a \emph{Bayes decision rule} if it attains the Bayes risk, i.e., $\mathcal{R}_{\text{upper}}(f^\ast) = \mathcal{R}^\ast_{\text{upper}}$. Proposition \ref{prop: Bayes decision rule} gives a representation of the Bayes decision rule.

\begin{proposition}\rm
\label{prop: Bayes decision rule}
Consider the risk function $\mathcal{R}_{\text{upper}}(f)$ defined in Definition \ref{def: IV-optimal}. Let
\[
\eta(\mathbf{x}) = |U(\mathbf{x})|\cdot\mathbbm{1}\big\{L(\mathbf{x}) > 0\big\} - |L(\mathbf{x})|\cdot\mathbbm{1}\big\{U(\mathbf{x}) < 0\big\} + (|U(\mathbf{x})| - |L(\mathbf{x})|)\cdot\mathbbm{1}\big\{[L(\mathbf{x}), U(\mathbf{x})] \ni 0\big\}.
\]
Consider a decision rule $f^\ast(\mathbf{x})$ such that 
\[
    \text{sgn}\{f^\ast(\mathbf{x})\} = \text{sgn}\{\eta(\mathbf{x})\} = \begin{cases}
+1, \quad&\text{if}~~\eta(\mathbf{x})\geq 0,\\
-1, &\text{if}~~\eta(\mathbf{x})< 0.
\end{cases}
\]
Let $\mathcal{R}^\ast_{\text{upper}} = \mathcal{R}_{\text{upper}}(f^\ast)$. $f^\ast$ is the Bayes decision rule and $\mathcal{R}^\ast_{\text{upper}}$ is the Bayes risk such that \[
\mathcal{R}_{\text{upper}}(f) \geq \mathcal{R}^\ast_{\text{upper}},\forall ~f~\text{measurable}.
\]
\end{proposition}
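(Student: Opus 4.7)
The plan is to reduce minimization of $\mathcal{R}_{\text{upper}}(f)$ to a pointwise problem in $\text{sgn}\{f(\mathbf{x})\}$, and then verify by a three-way case analysis on the location of $[L(\mathbf{x}), U(\mathbf{x})]$ relative to $0$ that the choice $\text{sgn}\{\eta(\mathbf{x})\}$ always attains the sign-wise minimum of the inner supremum.

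First I would write
\[
h_f(\mathbf{x}) \;:=\; \sup_{c \in [L(\mathbf{x}), U(\mathbf{x})]} |c|\cdot \mathbbm{1}\bigl\{\text{sgn}\{f(\mathbf{x})\} \neq \text{sgn}(c)\bigr\},
\]
so that $\mathcal{R}_{\text{upper}}(f) = \mathbb{E}[h_f(\mathbf{X})]$. Because $h_f(\mathbf{x}) \geq 0$ depends on $f$ only through $\text{sgn}\{f(\mathbf{x})\}$, any rule $f^\ast$ that minimizes $h_f(\mathbf{x})$ pointwise also minimizes the expectation over measurable $f$, and measurability of $\text{sgn}\{\eta\}$ is inherited from the measurability of $L$ and $U$.

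Next, for fixed $\mathbf{x}$ I would distinguish three mutually exclusive and exhaustive cases. (i) If $L(\mathbf{x}) > 0$, every feasible $c$ has $\text{sgn}(c) = +1$, so $h_f(\mathbf{x}) = 0$ when $\text{sgn}\{f(\mathbf{x})\} = +1$ and $h_f(\mathbf{x}) = |U(\mathbf{x})|$ otherwise. (ii) If $U(\mathbf{x}) < 0$, symmetrically the optimal sign is $-1$ with loss $0$ and the other sign gives $|L(\mathbf{x})|$. (iii) If $0 \in [L(\mathbf{x}), U(\mathbf{x})]$, then $\text{sgn}\{f(\mathbf{x})\} = +1$ restricts the sup to $c \in [L(\mathbf{x}), 0]$ and yields $|L(\mathbf{x})|$, while $\text{sgn}\{f(\mathbf{x})\} = -1$ restricts it to $c \in (0, U(\mathbf{x})]$ and yields $|U(\mathbf{x})|$; the minimizer is $+1$ iff $|U(\mathbf{x})| \geq |L(\mathbf{x})|$. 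Comparing with the definition of $\eta$, we obtain $\eta = |U|$ in (i), $\eta = -|L|$ in (ii), and $\eta = |U|-|L|$ in (iii), so the convention $\text{sgn}(\eta)=+1$ iff $\eta \geq 0$ used in the proposition makes $\text{sgn}\{\eta(\mathbf{x})\}$ the pointwise minimizer in every case. Integrating the pointwise inequality $h_f(\mathbf{x}) \geq h_{f^\ast}(\mathbf{x})$ yields $\mathcal{R}_{\text{upper}}(f) \geq \mathcal{R}_{\text{upper}}^\ast$ for every measurable $f$, so $f^\ast$ is a Bayes rule.

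The main subtlety, rather than a genuine obstacle, is bookkeeping at knife-edge boundaries ($L=0$, $U=0$, or $|U|=|L|$ in case (iii)), where both sign choices give identical worst-case loss and the Bayes rule is non-unique. The tie-breaking convention $\text{sgn}(0) = +1$ used by $\eta$ in the proposition differs from the paper's global convention $\text{sgn}(0) = -1$ for $c$; both conventions need to be tracked but neither affects the Bayes risk, and it is this convention-consistency check that I would devote the most writing care to.
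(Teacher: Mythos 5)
Your proof is correct and follows essentially the same route as the paper's: both arguments reduce to computing the conditional worst-case loss in the three cases $L(\mathbf{x})>0$, $U(\mathbf{x})<0$, and $[L(\mathbf{x}),U(\mathbf{x})]\ni 0$ (where predicting $+1$ costs $|L|$ and predicting $-1$ costs $|U|$), and then observing that the sign of $\eta$ picks the cheaper option pointwise; the paper merely packages this as the algebraic identity $\mathcal{R}_{\text{upper}}(f)=C_0+\mathbb{E}\left[\eta(\mathbf{X})\cdot\mathbbm{1}\{\text{sgn}\{f(\mathbf{X})\}\neq 1\}\right]$ rather than as an explicit pointwise supremum calculation. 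Your attention to the tie-breaking convention at $\eta=0$ is a correct (and harmless) detail that the paper glosses over.
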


Proposition \ref{prop: excess risk} further derives the excess risk of a measurable decision rule $f$.
\begin{proposition}\rm
\label{prop: excess risk}
For any measurable decision rule $f$, its excess risk is
\begin{equation*}
    \mathcal{R}_{\text{upper}}(f)-\mathcal{R}^\ast_{\text{upper}}=\mathbb{E}\left[\mathbbm{1}\big\{\text{sgn}\{f(\mathbf{X})\}\neq \text{sgn}\{f^\ast(\mathbf{X})\}\big\}\cdot|\eta(\mathbf{X})|\right],
\end{equation*}
where $\eta(\mathbf{x})$ is defined in Proposition \ref{prop: Bayes decision rule}.
\end{proposition}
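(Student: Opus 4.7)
The plan is to verify the identity pointwise in $\mathbf{x}$ and then take expectations, following the three-case partition appearing in the definition of $\eta(\mathbf{x})$: (I) $L(\mathbf{x})>0$, (II) $U(\mathbf{x})<0$, and (III) $L(\mathbf{x})\leq 0\leq U(\mathbf{x})$. For any measurable $f$, let
\[
h(f;\mathbf{x}) \;:=\; \sup_{C'\in[L(\mathbf{x}),U(\mathbf{x})]} |C'|\cdot \mathbbm{1}\big\{\text{sgn}\{f(\mathbf{x})\}\neq\text{sgn}\{C'\}\big\},
\]
so that $\mathcal{R}_{\text{upper}}(f)=\mathbb{E}[h(f;\mathbf{X})]$. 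In Case (I) every $C'\in[L(\mathbf{x}),U(\mathbf{x})]$ is positive, hence the indicator fires iff $\text{sgn}\{f(\mathbf{x})\}=-1$ and the sup is attained at $C'=U(\mathbf{x})$, giving $h(f;\mathbf{x})=|U(\mathbf{x})|\cdot\mathbbm{1}\{\text{sgn}\{f(\mathbf{x})\}=-1\}$; Case (II) is symmetric, yielding $|L(\mathbf{x})|\cdot\mathbbm{1}\{\text{sgn}\{f(\mathbf{x})\}=+1\}$. In Case (III), forcing sign-disagreement means picking $C'$ on the opposite side of zero from $f$, so
\[
h(f;\mathbf{x})=|L(\mathbf{x})|\cdot\mathbbm{1}\{\text{sgn}\{f(\mathbf{x})\}=+1\}+|U(\mathbf{x})|\cdot\mathbbm{1}\{\text{sgn}\{f(\mathbf{x})\}=-1\}.
\]

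Next I would substitute $f=f^\ast$ using Proposition \ref{prop: Bayes decision rule}. In Cases (I) and (II), $\text{sgn}\{f^\ast(\mathbf{x})\}$ aligns with the common sign of the interval, so $h(f^\ast;\mathbf{x})=0$. In Case (III), the definition of $\eta$ makes $\text{sgn}\{f^\ast(\mathbf{x})\}=+1$ when $|U(\mathbf{x})|\geq|L(\mathbf{x})|$ and $-1$ otherwise, yielding $h(f^\ast;\mathbf{x})=\min\{|L(\mathbf{x})|,|U(\mathbf{x})|\}$. Noting that $|\eta(\mathbf{x})|$ equals $|U(\mathbf{x})|$ in Case (I), $|L(\mathbf{x})|$ in Case (II), and $\big||U(\mathbf{x})|-|L(\mathbf{x})|\big|$ in Case (III), the pointwise excess $h(f;\mathbf{x})-h(f^\ast;\mathbf{x})$ equals $|\eta(\mathbf{x})|\cdot\mathbbm{1}\{\text{sgn}\{f(\mathbf{x})\}\neq\text{sgn}\{f^\ast(\mathbf{x})\}\}$ in Cases (I) and (II) directly, since the indicator in $h(f;\mathbf{x})$ coincides with the disagreement indicator. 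Case (III) splits further on whether $|U|\geq|L|$ or $|U|<|L|$: when $f$ agrees with $f^\ast$, both $h(f;\mathbf{x})$ and $h(f^\ast;\mathbf{x})$ reduce to $\min\{|L|,|U|\}$, while when they disagree, $h(f;\mathbf{x})=\max\{|L|,|U|\}$ and the excess is $\max-\min=\big||U|-|L|\big|=|\eta(\mathbf{x})|$. Taking expectations over $\mathbf{X}$ concludes the proof.

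The main obstacle is purely notational: Case (III) requires tracking which of $|L|$ and $|U|$ is picked up by $h(f;\mathbf{x})$ versus $h(f^\ast;\mathbf{x})$ under each of the four combinations of $\text{sgn}\{f\}\in\{\pm 1\}$ and the ordering of $|L|$ and $|U|$. Once one observes that $\text{sgn}\{f^\ast\}$ is by construction the sign that points to the endpoint of \emph{larger} absolute magnitude, however, the pointwise excess collapses to $(\max-\min)\cdot\mathbbm{1}\{\text{disagreement}\}$, and the rest of the argument is straightforward case bookkeeping with no analytical difficulty.
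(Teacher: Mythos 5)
Your argument is correct, and every case checks out (including the harmless boundary situations $L(\mathbf{x})=0$ or $U(\mathbf{x})=0$, where the relevant endpoint has zero magnitude). The route is organized differently from the paper's, though, so a comparison is worth recording. The paper does not recompute the conditional worst-case loss at the excess-risk stage: in proving Proposition~\ref{prop: Bayes decision rule} it has already collapsed the three-way case analysis into the affine representation $\mathcal{R}_{\text{upper}}(f)=C_0+\mathbb{E}\left[\eta(\mathbf{X})\cdot\mathbbm{1}\big\{\text{sgn}\{f(\mathbf{X})\}\neq 1\big\}\right]$ with $C_0$ independent of $f$, so the excess risk is obtained in one line as $\mathbb{E}\left[\big(\mathbbm{1}\{\text{sgn}\{f\}\neq 1\}-\mathbbm{1}\{\text{sgn}\{f^\ast\}\neq 1\}\big)\cdot\eta(\mathbf{X})\right]$, after which a short sign argument (the difference of indicators is nonzero only under disagreement, and then has the same sign as $\eta$ by the construction of $f^\ast$) turns the integrand into $\mathbbm{1}\{\text{sgn}\{f\}\neq\text{sgn}\{f^\ast\}\}\cdot|\eta|$. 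You instead evaluate $h(f;\mathbf{x})$ and $h(f^\ast;\mathbf{x})$ from scratch in each interval configuration and subtract. Your version is self-contained and makes the identities $h(f^\ast;\mathbf{x})=\min\{|L|,|U|\}\cdot\mathbbm{1}\{[L,U]\ni 0\}$ and ``excess $=(\max-\min)\cdot\mathbbm{1}\{\text{disagreement}\}$'' explicit, which is instructive; the paper's version buys brevity by reusing the intermediate representation and avoids re-doing the case bookkeeping, at the cost of the reader having to refer back to the previous proof. The only implicit step you should make explicit is the appeal to Proposition~\ref{prop: Bayes decision rule} for $\mathcal{R}^\ast_{\text{upper}}=\mathcal{R}_{\text{upper}}(f^\ast)$, which licenses replacing $\mathcal{R}^\ast_{\text{upper}}$ by $\mathbb{E}[h(f^\ast;\mathbf{X})]$.
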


\subsection{Risk of IV-Optimal Rules}
An IV-optimal rule targets $\mathcal{R}_{\text{upper}}$. What can be said about the risk of an IV-optimal rule? Proposition \ref{prop: approx error} provides insight into this important question.
\begin{proposition}[Risk of IV-Optimal Rules]\label{prop: approx error}\rm
\leavevmode
\begin{enumerate}
    \item For any measurable $f$, we have
\begin{equation*}
    0\leq \mathcal{R}_{\text{upper}}(f)-\mathcal{R}(f)\leq \mathbb{E}\left[U(\mathbf{X})-L(\mathbf{X})\right].    
\end{equation*}
\item Let $f^{\ast}$ be the Bayes decision rule targeting $\mathcal{R}_{\text{upper}}$ in Proposition \ref{prop: Bayes decision rule} such that $\mathcal{R}^\ast_{\text{upper}} = \mathcal{R}_{\text{upper}}(f^\ast)$. The risk of $f^\ast$ satisfies
\begin{equation*}\small
\begin{split}
&\mathcal{R}(f^{\ast}) = \mathbb{E}\left[|C(\mathbf{X})|\cdot \mathbbm{1}\big\{\text{sgn}\{C(\mathbf{X})\} \neq \text{sgn}\{f^\ast(\mathbf{X})\}\big\}\right] \\
\leq &\mathbb{E}\left[\underbrace{\mathbbm{1}\left\{L(\mathbf{X})<0<U(\mathbf{X})\right\}}_{\text{\rom{1}}}\cdot\underbrace{\left\{U(\mathbf{X})-L(\mathbf{X})\right\}}_{\text{\rom{2}}}\cdot
\underbrace{\left\{\frac{1-\rho(\mathbf{X};U, L)}{2}\right\}}_{\text{\rom{3}}}\cdot\underbrace{\mathbbm{1}\left\{\rho^{c}(\textbf{X};U,L,C)>\rho(\textbf{X};U,L)\right\}}_{\text{\rom{4}}}\right],
\end{split}
\end{equation*}
where $\rho(\mathbf{x};U,L)=\frac{|U(\mathbf{x})+L(\mathbf{x})|/2}{|U(\mathbf{x})-L(\mathbf{x})|/2}$ and $\rho^{c}(\mathbf{x};U,L,C)=\left|\frac{C(\mathbf{x})-(U(\mathbf{x})+L(\mathbf{x}))/2}{|U(\mathbf{x})-L(\mathbf{x})|/2}\right|$.
\end{enumerate}
\end{proposition}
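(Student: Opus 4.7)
The plan is to establish both parts by a pointwise analysis of the integrands, then take expectation. For Part 1, the lower bound $\mathcal{R}_{\text{upper}}(f) \geq \mathcal{R}(f)$ is immediate, because $C(\mathbf{X}) \in [L(\mathbf{X}), U(\mathbf{X})]$ implies that the inner supremum in $\mathcal{R}_{\text{upper}}$ dominates the integrand of $\mathcal{R}$ at the feasible choice $C' = C$. For the upper bound I would show the pointwise inequality
\[
\sup_{C' \in [L, U]} |C'|\,\mathbbm{1}\{\text{sgn}(f) \neq \text{sgn}(C')\} - |C|\,\mathbbm{1}\{\text{sgn}(f) \neq \text{sgn}(C)\} \leq U - L
\]
by splitting into the three cases $L > 0$, $U < 0$, and $L \leq 0 \leq U$, and further on the sign of $f$. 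In each subcase one computes the supremum explicitly (it equals $U$ or $|L|$ depending on which side of $0$ the "wrong-sign" region of $[L,U]$ lies on) and then compares to $|C|$ using the elementary estimates $U - C \leq U - L$, $C - L \leq U - L$, or $|U| + |L| = U - L$ when the interval straddles zero. Taking expectation yields Part 1.

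For Part 2, the strategy is to use Proposition 3 to describe $f^*$ explicitly, localize the error event, and then re-express the loss via the $\rho$ and $\rho^c$ parametrization. On $\{L > 0\}$, $\eta = |U| > 0$ forces $f^* = +1$ and $C > 0$, so no error occurs; the event $\{U < 0\}$ is symmetric. This at once produces the indicator \rom{1}. On $\{L \leq 0 \leq U\}$, $\eta = |U| - |L| = U + L$, so $\text{sgn}\{f^*\} = \text{sgn}(U + L)$. Introducing $m = (U + L)/2$ and $r = (U - L)/2$, one has $\rho = |m|/r$ and $\rho^{c} = |C - m|/r$, and a misclassification by $f^*$ is equivalent to $\text{sgn}(C) \neq \text{sgn}(m)$.

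I would then split into the three subcases $m > 0$, $m < 0$, and $m = 0$. In the subcase $m > 0$, an error forces $C \leq 0$, so $|C| \leq -L$, and the identity $-L = r - m = r(1 - \rho)$ matches factors \rom{2} and \rom{3}. For such $C$, $|C - m| = m - C \geq m$, with strict inequality whenever $|C| > 0$, hence $\rho^{c} > \rho$ on the relevant error subevent, producing factor \rom{4}. The subcase $m < 0$ is symmetric and gives $|C| \leq U = r + m = r(1 - \rho)$. The boundary subcase $m = 0$ has $\rho = 0$, so the bound becomes $|C| \leq r$, which holds trivially since $C \in [-r, r]$. Combining the three subcases gives the desired pointwise inequality on $\{L < 0 < U\}$ (the slices $\{L = 0\}$ and $\{U = 0\}$ contribute no loss, so they can be safely absorbed into the strict indicator), and taking expectation completes the proof. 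The main obstacle is bookkeeping: one must carefully track the signs of $L, U, m, C$ to collapse the various absolute values into the clean $r(1-\rho)$ form, and verify in every subcase that the event inequality $\mathbbm{1}\{\text{sgn}(C) \neq \text{sgn}(m)\} \leq \mathbbm{1}(\rho^{c} > \rho)$ holds, since the bound is not always tight.
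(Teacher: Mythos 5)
Your proposal is correct and follows essentially the same route as the paper's proof: both parts are established by the same pointwise case analysis (interval avoiding versus covering zero, then the sign of the midpoint $(L+U)/2$), using the identity $\min(|U|,|L|) = \tfrac{U-L}{2}\{1-\rho(\mathbf{x};U,L)\}$ and the observation that a misclassification by $f^\ast$ forces $|C(\mathbf{x})|\le\min(|U|,|L|)$ and $\rho^{c}>\rho$, before taking expectations (your treatment of the boundary slices $m=0$, $L=0$, $U=0$ is in fact slightly more careful than the paper's). The only discrepancy is cosmetic: the explicit form of $f^\ast$ you invoke comes from Proposition 2 (the Bayes decision rule), not Proposition 3.
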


The first part of Proposition \ref{prop: approx error} states that for any decision rule $f$, $\mathcal{R}_{\text{upper}}(f)$ is no larger than the risk $\mathcal{R}(f)$ by a margin of $\mathbb{E}[U(\mathbf{X}) - L(\mathbf{X})]$. From this result, it is transparent that as $U(\mathbf{x})$ converges uniformly to $L(\mathbf{x})$, the gap between $\mathcal{R}_{\text{upper}}(f)$ and $\mathcal{R}(f)$ goes to $0$ uniformly in $f$.

The second part of the proposition further states that if $f^\ast$ is a Bayes decision rule for $\mathcal{R}_{\text{upper}}$ that attains the Bayes risk $\mathcal{R}^\ast_{\text{upper}}$, its generalization error $\mathcal{R}(f^\ast)$ is upper bounded by the expectation of the product of four terms, each of which bears its own meaning. Fix $\mathbf{x} \in \mathcal{X}$ and the partial identification interval $[L(\mathbf{x}), U(\mathbf{x})]$. The first term \rom{1} $= \mathbbm{1}\{L(\mathbf{x}) < 0 < U(\mathbf{x})\}$ measures if the interval $[L(\mathbf{x}), U(\mathbf{x})]$ covers $0$. If $[L(\mathbf{x}), U(\mathbf{x})]$ does not cover $0$, such an $\mathbf{x}$ would not contribute to the risk of $f^\ast$. The second term \rom{2} $= U(\mathbf{x}) - L(\mathbf{x})$ measures the length of the interval. Not surprisingly, if the interval $[L(\mathbf{x}), U(\mathbf{x})]$ covers $0$, the narrower it is, the less it would contribute to the risk of $f^\ast$. The third term \rom{3} $= \{1 - \rho(\mathbf{x}; U, L)\}/2$ measures how symmetric $[L(\mathbf{x}), U(\mathbf{x})]$ is about $0$. Suppose that $[L(\mathbf{x}), U(\mathbf{x})]$ is such that $L(\mathbf{x}) < 0 < U(\mathbf{x})$, i.e., $[L(\mathbf{x}), U(\mathbf{x})]$ covers $0$ (the first term \rom{1} is $1$) and has a nontrivial interval length (the second term \rom{2} is not $0$). If $[L(\mathbf{x}), U(\mathbf{x})]$ is symmetric about $0$, i.e., $L(\mathbf{x}) = -U(\mathbf{x})$, $\rho(\mathbf{x}; U, L)$ would attain its minimum at $0$; on the other hand, $\rho(\mathbf{x}; U, L)$ could be arbitrarily close to $1$ if either $L(\mathbf{x})$ is arbitrarily close to $0$ from the left, or $U(\mathbf{x})$ is arbitrarily close to $0$ from the right. In other words, $\rho(\mathbf{x}; U, L)$ (and hence the third term \rom{3}) measures the skewness of the interval $[L(\mathbf{x}), U(\mathbf{x})]$ with respect to $0$. The fourth term \rom{4} measures how symmetric $[L(\mathbf{x}), U(\mathbf{x})]$ is about $C(\mathbf{x})$, relative to how symmetric the interval is around $0$. Observe that $\rho^{c}(\mathbf{x};U,L,C)$ is analogous to $\rho(\mathbf{x}; U, L)$, except that $\rho^{c}(\mathbf{x};U,L,C)$ measures the skewness of the interval $[L(\mathbf{x}), U(\mathbf{x})]$ with respect to $C(\mathbf{x})$. 
$\rho^{c}(\mathbf{x};U,L,C)$ would attain its minimum at $0$ if the interval is symmetric about $C(\mathbf{x})$, and its maximum at $1$ if the interval barely covers $C(\mathbf{x})$, i.e., $L(\mathbf{x}) = C(\mathbf{x})$ or $U(\mathbf{x}) = C(\mathbf{x})$. Therefore, if the interval is more symmetric about $C(\mathbf{x})$ than it is about $0$, the fourth term \rom{4} is $0$; otherwise, it is $1$. To conclude, the risk of $f^\ast$ is small if with high probability, the partial identification interval does not cover $0$, is short in length, is asymmetric about $0$, and is more symmetric about $C(\mathbf{x})$ than about $0$. This upper bound on the risk of $f^\ast$ can also be understood as the maximum gap between the generalization performance of $f^\ast$ and an optimal ITR. Figure \ref{fig: bad/good performance} summarizes the above discussion with a graphical illustration. 
\vspace{-0.3 cm}
\begin{figure}[ht]
   \centering
     \subfloat{\includegraphics[width = 0.47\columnwidth]{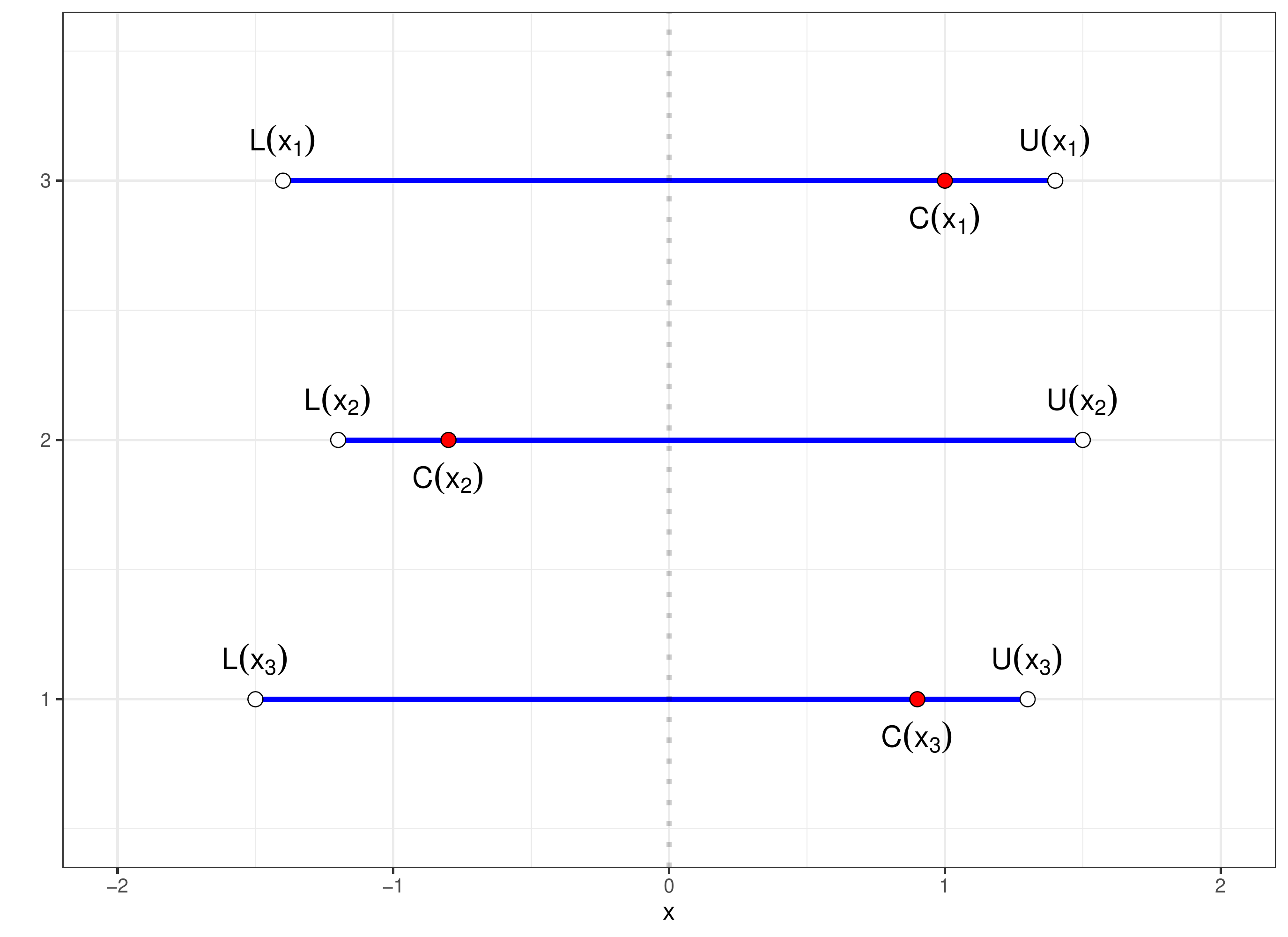}}
     \subfloat{\includegraphics[width = 0.47\columnwidth]{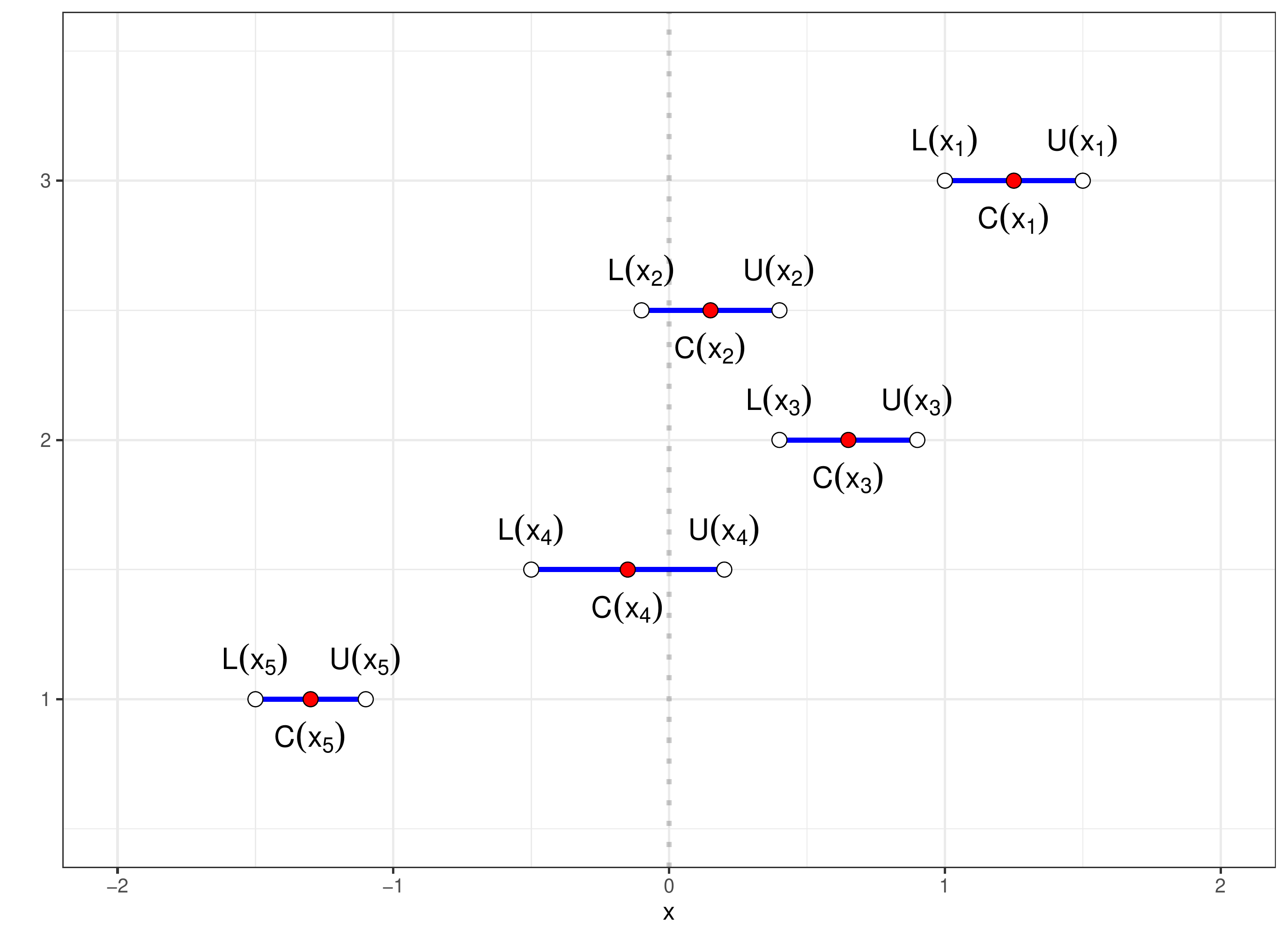}}
     \caption{An illustration of the second part of Proposition \ref{prop: approx error}. The left panel plots a scenario when IV-optimal Bayes rule $f^\ast_{\text{Bayes}}$ may have a large risk: partial identification intervals cover $0$, are excessively long, symmetric about $0$, and asymmetric about $C(\mathbf{x})$. The right panel plots a scenario with favorable generalization performance: many partial identification intervals avoid $0$, are short in length, asymmetric about $0$, and symmetric about $C(\mathbf{x})$.}
\label{fig: bad/good performance}
\end{figure}

\subsection{Risk Decomposition, Structural Risk Minimization, and Surrogate Loss}
\label{subsec: risk decomp and structural risk minimization}
The risk function $\mathcal{R}_{\text{upper}}(f)$ can be decomposed into two parts: $\mathcal{R}_{\text{label, upper}}(f) $, corresponding to the risk associated with the labeled part, and $\mathcal{R}_{\text{unlabel, upper}}(f) $, corresponding to that associated with the unlabeled part: 
\small\begin{equation}
\label{eqn: risk decomp}
\begin{split}
    &\mathcal{R}_{\text{upper}}(f) = \mathbb{E}\left[\max_{C'(\mathbf{X}) \in [L(\mathbf{X}), U(\mathbf{X})] }  |C'(\mathbf{X})|\cdot \mathbbm{1}\big\{\text{sgn}\{f(\mathbf{X})\} \neq \text{sgn}\{C'(\mathbf{X})\}\big\}\right]  \\
    =&~ \underbrace{\mathbb{E}\left[|U(\mathbf{X})| \cdot \mathbbm{1}\big\{\text{sgn}\{f(\mathbf{X})\} \neq 1\big\}\cdot\mathbbm{1}\{L(\mathbf{X}) > 0\} \right] + \mathbb{E}\left[|L(\mathbf{X})| \cdot \mathbbm{1}\big\{\text{sgn}\{f(\mathbf{X})\} \neq -1\big\}\cdot\mathbbm{1}\big\{U(\mathbf{X}) < 0\big\} \right] }_{\mathcal{R}_{\text{label, upper}}(f)} \\
    +&~\underbrace{\mathbb{E}\bigg[\max\left[|U(\mathbf{X})| \cdot \mathbbm{1}\big\{\text{sgn}\{f(\mathbf{X})\} \neq 1\big\},~ |L(\mathbf{X})| \cdot \mathbbm{1}\big\{\text{sgn}\{f(\mathbf{X})\} \neq -1\big\}\right]\cdot \mathbbm{1}\big\{[L(\mathbf{X}), U(\mathbf{X})] \ni 0\big\}\bigg]}_{\mathcal{R}_{\text{unlabel, upper}}(f)}\\
    =&~ \mathcal{R}_{\text{label, upper}}(f) + \mathcal{R}_{\text{unlabel, upper}}(f).
\end{split}
\end{equation}

\begin{remark}\rm
It may be tempting to replace $\max_{C'(\mathbf{X}) \in [L(\mathbf{X}), U(\mathbf{X})]}$ with $\min_{C'(\mathbf{X}) \in [L(\mathbf{X}), U(\mathbf{X})]}$ in Definition \ref{def: IV-optimal}; however, the definition would then become vacuous as is easily seen from \eqref{eqn: risk decomp}. Fix $\mathbf{x} \in \mathcal{X}$ such that $[L(\mathbf{x}), U(\mathbf{x})] \ni 0$. The risk conditional on $\mathbf{x}$ is always minimized by letting $\text{sgn}\{f(\mathbf{x})\} = \text{sgn}\{C'(\mathbf{x})\}$; however, since $[L(\mathbf{x}), U(\mathbf{x})] \ni 0$ and $C'(\mathbf{x}) \in [L(\mathbf{x}), U(\mathbf{x})]$, $\text{sgn}\{C'(\mathbf{x})\}$ can be either $+1$ or $-1$, suggesting that the risk conditional on $\mathbf{x}$ is always $0$ no matter what value $f(\mathbf{x})$ takes on. In other words, $\mathcal{R}_{\text{unlabel, upper}}(f) = 0 ~\forall f$, with $\max$ replaced with $\min$ in \eqref{eqn: risk decomp}, and unlabeled data become superfluous.
\end{remark}

Decomposition (\ref{eqn: risk decomp}) motivates estimating $f(\cdot) \in \mathcal{F}$ using the following structural risk minimization approach (\citealp{vapnik1992principles}):
\begin{equation}\small
\begin{split}
    \label{eqn: empirical version of risk function indicator}
    \widehat{f}(\cdot) = \underset{f \in \mathcal{F}}{\text{argmin}} &\sum_{i = 1}^l\widehat{U}(\mathbf{X}_i)\cdot
    \mathbbm{1}\big\{\text{sgn}\{f(\mathbf{X}_i)\} \neq 1\big\}\cdot \mathbbm{1}\big\{\widehat{L}(\mathbf{X}_i) > 0\big\}\\
    &+ \{-\widehat{L}(\mathbf{X}_i)\}\cdot \mathbbm{1}\big\{\text{sgn}\{f(\mathbf{X}_i)\} \neq -1\big\}\cdot \mathbbm{1}\big\{\widehat{U}(\mathbf{X}_i) < 0\big\}\\
    & + \sum_{i = l+1}^n \max\left[\widehat{U}(\mathbf{X}_i) \cdot \mathbbm{1}\big\{\text{sgn}\{f(\mathbf{X}_i)\} \neq 1\}, ~-\widehat{L}(\mathbf{X}_i) \cdot \mathbbm{1}\big\{\text{sgn}\{f(\mathbf{X}_i)\} \neq -1\big\}\right]\\
    & +\frac{n\lambda_n}{2} ||f||^{2},
    \end{split}
\end{equation}
where $[\widehat{L}(\mathbf{X}_i), \widehat{U}(\mathbf{X}_i)]$ is an estimated partial identification interval of $[L(\mathbf{X}_i), U(\mathbf{X}_i)]$, $\|\cdot\|$ denotes some norm of $f(\cdot)$, and $\lambda_n$ is a possibly data-dependent tuning parameter that penalizes the norm of $f(\cdot)$ to reduce overfitting. For instance, if we assume that $f(\cdot)$ resides in a \emph{reproducing kernel Hilbert space} (RKHS) $\mathcal{H}_{\mathcal{K}}$, then $\|\cdot\|$ corresponds to the norm associated with $\mathcal{H}_{\mathcal{K}}$. The complexity of $f(\cdot)$ is restricted by penalizing its norm.

It is well known in machine learning and optimization literature that directly minimizing the empirical risk as above is difficult due to the non-continuity and non-convexity of the indicator function, and it is customary to rewrite the loss function by replacing the \emph{0-1-based loss} with a convex upper bound. Table \ref{tbl: loss and surrogate loss} summarizes the original 0-1-based loss and our choice of surrogate loss corresponding to each $[L(\mathbf{x}), U(\mathbf{x})]$ configuration. Figure \ref{fig: surrogate loss and 0-1 loss} in Supplementary Material B.1 further plots the original loss and the corresponding surrogate loss in each case. Observe that the surrogate loss is indeed a continuous convex upper bound of the original discontinuous loss function in all cases. Moreover, it can be shown that our designed surrogate loss function is continuous in both $L$ and $U$ values.

\begin{table}[h!]
\centering
\caption{\small Original 0-1-based loss and the corresponding surrogate loss. }
\label{tbl: loss and surrogate loss}
 \resizebox{\columnwidth}{!}{
 \fbox{
 \begin{tabular}{cp{51mm}c} 
 \\ [-0.8em]
 $[L(\mathbf{x}), U(\mathbf{x})]$ & ~~~~~~Original Loss & Surrogate Loss  \\[0.5 em]
 \hline \\
 $[L(\mathbf{x}), U(\mathbf{x})] > 0$&$|U(\mathbf{x})|\cdot\mathbbm{1}\{\text{sgn}\{f(\mathbf{x})\} \neq 1\}$&$|U(\mathbf{x})|\cdot \{1 - f(\mathbf{x})\}^+$\\ \\
 $[L(\mathbf{x}), U(\mathbf{x})] < 0$&$|L(\mathbf{x})|\cdot\mathbbm{1}\{\text{sgn}\{f(\mathbf{x})\} \neq -1\}$&$|L(\mathbf{x})|\cdot \{1 + f(\mathbf{\mathbf{x}})\}^+$\\ \\
 $[L(\mathbf{x}), U(\mathbf{x})] \ni 0$, $|U(\mathbf{x})| \geq |L(\mathbf{x})|$ & $\max[U(\mathbf{x}) \cdot \mathbbm{1}\{\text{sgn}\{f(\mathbf{x})\} \neq 1\},\newline~ -L(\mathbf{x}) \cdot \mathbbm{1}\{\text{sgn}\{f(\mathbf{x})\} \neq -1\}]$&$|L(\mathbf{x})| + (|U(\mathbf{x})| - |L(\mathbf{x})|)\cdot\{1 - f(\mathbf{x})\}^+$ \\ \\
 $[L(\mathbf{x}), U(\mathbf{x})] \ni 0$, $|U(\mathbf{x})| < |L(\mathbf{x})|$ & $\max[U(\mathbf{x}) \cdot \mathbbm{1}\{\text{sgn}\{f(\mathbf{x})\} \neq 1\},\newline~ -L(\mathbf{x}) \cdot \mathbbm{1}\{\text{sgn}\{f(\mathbf{x})\} \neq -1\}]$&$|U(\mathbf{x})| + (|L(\mathbf{x})| - |U(\mathbf{x})|)\cdot\{1 + f(\mathbf{x})\}^+$ \\ \\[-0.5em]
 \end{tabular}}}
\end{table}

\begin{remark}\rm
When $[L(\mathbf{x}), U(\mathbf{x})] > 0$ or $[L(\mathbf{x}), U(\mathbf{x})] < 0$, the surrogate loss is a scaled hinge loss. When $[L(\mathbf{x}), U(\mathbf{x})] \ni 0$, the surrogate loss is a lifted and scaled hinge loss.
\end{remark}

Let $\phi(x) = (1 - x)^+$. Under the surrogate loss, the objective function (\ref{eqn: empirical version of risk function indicator}) becomes:
\begin{equation}
\begin{split}
    \label{eqn: empirical version of risk function}
    \widehat{f}(\cdot) = \underset{f \in \mathcal{F}}{\text{argmin}} \sum_{i = 1}^l &\left[\widehat{U}(\mathbf{X}_i)\cdot \phi\{f(\mathbf{X}_i)\}\cdot \mathbbm{1}\big\{\widehat{L}(\mathbf{X}_i) > 0\big\} +
    \{-\widehat{L}(\mathbf{X}_i)\}\cdot \phi\{-f(\mathbf{X}_i)\}\cdot \mathbbm{1}\big\{\widehat{U}(\mathbf{X}_i) < 0\big\}\right] \\
    + \sum_{i = l+1}^n &\bigg[\left[|\widehat{L}(\mathbf{X}_i)| + (|\widehat{U}(\mathbf{X}_i)| - |\widehat{L}(\mathbf{X}_i)|)\cdot\phi\{f(\mathbf{X}_i)\}\right]\cdot\mathbbm{1}\big\{|\widehat{U}(\mathbf{X}_i)| \geq |\widehat{L}(\mathbf{X}_i)|\big\}\\
    +&\left[|\widehat{U}(\mathbf{X}_i)| + (|\widehat{L}(\mathbf{X}_i)| - |\widehat{U}(\mathbf{X}_i)|)\cdot\phi\{-f(\mathbf{X}_i)\}\right]\cdot\mathbbm{1}\big\{|\widehat{L}(\mathbf{X}_i)| > |\widehat{U}(\mathbf{X}_i)|\big\}\bigg] +\frac{n\lambda_n}{2} ||f||^{2}.
    \end{split}
\end{equation}

Let $\mathcal{R}^{\text{h}}_{\text{upper}}(f)$ denote the risk associated with the surrogate loss, $f^{\ast}_{\text{h}}$ the Bayes decision rule that minimizes $\mathcal{R}^{\text{h}}_{\text{upper}}(f)$, and $\mathcal{R}^{\text{h}, \ast}_{\text{upper}}(f)$ the corresponding Bayes risk. Theorem \ref{thm: relation between surrogate and true loss} establishes a relationship between $\mathcal{R}_{\text{upper}}(f) - \mathcal{R}^\ast_{\text{upper}}$ and $\mathcal{R}^{\text{h}}_{\text{upper}}(f) - \mathcal{R}^{\text{h},\ast}_{\text{upper}}$, so that we can transfer assessments of statistical error in terms of the excess risk $\mathcal{R}^{\text{h}}_{\text{upper}}(f) - \mathcal{R}^{\text{h},\ast}_{\text{upper}}$ into assessments of error in terms of $\mathcal{R}_{\text{upper}}(f) - \mathcal{R}^{\ast}_{\text{upper}}$, the excess risk of genuine interest (\citealp{bartlett2006convexity}).

\begin{theorem}\rm
\label{thm: relation between surrogate and true loss}
For any measurable function $f$, we have
\begin{equation}
    \mathcal{R}_{\text{upper}}(f) - \mathcal{R}^{\ast}_{\text{upper}} \leq \mathcal{R}^{\text{h}}_{\text{upper}}(f) - \mathcal{R}^{\text{h},\ast}_{\text{upper}}.
\end{equation}
\end{theorem}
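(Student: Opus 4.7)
The plan is to establish the inequality pointwise in $\mathbf{x} \in \mathcal{X}$ and then integrate. Because both $\mathcal{R}_{\text{upper}}(f)$ and $\mathcal{R}^{\text{h}}_{\text{upper}}(f)$, as seen from the decomposition in \eqref{eqn: risk decomp} and Table \ref{tbl: loss and surrogate loss}, are expectations with respect to $\mathbf{X}$ of loss terms that depend on $f$ only through $f(\mathbf{x})$, the Bayes rules $f^\ast$ and $f^\ast_{\text{h}}$ can be taken to be the pointwise minimizers. Hence if one shows, for every fixed $\mathbf{x}$,
\begin{equation*}
\ell(f(\mathbf{x}), \mathbf{x}) - \inf_{t \in \mathbb{R}}\ell(t, \mathbf{x}) \;\leq\; \ell^{\text{h}}(f(\mathbf{x}), \mathbf{x}) - \inf_{t \in \mathbb{R}}\ell^{\text{h}}(t, \mathbf{x}),
\end{equation*}
where $\ell$ and $\ell^{\text{h}}$ denote the pointwise $0$-$1$-based and surrogate loss from Table \ref{tbl: loss and surrogate loss}, then taking expectations over $\mathbf{X}$ yields the theorem.

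The next step is to verify this pointwise inequality in each of the four configurations of $[L(\mathbf{x}), U(\mathbf{x})]$ listed in Table \ref{tbl: loss and surrogate loss}. Consider the representative case $L(\mathbf{x}), U(\mathbf{x}) > 0$. The pointwise $0$-$1$ loss is $|U(\mathbf{x})|\cdot\mathbbm{1}\{f(\mathbf{x}) \leq 0\}$, whose infimum over $t$ is $0$, so the pointwise excess equals $|U(\mathbf{x})|\cdot\mathbbm{1}\{f(\mathbf{x})\leq 0\}$. The pointwise surrogate loss is $|U(\mathbf{x})|\cdot\{1-f(\mathbf{x})\}^{+}$, whose infimum is also $0$, so the pointwise excess surrogate is $|U(\mathbf{x})|\cdot\{1-f(\mathbf{x})\}^{+}$. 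The elementary inequality $\mathbbm{1}\{t\leq 0\} \leq (1-t)^{+}$ then finishes this case. The configuration $L(\mathbf{x}), U(\mathbf{x}) < 0$ is symmetric, using $\mathbbm{1}\{t > 0\} \leq (1+t)^{+}$.

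For the configuration $L(\mathbf{x}) \leq 0 \leq U(\mathbf{x})$ with $|U(\mathbf{x})| \geq |L(\mathbf{x})|$, the pointwise $0$-$1$ loss evaluates to $|L(\mathbf{x})|$ when $f(\mathbf{x}) > 0$ and $|U(\mathbf{x})|$ when $f(\mathbf{x}) \leq 0$, so its infimum is $|L(\mathbf{x})|$ and the pointwise excess is $(|U(\mathbf{x})|-|L(\mathbf{x})|)\cdot\mathbbm{1}\{f(\mathbf{x})\leq 0\}$. The surrogate loss $|L(\mathbf{x})| + (|U(\mathbf{x})|-|L(\mathbf{x})|)\cdot\{1-f(\mathbf{x})\}^{+}$ likewise attains infimum $|L(\mathbf{x})|$, and the excess surrogate is $(|U(\mathbf{x})|-|L(\mathbf{x})|)\cdot\{1-f(\mathbf{x})\}^{+}$. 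Since $|U(\mathbf{x})|-|L(\mathbf{x})|\geq 0$ in this case, the same elementary inequality $\mathbbm{1}\{t\leq 0\} \leq (1-t)^{+}$ again suffices. The last configuration is handled symmetrically.

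The main obstacle, such as it is, is simply keeping track of the four cases cleanly and verifying that in each case the Bayes constants on both sides cancel so that the excess comparison reduces to the one-dimensional inequality $\mathbbm{1}\{t\leq 0\} \leq (1-t)^{+}$ (or its mirror). Once the pointwise bound holds in all four cases, taking expectation with respect to $\mathbf{X}$ and invoking $\mathcal{R}^{\ast}_{\text{upper}} = \mathbb{E}[\inf_{t}\ell(t,\mathbf{X})]$ and $\mathcal{R}^{\text{h},\ast}_{\text{upper}} = \mathbb{E}[\inf_{t}\ell^{\text{h}}(t,\mathbf{X})]$ delivers the theorem.
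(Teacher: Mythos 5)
Your proof is correct, and it reaches the conclusion by a more elementary route than the paper's. The paper follows the classification-calibration template of \citet{bartlett2006convexity}: it first invokes the excess-risk representation $\mathcal{R}_{\text{upper}}(f)-\mathcal{R}^{\ast}_{\text{upper}}=\mathbb{E}\left[\mathbbm{1}\big\{\text{sgn}\{f(\mathbf{X})\}\neq\text{sgn}\{f^\ast(\mathbf{X})\}\big\}\cdot|\eta(\mathbf{X})|\right]$ from Proposition \ref{prop: excess risk}, then introduces the optimal conditional $\phi$-risk $H(L,U)=\inf_{\alpha}C_{L,U}(\alpha)$ and its constrained version $H^{-}(L,U)=\inf\{C_{L,U}(\alpha):\alpha\cdot\eta\le 0\}$, verifies the identity $H^{-}(L,U)-H(L,U)=|\eta|$, and concludes via $\mathbbm{1}\{\text{sgn}\{f\}\neq\text{sgn}\{f^\ast\}\}\cdot(H^{-}-H)\le C_{L,U}(f(\mathbf{X}))-H$. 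You instead compare the pointwise excess $0$-$1$ loss to the pointwise excess surrogate loss head-on in each of the four interval configurations, reducing everything to the scalar inequalities $\mathbbm{1}\{t\le 0\}\le(1-t)^{+}$ and $\mathbbm{1}\{t>0\}\le(1+t)^{+}$; this bypasses both Proposition \ref{prop: excess risk} and the $H^{-}$ construction. The computational content is the same — in both arguments one must identify the conditional minimizers in the four cases and observe that the Bayes-level constants ($0$, $0$, $|L(\mathbf{x})|$, $|U(\mathbf{x})|$) cancel on the two sides — but your version is shorter and self-contained, at the cost of not exhibiting the calibration structure that connects the result to the general theory of surrogate losses. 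One small point to make explicit in a final write-up: the pointwise infima are attained by measurable rules (e.g.\ $\text{sgn}\{\eta(\cdot)\}$ for the $0$-$1$ loss and a piecewise-constant choice for the surrogate), which is what justifies $\mathcal{R}^{\ast}_{\text{upper}}=\mathbb{E}[\inf_{t}\ell(t,\mathbf{X})]$ and $\mathcal{R}^{\text{h},\ast}_{\text{upper}}=\mathbb{E}[\inf_{t}\ell^{\text{h}}(t,\mathbf{X})]$; you assert this but it deserves a sentence.
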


Theorem \ref{thm: relation between surrogate and true loss} reassures us that using the surrogate loss displayed in Table \ref{tbl: loss and surrogate loss} does not hinder the search for a function that achieves the optimal Bayes risk $\mathcal{R}^\ast_{\text{upper}}$, and it is appropriate to employ surrogate-loss-based computationally efficient algorithms. In Supplementary Materials B.2 and B.3, we derive linear/nonlinear $\widehat{f}$ when $f(\cdot)$ is in a reproducing kernel Hilbert space and show that the associated optimization problem can be transformed into a particular instance of weighted SVM (\citealp{vapnik2013nature}) and readily solved using standard solvers.

\subsection{IV-PILE Algorithm}
\label{subsec: IV pile alg}

Before delving into theoretical properties, we summarize the IV-PILE algorithm in Algorithm \ref{alg}.

\begin{algorithm} 
\SetAlgoLined
\caption{Pseudo Algorithm for IV-PILE} \label{alg}
\vspace*{0.12 cm}
\KwIn{$\{(\mathbf{X}_i, Z_i, A_i, Y_i), i=1,\cdots,n\}$ \text{and} IV identification assumption set $\mathcal{C}$;}
\vspace*{0.12 cm}
\ShowLn Obtain appropriate estimates of $L(\mathbf{X}_i)$ and $U(\mathbf{X}_i)$, denoted as $\widehat{L}_i$ and $\widehat{U}_i$, under IV identification assumption set $\mathcal{C}$. Parametric models or more flexible and expressive estimators like random forests can be used.\\
\vspace*{0.12 cm}
\ShowLn Compute the label $\widehat{e}_i \in \{-1, +1\}$ associated with each observation:
\[
\widehat{e}_i = \mathbbm{1}\big\{\widehat{U}_i < 0\big\} - \mathbbm{1}\big\{\widehat{L}_i > 0\big\} - \text{sgn}\{|\widehat{U}_i| - |\widehat{L}_i|\}\cdot \mathbbm{1}\big\{[\widehat{L}_i, \widehat{U}_i] \ni 0\big\},
\]
for $i = 1, \cdots, n$;\\
\vspace*{0.12 cm}
\ShowLn Compute the weight $\widehat{w}_i$ associated with each observation:
\begin{equation*}
    \begin{split}
        \widehat{w}_i = |\widehat{U}_i|\cdot\mathbbm{1}\big\{\widehat{L}_i > 0\big\} + |\widehat{L}_i|\cdot\mathbbm{1}\big\{\widehat{U}_i < 0 \big\}
        + \big||\widehat{U}_i| - |\widehat{L}_i |\big|\cdot\mathbbm{1}\big\{[\widehat{L}_i, \widehat{U}_i] \ni 0\big\},
    \end{split}
\end{equation*}
for $i = 1, \cdots, n$;\\
\vspace*{0.12 cm}
\ShowLn Solve a weighted SVM problem with labels and weights computed in Step 2 and 3 using a Gaussian kernel. Let $\widehat{f}$ be the solution;\\
\vspace*{0.12 cm}
\ShowLn Return $\widehat{f}$.
\end{algorithm}

\section{Theoretical Results}
\label{sec: theory}
\subsection{IV-PILE Estimator via Sample Splitting}
To facilitate theoretical analysis of the IV-PILE estimator, we study an alternative sample-splitting estimator that is very close to the IV-PILE estimator. Let $I_{1},I_{2}$ denote an equal-size mutually exclusive random partition of indices $\{1,\dots,n\}$ such that $|I_{1}|\asymp |I_{2}|\asymp n/2$. Samples with indices in $I_{1}$ are used to construct estimates $\widehat{L}(\mathbf{x})$ and $\widehat{U}(\mathbf{x})$ for functions $L(\mathbf{x})$ and $U(\mathbf{x})$. We then plug $\widehat{L}$ and $\widehat{U}$ into expressions for the weight $w$ and label $e$ (see Algorithm \ref{alg} and Supplementary Material B.2) to construct $\widehat{w}(\cdot)$ and $\widehat{e}(\cdot)$, and use the other half of samples to obtain the following IV-PILE estimator:
\begin{equation*}
\begin{split}
  \widehat{f}_{n}^{\lambda_n} =  \underset{f\in \mathcal{F}}{\text{argmin}}~&\frac{1}{|I_2|}\sum_{i\in I_2} \widehat{w}(\mathbf{X}_i) \cdot\{1 +\widehat{e}(\mathbf{X}_i)\cdot f( \mathbf{X}_i)\}^+ + \frac{\lambda_{n}}{2}||f||^{2}. 
\end{split}    
\end{equation*}
Sample splitting here helps remove the dependence between estimating $\widehat{w}$ and $\widehat{e}$ and constructing the IV-PILE estimator, which in turn helps weaken the assumptions needed to establish convergence rate results by getting rid of the entropy conditions on $L(\mathbf{x})$ and $U(\mathbf{x})$'s function classes. Similar sample splitting technique can be also found in \citet{bickel1982adaptive}, \citet{zheng2011cross}, \citet{chernozhukov2016double},
\citet{robins2017minimax}, and \citet{zhao2019efficient}, among many others.

\subsection{Theoretical Properties}
Let $\inf_{f\in \mathcal{F}}\mathcal{R}^{\text{h}}_{\text{upper}}(f)$ denote the minimal risk among rules in $\mathcal{F}$ and define the approximation error incurred by optimizing over $\mathcal{F}$ as 
$\mathcal{A}(\mathcal{F})= \inf_{f\in \mathcal{F}}\mathcal{R}^{\text{h}}_{\text{upper}}(f)-\mathcal{R}^{\text{h},\ast}_{\text{upper}}(f)$.
In this section we establish the convergence rate properties of $\mathcal{R}_{\text{upper}}(\widehat{f}_{n}^{\lambda_n}) - \inf_{f\in \mathcal{F}}\mathcal{R}^{\text{h}}_{\text{upper}}(f)$. We consider the following assumptions.
\begin{assumption}[Existence of a Finite Minimizer]\rm
\label{assump: existence of fnite minimizer}
\[\exists f_\mathcal{F}^\ast\in \mathcal{F} ~\text{s.t.}~ \mathcal{R}^{\text{h}}_{\text{upper}}(f_\mathcal{F}^\ast) = \inf_{f\in \mathcal{F}}\mathcal{R}^{\text{h}}_{\text{upper}}(f).
\]
\end{assumption}

\begin{assumption}[Boundedness Conditions I]\rm
\label{assump: boundedness}
\[
\exists M_1 >0\ s.t.\ |L(\mathbf{X})|,|U(\mathbf{X})|,|Y|\leq M_1~\text{with probability}~1,
\]
\[
\exists M_2 ~\text{s.t.}~\forall i, |\mathbf{X}[i]|\leq M_2 ~\text{with probability}~1.
\]
\end{assumption}

\begin{assumption}[Boundedness Conditions II]\rm
\label{assump: boundedness of L U estimates}
Assume that the estimates of $L$ and $U$, i.e., $\widehat{L}$ and $\widehat{U}$, satisfy
\[\exists M_3 ~\text{s.t.}~ |\widehat{L}(\mathbf{X})|,|\widehat{U}(\mathbf{X})|\leq M_3 ~\text{with probability}~1.
\]
\end{assumption}
For any $\epsilon >0$, let $N\{\epsilon,\mathcal{F},L_{2}(P)\}$ denote the covering number of $\mathcal{F}$, i.e., $N\{\epsilon,\mathcal{F},L_{2}(P)\}$ is the minimal number of closed $L_{2}(P)$-balls of radius $\epsilon$ that is required to cover $\mathcal{F}$.  We consider the following assumption on entropy condition:
\begin{assumption}[Entropy Condition]\rm
\label{assump: entropy}
There exists a subclass $\mathcal{F}_0$ of $\mathcal{F}$ and  constants $C_0>0$, $0<v<2$ such that $\mathcal{F}=\{a\cdot f:f\in \mathcal{F}_0,a\in\R\}$, the minimizer $\widehat{f}_{n}^{\lambda_n}$ is in $(C_0/\lambda_n)\cdot\mathcal{F}_0\coloneqq\{a\cdot f:f\in \mathcal{F}_0, a\in \R, |a|\leq C_0\cdot \lambda^{-1}_n\}$ with probability 1, and the $\sup_{P}\log N\{\epsilon,\mathcal{F}_0,L_2(P)\}\leq O(\epsilon^{-v})$ for all $0<\epsilon<1$, where the supremum is taken over all probability measures $P$.
\end{assumption}
Assumption \ref{assump: existence of fnite minimizer} says that there exists an $f$ with finite norm that minimizes the risk over $\mathcal{F}$. This is a standard assumption made in the statistical learning literature. Assumption \ref{assump: boundedness} requires that $L$, $U$, and $Y$ are bounded, and coordinates of observed covariates $\mathbf{X}$ are bounded with probability $1$. When $Y$ is a binary outcome, e.g., mortality as in the NICU study, $L$ and $U$ obtained via the Balke-Pearl bound and the Siddique bound are trivially bounded. When $Y$ is continuous, the partial identification literature typically requires $Y$ to be bounded (\citealp{swanson2018partial}), and the partial identification interval endpoints $L$ and $U$ are therefore also bounded. Boundedness of $Y$ is reasonable for many health outcomes, e.g., length of stay in hospital, the cholesterol level, etc. Assumption \ref{assump: boundedness of L U estimates} says that estimates $\widehat{L}$ and $\widehat{U}$ are bounded when $L$ and $U$ are bounded. This holds for any reasonable estimates of $L$ and $U$. Finally, the assumption on entropy condition is satisfied for many popular RKHS, e.g., the RKHS induced by the Gaussian kernel $k(x, x') := \exp(-\|x - x'\|^2/\sigma^2)$, where we take $\mathcal{F}_0$ to be the convex hull of $\{f_{x'}:f_{x'}(x)=k(x,x'),x'\in \mathcal{X}\}$, the minimizer $\widehat{f}_{n}^{\lambda_n}$ takes an explicit form that satisfies the specified condition (see Supplementary Material B), and the covering number condition follows from \citet[Corollary 9.5]{kosorok2007introduction}.

\begin{assumption}[Rate of Convergence of $L$ and $U$]\rm
\label{assump: rate of convergence of L and U}
Assume that $\widehat{L}$ and $\widehat{U}$ converge to  $L$ and $U$ at the following rates:
$$\mathbb{E}\left[\big|\widehat{L}(\mathbf{X}) - L(\mathbf{X})\big|\right] = O(n^{-\alpha}),$$
$$\mathbb{E}\left[\big|\widehat{U}(\mathbf{X})-U(\mathbf{X})\big|\right] = O(n^{-\beta}).$$
\end{assumption}

Consider a binary outcome and the associated Balke-Pearl bound. To obtain estimates $\hat{L}$ and $\hat{U}$ that satisfy Assumption \ref{assump: rate of convergence of L and U}, we first estimate $\{{p}_{y, a \mid z, \mathbf{X}},~y=\pm 1, a=\pm 1, z=\pm 1\}$ (let the estimates be $\{\hat{p}_{y, a \mid z, \mathbf{X}},~y=\pm 1, a=\pm 1, z=\pm 1\}$) and then plug these estimates into \eqref{eqn: bp lower bound} and \eqref{eqn: bp upper bound} to obtain $\hat{U}$ and $\hat{L}$. In Supplementary Material D.2, we prove that if $K$ functions are all $n^{-\theta}$ estimable then their linear combinations, maximum, and minimum are also $n^{-\theta}$ estimable. $L(\mathbf{X})$ and $U(\mathbf{X})$ in the Balke-Pearl bound are both maximum/minimum of a series of linear combinations of $p_{y, a \mid z, \mathbf{X}}$; hence, if we have the following condition hold
\begin{condition}[Convergence of Conditional Probabilities]
\label{cond: converg of prob}
\[
\exists\theta, \mathbb{E}\left[\left|\hat{p}_{y, a \mid z, \mathbf{X}}-p_{y, a \mid z, \mathbf{X}}\right|\right] = O(n^{-\theta}), ~y=\pm 1, a=\pm 1, z=\pm 1,
\]
\end{condition}
then we can deduce that Assumption \ref{assump: rate of convergence of L and U} holds for $\alpha = \beta = \theta$. Condition \ref{cond: converg of prob} holds in many scenarios. For instance, if we fit parametric models to estimate $p_{y, a \mid z, \mathbf{X}},~y=\pm 1, a=\pm 1, z=\pm 1$, and models are correctly specified, then this condition holds for $\theta=\frac{1}{2}$. We can also use flexible and expressive nonparametric regression methods to estimate functions $p_{y, a \mid z, \mathbf{X}}$. Assuming that functions $\{p_{y, a \mid z, \mathbf{X}},~y=\pm 1, a=\pm 1, z=\pm 1\}$ are in a H\"{o}lder ball with smoothness parameter $\alpha$ and a constant radius, then Condition \ref{cond: converg of prob} holds for $\theta={-\frac{\alpha}{d+2\alpha}}$, where $d$ is the dimension of $\mathbf{X}$, when $\{p_{y, a \mid z, \mathbf{X}},~y=\pm 1, a=\pm 1, z=\pm 1\}$ are estimated via wavelets (\citealp{donoho1998minimax}; \citealp{cai2012minimax}) or a variant of the random forests algorithm known as Mondrian forests (\citealp{mourtada2018minimax}). Similar results hold when we use Siddique bounds for a binary outcome and Manski-Pepper bounds for a continuous but bounded outcome; see Supplementary Material D.2 for details.
  
 \begin{assumption}[Norm Condition]\label{assump: norm}\rm
 There exists a constant $M_4$ s.t. for any $f\in \mathcal{F}$:
 \begin{align*}
     ||f||\geq M_4||f||_{\infty}.
 \end{align*}
\end{assumption}

 Assumption \ref{assump: norm} guarantees the sup-norm of minimizer $\widehat{f}_{n}^{\lambda_n}$ can be controlled by penalizing $f$.

Under Assumption \ref{assump: existence of fnite minimizer}-\ref{assump: norm}, it can be shown that  $ ||\widehat{f}_n ^{\lambda_n}||_{\infty} \leq \frac{2\sqrt{M_1 \vee M_3}}{M_4\sqrt{\lambda_n}}$.
 Define $B_{n}$ to be the set of functions $f$ s.t. $f\in \mathcal{F}$ and $||f||_{\infty}\leq  \frac{2\sqrt{M_3 \vee M_1}}{M_4\sqrt{\lambda_n}}$, and $B_n^*$ to be the intersection of $B_n$ and $(C_0/\lambda_n)\cdot\mathcal{F}_0$.
 Lemma \ref{lemma: difference caused by hat w and w} and \ref{lemma: lemma 2} below develop properties of functions in $B_n$.
From now on, we consider $\lambda_n = o(1)$. We use $\mathbb{E}_{\textbf{Z}}$ to denote taking expectation with respect to the random variable $\textbf{Z}$, and $\mathbb{E}$ with respect to all random variables. 
\begin{lemma}\rm
\label{lemma: difference caused by hat w and w}
Let
\[
    {w}_i = |{U}_i|\cdot\mathbbm{1}\big\{{L}_i > 0\big\} + |{L}_i|\cdot\mathbbm{1}\big\{{U}_i<0 \big\} + \big||{U}_i| - |{L}_i |\big|\cdot\mathbbm{1}\big\{[{L}_i, {U}_i] \ni 0\big\},
\]
\[
    {e}_i = \mathbbm{1}\big\{U_i < 0\big\}- \mathbbm{1}\big\{L_i > 0\big\} - \text{sgn}\big\{|{U}_i | - |{L}_i|\big\}\cdot \mathbbm{1}\big\{[{L}_i, {U}_i] \ni 0\big\},
\]
and
\[
    l(x;w,e,f)=w(x)\{1+e(x)f(x)\}^+,
\]
where $L_i = L(\mathbf{X}_i)$, and $U_i = U(\mathbf{X}_i)$, and $\widehat{w}$ and $\widehat{e}$ be defined in (\ref{eqn: w_i def}) and (\ref{eqn: e_i def}). We have
\begin{equation}    
\mathbb{E}_{\textbf{X}_{[I_1]}}\sup_{f\in B_n }\left|\mathbb{E}_{\textbf{X}_{[I_2]}}\left\{\frac{1}{|I_2 |}\sum_{i\in I_2} l(\textbf{X}_i ; w,e,f) -\sum_{i\in I_2} \frac{1}{|I_2 |}l(\textbf{X}_i ; \widehat{w},\widehat{e},f)\right\}\right|\leq O\left(n^{-(\alpha \wedge \beta)}/\sqrt{\lambda_n}\right),
\end{equation}
where $\textbf{X}_{[I_2]}$ and $\textbf{X}_{[I_1]}$ denote $\{\textbf{X}_i, i\in I_2\}$ and $\{\textbf{X}_i, i\in I_1 \}$,  respectively. 
\end{lemma}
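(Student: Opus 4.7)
The plan is to exploit sample splitting to decouple the randomness of $\widehat{L}, \widehat{U}$ (computed from $I_1$) from the samples in $I_2$, and then reduce everything to a pointwise Lipschitz bound on the loss $l$ as a function of $(L, U)$. The factor $1/\sqrt{\lambda_n}$ in the target rate will arise because the Lipschitz constant scales as $1 + \|f\|_\infty$ and $\|f\|_\infty = O(\lambda_n^{-1/2})$ for $f \in B_n$.

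First, since $\widehat{w}, \widehat{e}$ depend only on $I_1$ and $\{\mathbf{X}_i : i \in I_2\}$ are i.i.d.\ and independent of $I_1$, the inner $I_2$-expectation equals $\mathbb{E}_\mathbf{X}[l(\mathbf{X}; w, e, f) - l(\mathbf{X}; \widehat{w}, \widehat{e}, f)]$ for every fixed $f$, where the right-hand expectation is over a fresh $\mathbf{X}$ with $\widehat{w}, \widehat{e}$ treated as fixed. Applying Jensen's inequality to push the absolute value inside $\mathbb{E}_\mathbf{X}$ and using $\sup_f \mathbb{E}_\mathbf{X} \leq \mathbb{E}_\mathbf{X} \sup_f$, I would upper-bound the lemma's left-hand side by
\[
\mathbb{E}_{\mathbf{X}_{[I_1]}} \mathbb{E}_\mathbf{X} \sup_{f \in B_n} \big|l(\mathbf{X}; w, e, f) - l(\mathbf{X}; \widehat{w}, \widehat{e}, f)\big|.
\]

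The central technical task is then to establish the pointwise bound
\[
\big|l(\mathbf{x}; w, e, f) - l(\mathbf{x}; \widehat{w}, \widehat{e}, f)\big| \leq (1 + |f(\mathbf{x})|)\,\big\{|L(\mathbf{x}) - \widehat{L}(\mathbf{x})| + |U(\mathbf{x}) - \widehat{U}(\mathbf{x})|\big\}.
\]
I would prove this by introducing the one-dimensional summary $g(L, U) := \max(U, 0) + \min(L, 0)$ and verifying by case analysis over the four sign regions ($L > 0$; $U < 0$; $L \leq 0 \leq U$ with $L + U$ of either sign) that $w = |g|$ and $e = -\mathrm{sgn}(g)$ throughout, which yields the unified representation $l(\mathbf{x}; w, e, f) = (|g(L, U)| - g(L, U) f(\mathbf{x}))^+$. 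Since $g$ is $1$-Lipschitz in $(L, U)$ and the scalar map $s \mapsto (|s| - s f)^+$ has slopes bounded by $1 + |f|$, the displayed Lipschitz bound follows. Combining with $\|f\|_\infty \leq 2\sqrt{M_1 \vee M_3}/(M_4\sqrt{\lambda_n})$ for $f \in B_n$ and Assumption~\ref{assump: rate of convergence of L and U}, which gives $\mathbb{E}|L(\mathbf{X}) - \widehat{L}(\mathbf{X})| = O(n^{-\alpha})$ and $\mathbb{E}|U(\mathbf{X}) - \widehat{U}(\mathbf{X})| = O(n^{-\beta})$, then yields the claimed $O(n^{-(\alpha \wedge \beta)}/\sqrt{\lambda_n})$ rate.

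The main obstacle will be finding the right reformulation of $l$ in terms of $g$. A direct triangle-inequality decomposition $|w\{1 + ef\}^+ - \widehat{w}\{1 + \widehat{e}f\}^+| \leq |w - \widehat{w}|(1 + \|f\|_\infty) + \widehat{w}(1 + \|f\|_\infty)\,\mathbbm{1}\{e \neq \widehat{e}\}$ is unsatisfactory because $\{e \neq \widehat{e}\} = \{\mathrm{sgn}(L + U) \neq \mathrm{sgn}(\widehat{L} + \widehat{U})\}$ cannot be controlled solely by $\mathbb{E}|L - \widehat{L}| + \mathbb{E}|U - \widehat{U}|$ without an additional margin condition near $L + U = 0$. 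Collapsing $(w, e)$ into the single continuous summary $g$ sidesteps this difficulty: since $w = |g|$ vanishes at $L + U = 0$, the loss $l$ is automatically continuous across all sign-region boundaries, and global Lipschitzness follows with no margin assumption.
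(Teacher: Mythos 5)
Your proposal is correct and follows essentially the same route as the paper: both reduce the claim to a pointwise Lipschitz bound on the loss difference via a single scalar summary of $(L,U)$ (your $g(L,U)=\max(U,0)+\min(L,0)$ is exactly $-w(\mathbf{x})e(\mathbf{x})$, the quantity the paper's Lemma E1 works with), combined with $\|f\|_\infty=O(\lambda_n^{-1/2})$ on $B_n$ and Assumption~\ref{assump: rate of convergence of L and U}. The only difference is cosmetic but welcome: your closed form $g=U^+ + L^-$ and the unified representation $l=(|g|-g f)^+$ make the $1$-Lipschitz property of the summary and the continuity of $l$ across sign-region boundaries immediate, whereas the paper establishes the same inequality $|w'-\widehat{w}'|\leq|\widehat U-U|+|\widehat L-L|$ by a four-case analysis.
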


Function $l(\cdot;w,e,f)$ in Lemma \ref{lemma: difference caused by hat w and w} denotes the loss function where $w$ and $e$ are set at truth, and $l(\cdot;\widehat{w},\widehat{e},f)$ the loss function where $w$ and $e$ are estimated. Lemma \ref{lemma: difference caused by hat w and w} effectively bounds the risk induced by estimating  $w(\cdot)$ and $e(\cdot)$. Lemma \ref{lemma: lemma 2} below further quantifies the risk induced by estimating the risk function using its empirical analogue. 
We may define the 

\begin{lemma}\rm
\label{lemma: lemma 2}
Let $w(\cdot)$ and $e(\cdot)$ be defined as in Lemma \ref{lemma: difference caused by hat w and w}. We have
\begin{equation}
    \begin{split}
\mathbb{E}_{\textbf{X}_{[I_1]}}\mathbb{E}_{\textbf{X}_{[I_2 ]}}\sup_{f\in B_n^*}\left|\mathbb{E}l(\textbf{X} ; w,e,f) - \sum_{i\in I_2}\frac{1}{|I_2 |} l(\textbf{X}_i ; w,e,f)\right| \leq O\left(\frac{1}{\sqrt{n}\cdot\lambda_n}\right).
    \end{split}
\end{equation}
\end{lemma}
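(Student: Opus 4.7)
The plan is a standard empirical process argument---symmetrization, Ledoux--Talagrand contraction, and entropy-based Rademacher control---with careful tracking of the $\lambda_n$-scaling. The key initial observation is that $w(\cdot)$ and $e(\cdot)$ here are the \emph{true} (not estimated) functions, so both $w, e$ and the class $B_n^*$ are deterministic; the outer expectation over $\mathbf{X}_{[I_1]}$ therefore plays no role, and the problem reduces to a uniform deviation bound over $\mathcal{L}_{B_n^*} := \{\mathbf{x}\mapsto w(\mathbf{x})(1+e(\mathbf{x})f(\mathbf{x}))^+ : f\in B_n^*\}$ on the i.i.d.\ samples indexed by $I_2$.

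First I would set up envelope and Lipschitz structure. By Assumption \ref{assump: boundedness}, $|w(\mathbf{x})|\leq 2M_1$ and $|e(\mathbf{x})|\leq 1$; by the sup-norm bound stated just before the lemma, $f\in B_n^*$ implies $\|f\|_\infty = O(\lambda_n^{-1/2})$. Hence $\mathcal{L}_{B_n^*}$ has uniform envelope of order $\lambda_n^{-1/2}$, and for each fixed $\mathbf{x}$ the map $u\mapsto w(\mathbf{x})(1+e(\mathbf{x})u)^+$ is $2M_1$-Lipschitz. Standard symmetrization then bounds the left-hand side of the lemma by $2\,\mathbb{E}\,\mathcal{R}_{|I_2|}(\mathcal{L}_{B_n^*})$, where $\mathcal{R}_{|I_2|}$ denotes the expected Rademacher complexity on the $|I_2|\asymp n/2$ samples in $I_2$. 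Writing $\phi_\mathbf{x}(u) = w(\mathbf{x})(1+e(\mathbf{x})u)^+$ and decomposing $\phi_\mathbf{x}(u) = \phi_\mathbf{x}(0) + [\phi_\mathbf{x}(u)-\phi_\mathbf{x}(0)]$ so the centered map vanishes at $u=0$ while remaining $2M_1$-Lipschitz, Ledoux--Talagrand contraction yields $\mathcal{R}_{|I_2|}\big(\{\phi_\mathbf{x}(f(\mathbf{x}))-\phi_\mathbf{x}(0):f\in B_n^*\}\big) \leq 4M_1\,\mathcal{R}_{|I_2|}(B_n^*)$, while the $f$-free remainder $\phi_\mathbf{x}(0) = w(\mathbf{x})$ contributes at most $|I_2|^{-1}\,\mathbb{E}\big|\sum_{i\in I_2}\sigma_i w(\mathbf{X}_i)\big| = O(n^{-1/2})$ to $\mathcal{R}_{|I_2|}(\mathcal{L}_{B_n^*})$.

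Next I would use positive homogeneity together with Dudley's entropy integral to control $\mathcal{R}_{|I_2|}(B_n^*)$. Since $B_n^* \subseteq (C_0/\lambda_n)\cdot\mathcal{F}_0$ by Assumption \ref{assump: entropy}, Rademacher complexity scales linearly: $\mathcal{R}_{|I_2|}(B_n^*)\leq (C_0/\lambda_n)\,\mathcal{R}_{|I_2|}(\mathcal{F}_0)$. The class $\mathcal{F}_0$ has bounded envelope and satisfies $\sup_P\log N(\epsilon,\mathcal{F}_0,L_2(P)) \leq O(\epsilon^{-v})$ with $v<2$, so Dudley's entropy integral gives $\mathcal{R}_{|I_2|}(\mathcal{F}_0) = O(n^{-1/2})$; the integral $\int_0^1 \epsilon^{-v/2}\,d\epsilon$ converges precisely because $v<2$. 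Combining, $\mathcal{R}_{|I_2|}(\mathcal{L}_{B_n^*}) \lesssim 1/(\sqrt{n}\,\lambda_n)$, where the lower-order $O(n^{-1/2})$ contribution of $\phi_\mathbf{x}(0)$ is absorbed since $\lambda_n = o(1)$. This yields the advertised bound.

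The main bookkeeping subtlety---and the step most prone to error---is to peel the factor $C_0/\lambda_n$ out via homogeneity \emph{before} invoking Dudley on the $\lambda_n$-free class $\mathcal{F}_0$. Applying Dudley's integral to $B_n^*$ directly would entangle the $\lambda_n$-dependent envelope $\lambda_n^{-1/2}$ with the $\lambda_n$-dependent covering number $\lambda_n^{-v}\epsilon^{-v}$, and one must then integrate their product carefully; while such a direct computation produces a valid (in fact sharper) upper bound when $v<2$, the exponent of $\lambda_n$ does not cleanly match the advertised $1/\lambda_n$ rate. Peeling first and integrating second delivers exactly the rate stated in the lemma and is the natural companion to Lemma \ref{lemma: difference caused by hat w and w} in the subsequent risk-bound argument.
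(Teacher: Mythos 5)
Your proof is correct and reaches the stated $O(1/(\sqrt{n}\,\lambda_n))$ rate, but it travels a somewhat different road than the paper. The paper applies the maximal inequality of \citet[Theorem 2.14.1]{van1996weak} directly to the loss class $\mathcal{F}_{w,e,n}=\{l(\cdot;w,e,f):f\in B_n^*\}$, taking the constant envelope $F = O(\lambda_n^{-1/2})$, transferring the uniform entropy bound from $(C_0/\lambda_n)\cdot\mathcal{F}_0$ to the loss class via the Lipschitzness of $u\mapsto w(\mathbf{x})\{1+e(\mathbf{x})u\}^+$, and computing the entropy integral as $O(\lambda_n^{-v/4})$, which it then loosens to $\lambda_n^{-1/2}$ so that the product with $\|F\|_{P,2}$ gives $\lambda_n^{-1}$. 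You instead exploit the same Lipschitz structure on the Rademacher side: symmetrize, center so that the contraction maps vanish at $0$, apply Ledoux--Talagrand, peel the scale factor $C_0/\lambda_n$ out by positive homogeneity, and only then run Dudley's integral on the $\lambda_n$-free class $\mathcal{F}_0$ with bounded envelope. The two arguments rest on exactly the same ingredients (the entropy condition of Assumption \ref{assump: entropy}, the containment $B_n^*\subseteq (C_0/\lambda_n)\cdot\mathcal{F}_0$, and the Lipschitz hinge structure), so neither has a gap the other fills; your version makes the provenance of the $1/\lambda_n$ factor more transparent and cleanly isolates the $f$-free term $w(\mathbf{x})$, while the paper's version is shorter and, as you correctly observe in your closing remark, actually produces the slightly sharper exponent $\lambda_n^{-1/2-v/4}$ before deliberately rounding up to the advertised $\lambda_n^{-1}$. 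One small bookkeeping note: $|w(\mathbf{x})|\leq M_1$ (only one of the three indicators in the definition of $w$ is active at a time), so your Lipschitz constant $2M_1$ is valid but not tight; this does not affect the rate.
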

Lemma \ref{lemma: difference caused by hat w and w} and \ref{lemma: lemma 2} facilitate the derivation of the convergence rate of $\mathcal{R}^{\text{h}}_{\text{upper}}(\widehat{f}_{n}^{\lambda_n})$, as is formally stated in Theorem \ref{thm: conv rate}.
\begin{theorem}\rm
\label{thm: conv rate}
Assume that Assumption \ref{assump: existence of fnite minimizer} to \ref{assump: norm} hold. We have
\begin{equation}
 \mathcal{R}^{\text{h}}_{\text{upper}}(\widehat{f}_{n}^{\lambda_n}) - \inf_{f\in \mathcal{F}}\mathcal{R}^{\text{h}}_{\text{upper}}(f)\leq O(\lambda_n+n^{-\frac{1}{2}}\lambda_n^{-1}+\lambda_n^{-\frac{1}{2}}(n^{-\alpha}+n^{-\beta})).
\end{equation}
\end{theorem}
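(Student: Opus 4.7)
The plan is a standard oracle-inequality/structural-risk-minimization decomposition that isolates three sources of error: the regularization bias, the empirical process fluctuation, and the plug-in bias from using $(\widehat{w},\widehat{e})$ in place of $(w,e)$. By Assumption \ref{assump: existence of fnite minimizer} fix $f^{\ast}_{\mathcal{F}}\in\mathcal{F}$ attaining $\inf_{f\in \mathcal{F}}\mathcal{R}^{\text{h}}_{\text{upper}}(f)$, and write $\widehat{\mathcal{R}}_{n}(f) = |I_2|^{-1}\sum_{i\in I_2} l(\mathbf{X}_i;\widehat{w},\widehat{e},f)$ for the empirical risk (without the penalty). A preliminary step is to argue that $\widehat{f}_n^{\lambda_n}\in B_n^\ast$ with probability one: evaluating the penalized objective at $f\equiv 0$ gives $(\lambda_n/2)\|\widehat{f}_n^{\lambda_n}\|^2 \leq M_3$ by Assumption \ref{assump: boundedness of L U estimates}, and invoking Assumption \ref{assump: norm} delivers the sup-norm bound $\|\widehat{f}_n^{\lambda_n}\|_{\infty}\leq 2\sqrt{M_1\vee M_3}/(M_4\sqrt{\lambda_n})$; combined with Assumption \ref{assump: entropy}, this places $\widehat{f}_n^{\lambda_n}\in B_n^\ast$. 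For $n$ sufficiently large the fixed $f^{\ast}_{\mathcal{F}}$ also lies in $B_n^\ast$ since $\lambda_n=o(1)$.

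Now write
\begin{equation*}
    \mathcal{R}^{\text{h}}_{\text{upper}}(\widehat{f}_n^{\lambda_n}) - \mathcal{R}^{\text{h}}_{\text{upper}}(f^{\ast}_{\mathcal{F}}) = \underbrace{[\mathcal{R}^{\text{h}}_{\text{upper}}(\widehat{f}_n^{\lambda_n})-\widehat{\mathcal{R}}_n(\widehat{f}_n^{\lambda_n})]}_{\text{(A)}} + \underbrace{[\widehat{\mathcal{R}}_n(\widehat{f}_n^{\lambda_n})-\widehat{\mathcal{R}}_n(f^{\ast}_{\mathcal{F}})]}_{\text{(B)}} + \underbrace{[\widehat{\mathcal{R}}_n(f^{\ast}_{\mathcal{F}})-\mathcal{R}^{\text{h}}_{\text{upper}}(f^{\ast}_{\mathcal{F}})]}_{\text{(C)}}.
\end{equation*}
Term (B) is handled by the defining optimality of $\widehat{f}_n^{\lambda_n}$: $\widehat{\mathcal{R}}_n(\widehat{f}_n^{\lambda_n}) + (\lambda_n/2)\|\widehat{f}_n^{\lambda_n}\|^2 \leq \widehat{\mathcal{R}}_n(f^{\ast}_{\mathcal{F}}) + (\lambda_n/2)\|f^{\ast}_{\mathcal{F}}\|^2$, which yields (B) $\leq (\lambda_n/2)\|f^{\ast}_{\mathcal{F}}\|^2 = O(\lambda_n)$, the source of the $\lambda_n$ term in the claimed bound.

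Terms (A) and (C) are both bounded by $\sup_{f\in B_n^\ast}|\mathcal{R}^{\text{h}}_{\text{upper}}(f)-\widehat{\mathcal{R}}_n(f)|$. I would decompose this supremum by adding and subtracting the true-weight loss:
\begin{equation*}
 \sup_{f\in B_n^\ast}\Bigl|\mathcal{R}^{\text{h}}_{\text{upper}}(f) - \tfrac{1}{|I_2|}\sum_{i\in I_2} l(\mathbf{X}_i;w,e,f)\Bigr| + \sup_{f\in B_n^\ast}\Bigl|\tfrac{1}{|I_2|}\sum_{i\in I_2}\bigl[l(\mathbf{X}_i;w,e,f)-l(\mathbf{X}_i;\widehat{w},\widehat{e},f)\bigr]\Bigr|.
\end{equation*}
The first supremum is exactly the quantity bounded in Lemma \ref{lemma: lemma 2}, contributing $O(1/(\sqrt{n}\lambda_n))$. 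For the second supremum, I would further split it into a conditional-bias part and a conditional-empirical-process part by inserting $\mathbb{E}_{\mathbf{X}_{[I_2]}}[\,\cdot\,\mid \mathbf{X}_{[I_1]}]$: the bias part is exactly what Lemma \ref{lemma: difference caused by hat w and w} bounds, contributing $O(n^{-(\alpha\wedge\beta)}/\sqrt{\lambda_n})$; the remaining conditional empirical process has integrand bounded by $O(\|f\|_\infty)=O(1/\sqrt{\lambda_n})$ thanks to Assumption \ref{assump: boundedness of L U estimates}, and the same covering-number calculation used to prove Lemma \ref{lemma: lemma 2} (applied conditionally on $I_1$, where $\widehat{w},\widehat{e}$ are fixed bounded functions, and the covering number of $\mathcal{F}_0$ is what appears in Assumption \ref{assump: entropy}) yields another $O(1/(\sqrt{n}\lambda_n))$.

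Collecting the contributions from (A), (B), and (C) gives the announced rate $O(\lambda_n + n^{-1/2}\lambda_n^{-1} + \lambda_n^{-1/2}(n^{-\alpha}+n^{-\beta}))$. The main obstacle in the above program is not any of the individual bounds, which are direct applications of Lemmas \ref{lemma: difference caused by hat w and w}-\ref{lemma: lemma 2} once the sup-norm/$B_n^\ast$ membership of $\widehat{f}_n^{\lambda_n}$ is in hand; rather, it is arranging the double decomposition so that the bias introduced by $(\widehat{w},\widehat{e})$ is separated cleanly from the genuine empirical-process fluctuation, and verifying that the empirical-process half of that split can still be controlled by the Lemma \ref{lemma: lemma 2} machinery when the true weights are replaced by their estimates. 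Sample splitting is crucial at precisely this point, since it makes $(\widehat{w},\widehat{e})$ deterministic conditional on $\mathbf{X}_{[I_1]}$ and so removes any entropy requirement on the function classes carrying $L$ and $U$.
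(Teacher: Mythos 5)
Your proposal is correct and rests on exactly the same ingredients as the paper's proof -- the norm bound obtained by evaluating the penalized objective at $f\equiv 0$, Assumption \ref{assump: norm} to convert it into a sup-norm bound placing $\widehat{f}_n^{\lambda_n}$ in $B_n^*$, Lemma \ref{lemma: difference caused by hat w and w} for the weight-estimation bias, Lemma \ref{lemma: lemma 2} for the empirical-process fluctuation, and the penalty term $\tfrac{\lambda_n}{2}\|f_{\mathcal{F}}^{\ast}\|^2$ as the source of the $O(\lambda_n)$ contribution -- but it organizes the oracle inequality differently. The paper routes the comparison through an intermediate penalized population minimizer $f_n^{\ast} = \amin{f\in\mathcal{F}}\{\mathbb{E}\,l(\mathbf{X};w,e,f)+\tfrac{\lambda_n}{2}\|f\|^2\}$ and splits the excess risk into five terms $Q_{11},\dots,Q_{15}$ plus an approximation term $Q_2$, with the basic inequality for $\widehat{f}_n^{\lambda_n}$ appearing only as the sign constraint $Q_{13}\le 0$ and the $O(\lambda_n)$ emerging from the final comparison $Q_2 \le \tfrac{\lambda_n}{2}\|f_{\mathcal{F}}^{\ast}\|^2$. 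You compare $\widehat{f}_n^{\lambda_n}$ directly to $f_{\mathcal{F}}^{\ast}$, which eliminates the intermediate $f_n^{\ast}$ entirely and collapses the bookkeeping to three terms; this is cleaner, at the modest cost of having to check that $f_{\mathcal{F}}^{\ast}\in B_n^*$ for large $n$ (which you do, and which holds since $\|f_{\mathcal{F}}^{\ast}\|$ is a fixed finite constant while the radius of $B_n^*$ grows like $\lambda_n^{-1/2}$ and $\lambda_n^{-1}$). One remark: the extra centered conditional empirical process you introduce when splitting the weight-estimation discrepancy is not needed. The proof of Lemma \ref{lemma: difference caused by hat w and w} actually establishes the realization-wise bound $|l(x;w,e,f)-l(x;\widehat{w},\widehat{e},f)|\le |w(x)-\widehat{w}(x)|+O(\lambda_n^{-1/2})\,|w(x)e(x)-\widehat{w}(x)\widehat{e}(x)|$ uniformly over $f\in B_n$, so the supremum over $f$ of the empirical average of the loss difference is controlled pointwise and a single application of Lemma E1 after taking full expectations suffices -- no second covering-number argument is required. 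Your extra term is harmless in any case, since it contributes $O(n^{-1/2}\lambda_n^{-1})$, which is already present in the stated rate.
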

Theorem \ref{thm: conv rate} gives concrete upper bounds of $\mathcal{R}^{\text{h}}_{\text{upper}}(\widehat{f}_{n}^{\lambda_n}) - \inf_{f\in \mathcal{F}}\mathcal{R}^{\text{h}}_{\text{upper}}(f)$, which implies that for a wide range of $\lambda_n$ satisfying $\frac{\ln(\ln(n))}{n^{(\frac{1}{2}\wedge 2\alpha \wedge 2\beta)}}\leq \lambda_n\leq \frac{1}{\ln(\ln(n))} $, as $n$ goes to infinity, $\mathcal{R}^{\text{h}}_{\text{upper}}(\widehat{f}_{n}^{\lambda_n})$ converges to  $\inf_{f\in \mathcal{F}}\mathcal{R}^{\text{h}}_{\text{upper}}(f)$,  the minimal risk over $\mathcal{F}$, and $\mathcal{R}^{\text{h}}_{\text{upper}}(\widehat{f}_{n}^{\lambda_n})- \mathcal{R}^{\text{h},\ast}_{\text{upper}}(f)$ converges to approximation error $\mathcal{A}(\mathcal{F})$.  
Combining Theorem \ref{thm: relation between surrogate and true loss} with Theorem \ref{thm: conv rate}, we have the following proposition that establishes the convergence rate of $\mathcal{R}_{\text{upper}}(\widehat{f}_{n}^{\lambda_n}) - \mathcal{R}^{\ast}_{\text{upper}}-\mathcal{A}(\mathcal{F})$, i.e., the excess risk minus approximation error under the true $0$-$1$-based loss.
\begin{proposition}\rm
\label{prop: rate on true loss}
Under Assumption \ref{assump: existence of fnite minimizer} to \ref{assump: norm}, we have
\begin{equation}
\mathcal{R}_{\text{upper}}(\widehat{f}_{n}^{\lambda_n}) \leq \mathcal{R}^{\ast}_{\text{upper}} + \mathcal{A}(\mathcal{F})+ O(\lambda_n+n^{-\frac{1}{2}}\lambda_n^{-1}+\lambda_n^{-\frac{1}{2}}(n^{-\alpha}+n^{-\beta})).
\end{equation}
\end{proposition}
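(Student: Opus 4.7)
The plan is to chain together the two main results already established in this section, namely Theorem \ref{thm: relation between surrogate and true loss} (which compares excess risks under the surrogate loss and the original $0$-$1$-based loss) and Theorem \ref{thm: conv rate} (which gives the convergence rate of the surrogate-loss excess risk of $\widehat{f}_n^{\lambda_n}$ relative to $\inf_{f\in\mathcal{F}}\mathcal{R}^{\text{h}}_{\text{upper}}(f)$), and then rewrite the bound in terms of $\mathcal{R}^\ast_{\text{upper}}$ and the approximation error $\mathcal{A}(\mathcal{F})$.

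First, I would apply Theorem \ref{thm: relation between surrogate and true loss} with $f=\widehat{f}_n^{\lambda_n}$, which is measurable, to obtain
\begin{equation*}
\mathcal{R}_{\text{upper}}(\widehat{f}_n^{\lambda_n}) - \mathcal{R}^\ast_{\text{upper}} \;\leq\; \mathcal{R}^{\text{h}}_{\text{upper}}(\widehat{f}_n^{\lambda_n}) - \mathcal{R}^{\text{h},\ast}_{\text{upper}}.
\end{equation*}
Second, I would add and subtract $\inf_{f\in\mathcal{F}}\mathcal{R}^{\text{h}}_{\text{upper}}(f)$ on the right-hand side, so that the surrogate excess risk splits as
\begin{equation*}
\mathcal{R}^{\text{h}}_{\text{upper}}(\widehat{f}_n^{\lambda_n}) - \mathcal{R}^{\text{h},\ast}_{\text{upper}} \;=\; \bigl[\mathcal{R}^{\text{h}}_{\text{upper}}(\widehat{f}_n^{\lambda_n}) - \inf_{f\in\mathcal{F}}\mathcal{R}^{\text{h}}_{\text{upper}}(f)\bigr] + \bigl[\inf_{f\in\mathcal{F}}\mathcal{R}^{\text{h}}_{\text{upper}}(f) - \mathcal{R}^{\text{h},\ast}_{\text{upper}}\bigr].
\end{equation*}
The second bracket is exactly $\mathcal{A}(\mathcal{F})$ by definition, and the first bracket is controlled by Theorem \ref{thm: conv rate}, which under Assumptions \ref{assump: existence of fnite minimizer}--\ref{assump: norm} gives the bound $O\bigl(\lambda_n+n^{-1/2}\lambda_n^{-1}+\lambda_n^{-1/2}(n^{-\alpha}+n^{-\beta})\bigr)$.

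Putting the three pieces together yields
\begin{equation*}
\mathcal{R}_{\text{upper}}(\widehat{f}_n^{\lambda_n}) \;\leq\; \mathcal{R}^\ast_{\text{upper}} + \mathcal{A}(\mathcal{F}) + O\bigl(\lambda_n+n^{-1/2}\lambda_n^{-1}+\lambda_n^{-1/2}(n^{-\alpha}+n^{-\beta})\bigr),
\end{equation*}
which is the claimed inequality. There is no real obstacle here: the entire content is bookkeeping between the surrogate and $0$-$1$ risks together with the definition of $\mathcal{A}(\mathcal{F})$. All the technical work, namely the entropy calculation, the calibration inequality for the lifted hinge surrogate, and the handling of the plug-in errors in $\widehat{w}$ and $\widehat{e}$ via Lemmas \ref{lemma: difference caused by hat w and w} and \ref{lemma: lemma 2}, has already been absorbed into Theorems \ref{thm: relation between surrogate and true loss} and \ref{thm: conv rate}. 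So the proposition is really a corollary of those two theorems, and I would present it as such.
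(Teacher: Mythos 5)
Your proposal is correct and is exactly the argument the paper intends: the text introduces Proposition \ref{prop: rate on true loss} with the phrase ``Combining Theorem \ref{thm: relation between surrogate and true loss} with Theorem \ref{thm: conv rate}\,'', and your decomposition of the surrogate excess risk into the Theorem \ref{thm: conv rate} rate plus $\mathcal{A}(\mathcal{F})$ is precisely that combination. Nothing is missing.
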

Proposition \ref{prop: rate on true loss} shows that $\widehat{f}_{n}^{\lambda_n}$ is a consistent estimator in the sense that its risk converges to the Bayes risk plus the approximation error of the chosen function class $\mathcal{F}$, as long as $\lambda_n$ is selected from $\left[\frac{\ln(\ln(n))}{n^{(\frac{1}{2}\wedge 2\alpha \wedge 2\beta)}} , \frac{1}{\ln(\ln(n))}\right]$. In other words, $\widehat{f}_{n}^{\lambda_n}$ converges in its risk to the best IV-optimal ITR in the function class $\mathcal{F}$ for a wide range of $\lambda_n$ choices.

\section{Simulation Studies}
\label{sec: simulation}
In Section \ref{subsec: experiment benchmark}, we define the benchmark of our experiments. In Section \ref{subsec: simulation OWL fails}, we demonstrate that outcome weighted learning (OWL) may have performance in the presence of unmeasured confounding. In Section \ref{subsec: simulation setup} and \ref{subsec: simulation results}, we investigate the performance of our proposed IV-PILE estimator and contrast it to OWL. Section \ref{subsec: additional simu} summarizes results from additional simulations, details of which can be found in Supplementary Material F.2 through F.7.

\subsection{Experiment Benchmark}
\label{subsec: experiment benchmark}
There are at least two relevant benchmarks against which we can evaluate the performance of a candidate ITR in the presence of unmeasured confounding. Let $f_{\text{opt}}$ be an ITR that only has access to the observed covariates $\mathbf{X}$, assigns $A = +1$ when $\text{CATE}(\mathbf{X}) = \mathbb{E}[Y(1) - Y(0) \mid \mathbf{X}] \geq 0$, and assigns $A = -1$ otherwise. Let $V(f_{\text{opt}})$ denote the value of $f_{\text{opt}}$. One may compare a candidate ITR $f$ to the benchmark $f_{\text{opt}}$ by calculating:
\[
\mathcal{R}(f, f_{\text{opt}}) := V(f_{\text{opt}}) - V(f) = \mathbb{E}\left[|\text{CATE}(\mathbf{X})|\cdot \mathbbm{1}\left\{\text{sgn}\{\text{CATE}(\mathbf{X})\} \neq \text{sgn}\{f(\mathbf{X})\}\right\}
\right].
\]
Another relevant benchmark ITR is an ``omniscient" ITR $f_{\text{omni}}$ that has access to \emph{both} observed covariate $\mathbf{X}$ \emph{and} the unmeasured confounder $U$, assigns $A = +1$ when $\text{CATE}(\mathbf{X}, U) = \mathbb{E}[Y(1) - Y(0) \mid \mathbf{X},U] \geq 0$, and assigns $A = -1$ otherwise. One may compare a candidate ITR $f$ to the benchmark $f_{\text{omni}}$ by calculating:
\[
\mathcal{R}(f, f_{\text{omni}}) := V(f_{\text{omni}}) - V(f) =
  \mathbb{E}\left[|\text{CATE}(\mathbf{X}, U)|\cdot \mathbbm{1}\left\{\text{sgn}\{\text{CATE}(\mathbf{X}, U)\} \neq \text{sgn}\{f(\mathbf{X})\}\right\}
\right].
\]
Value functions of two benchmarks $f_{\text{opt}}$ and $f_{\text{omni}}$ differ by a constant that depends only on the data-generating process but not on the candidate ITR and encodes the loss of information due to not observing the unmeasured confounder, i.e.,
\[
V(f_{\text{omni}}) =
V(f_{\text{opt}}) + C_{\text{DGP}}
\] 
for some constant $C_{\text{DGP}} \geq 0$. We then immediately have 
\[
\mathcal{R}(f, f_{\text{opt}}) + C_{\text{DGP}} = \mathcal{R}(f, f_{\text{omni}}).
\]
We note that neither $f_{\text{opt}}$ nor $f_{\text{omni}}$ is identified from observed data under our general partial identification framework, and an IV-optimal ITR neither converges to $f_{\text{opt}}$ nor $f_{\text{omni}}$. Moreover, for the purpose of comparing two or more candidate ITRs, it is equivalent to report $\mathcal{R}(f, f_{\text{opt}})$ or $\mathcal{R}(f, f_{\text{omni}})$
as they differ only by a constant $C_{\text{DGP}}$, i.e., $\mathcal{R}(f_1, f_{\text{omni}}) \leq \mathcal{R}(f_2, f_{\text{omni}})$ implies $\mathcal{R}(f_1) \leq \mathcal{R}(f_2)$ and vice versa for two candidate ITRs $f_1$ and $f_2$ and a fixed data-generating process.

If we further assume identification assumptions that enable us to point identify $\text{CATE}(\mathbf{X})$ as in \citet{cui2019semiparametric} and \citet{qiu2020optimal}, then $f_{\text{opt}}$ becomes identified from observed data, estimated IV-optimal ITR would converge to $f_{\text{opt}}$, and $\mathcal{R}(f, f_{\text{opt}})$ may be of greater importance; however, we would like to point out that even in this special case, $\mathcal{R}(f, f_{\text{omni}})$ is still of interest as $C_{\text{DGP}}$ illuminates the gains from collecting more covariates.

In the simulation studies, we will present $\mathcal{R}(f, f_{\text{omni}})$ for each candidate ITR $f$ and data-generating process, and calculate $C_{\text{DGP}}$ for each data-generating process. Weighted misclassification error rate refers to $\mathcal{R}(f, f_{\text{omni}})$. Readers can then immediately deduce $\mathcal{R}(f, f_{\text{opt}}) = \mathcal{R}(f, f_{\text{omni}}) - C_{\text{DGP}}$. 

\subsection{Failure of the Outcome Weighted Learning (OWL) in the Presence of Unmeasured Confounding}
\label{subsec: simulation OWL fails}

We illustrate how ITR-estimation methods could dramatically fail in the presence of unmeasured confounding. We consider the following simple data-generating process of covariates and treatment:
\begin{equation*}
\begin{split}
    \label{simu model: X and A}
    &X_1, X_2, U \sim \text{Unif}~[-1, 1],\\
    &P(A = 1 \mid X_1, X_2, U) = \text{expit}(1 + X_1 - X_2 + \lambda U),
\end{split}
\end{equation*}
where $(X_1, X_2)$ are observed covariates and $U$ an unmeasured covariate. We consider two outcomes, one continuous $(Y_1)$ and the other binary $(Y_2)$:
\begin{equation*}
    \begin{split}
        &Y_1 = 1 + X_1 + X_2 + \xi U + 0.442(1 - X_1 - X_2 + \delta U)\cdot A + \epsilon, \quad \epsilon \sim N(0, 1),\\
        &P(Y_2 = 1 \mid X_1, X_2, U, A) = \text{expit}\{1 - X_1 + X_2 + \xi U + 0.442(1 - X_1 + X_2 + \delta U)\cdot A\}.
    \end{split}
\end{equation*}
Observed data consists only of $(X_1, X_2, A, Y_1)$ (or $(X_1, X_2, A, Y_2)$) since $U$ is not observed. Parameters $(\lambda, \xi, \delta)$ control the degree of unmeasured confounding, and $(\lambda, \xi, \delta) = (0, 0, 0)$ corresponds to no unmeasured confounding. We adapted the strategy proposed in \citet{zhao2012estimating} and \citet{zhang2012estimating} to the setting of observational studies by fitting a propensity score model $\widehat{\pi}(a)$ based on $X_1$ and $X_2$ alone and estimating the conditional average treatment effect $C(\mathbf{X})$ using an IPW estimator based on $\widehat{\pi}(a)$ in a training dataset consisting of $n_{\text{train}} = 300$ subjects. We then labeled each subject as $+1$ or $-1$ based on $\text{sgn}\{\widehat{C}(\mathbf{X})\}$ and attached to her a weight of magnitude $|\widehat{C}(\mathbf{X})|$. A support vector machine with Gaussian RBF kernel was then applied to this derived dataset. For various $(\lambda, \xi, \delta)$ combinations, we repeated the experiment $500$ times and reported the average weighted misclassification error rate $\mathcal{R}(f, f_{\text{omni}})$ evaluated on a testing dataset of size $100,000$, for both outcomes. 

For the continuous outcome $Y_1$, $\mathcal{R}(f, f_{\text{omni}})$ is less than $0.05$ when $(\lambda, \xi, \delta) = (0, 0, 0)$, suggesting a favorable generalization performance of outcome weighted learning (OWL) when NUCA holds. However, the error rate jumps to almost $0.30$ when $(\lambda, \xi, \delta) = (4, 0, 4)$. To get a sense of how poor the performance is, note that a classifier based on random coin flips yields an average error of $0.49$. Similar qualitative trends hold for the binary outcome $Y_2$. The average weighted misclassification error is $0.02$ when NUCA holds, $0.06$ when $(\lambda, \xi, \delta) = (4, 0, 4)$, and $0.07$ if the trained classifier is replaced with one based on random coin flips. Figure \ref{fig: failure of OWL} summarizes the results. The pattern suggests that a naive procedure assuming no unmeasured confounding could fail dramatically when this assumption is violated. Similar qualitatively results hold when we use $\mathcal{R}(f, f_{\text{opt}})$ in place of $\mathcal{R}(f, f_{\text{omni}})$.

\begin{figure}[ht]
   \centering
     \subfloat{\includegraphics[width = 0.49\columnwidth]{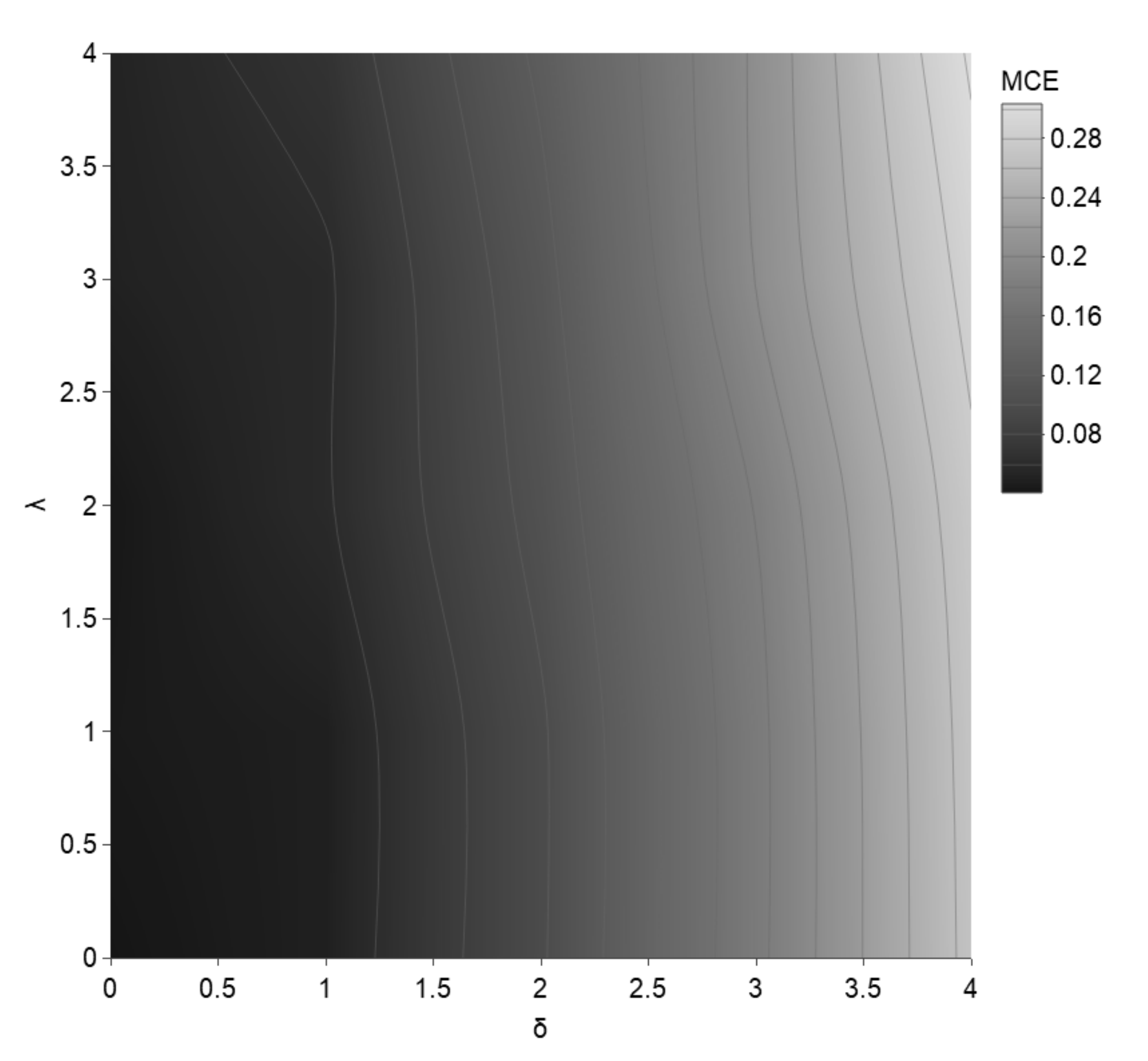}}
     \subfloat{\includegraphics[width = 0.49\columnwidth]{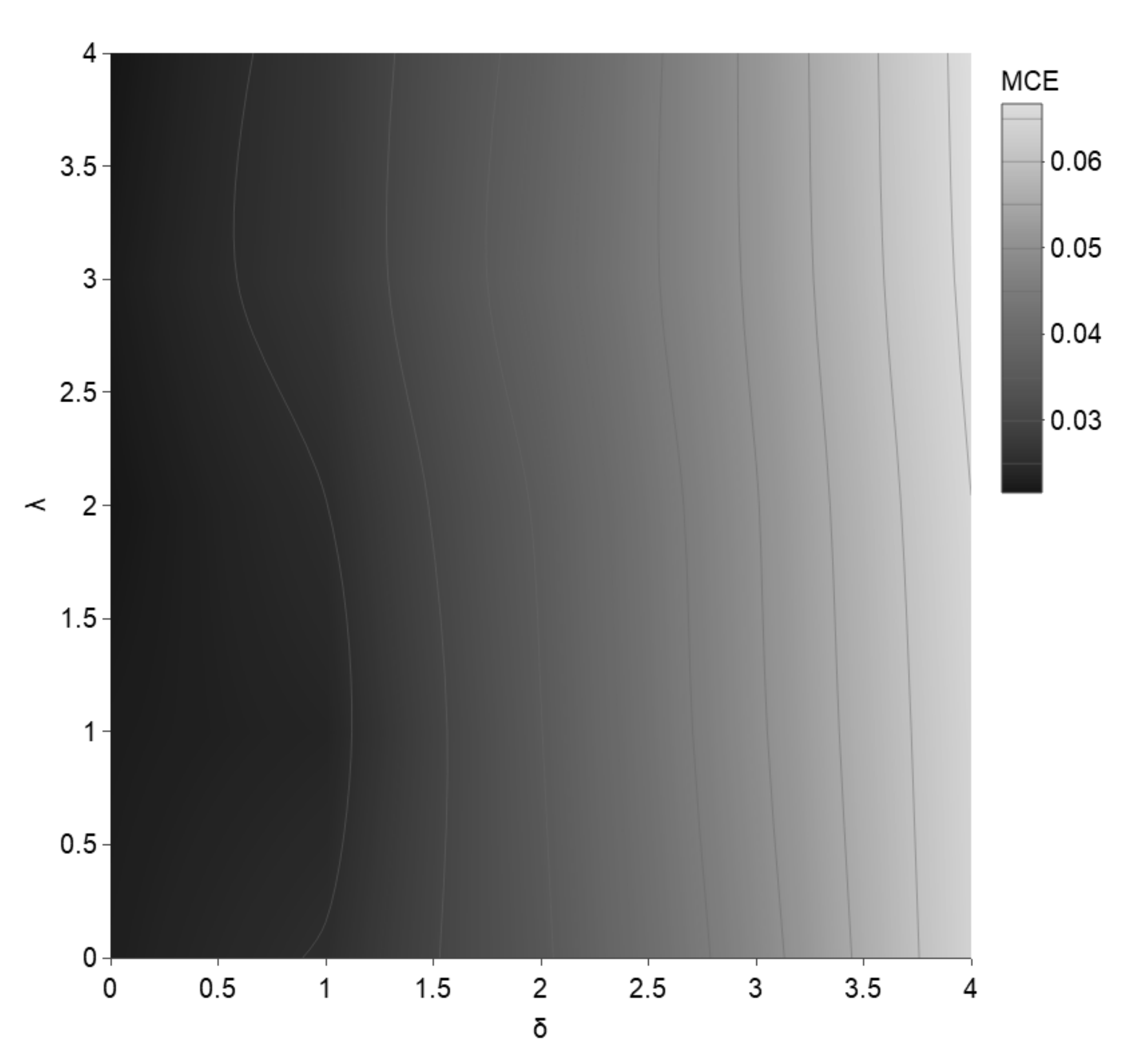}}
     \caption{\small Estimated \emph{weighted misclassification error} on the testing dataset for various $(\lambda, \delta)$ combinations with $\xi = 0$. The left panel plots the result for the continuous outcome $Y_1$. The right panel plots the result for the binary outcome $Y_2$. Size of training data is $300$ and experiment is repeated $500$ times. A classifier based on random coin flips yields an error of $0.49$ in the continuous case and $0.07$ in the binary case.}
\label{fig: failure of OWL}
\end{figure}

\subsection{Comparing IV-PILE to OWL: Experiment Setup}
\label{subsec: simulation setup}
\subsubsection{Data Generating Process}
\label{subsubsec: data generating process}
We considered the following data-generating process with a binary IV $Z$, a binary treatment $A$, a binary outcome $Y$, a $10$-dimensional observed covariates $\mathbf{X}$, and an unmeasured confounder $U$:
\begin{equation*}
    \begin{split}
        &Z \sim \text{Bern}(0.5), ~ X_1,\cdots,X_{10} \sim \text{Unif}~[-1, 1], ~ U \sim \text{Unif}~[-1, 1],\\
        &P(A = 1 \mid \mathbf{X}, U, Z) = \text{expit}\{8Z + X_1 - 7X_2 + \lambda (1 + X_1) U\},\\
        &P(Y = 1 \mid A, \mathbf{X}, U) = \text{expit}\{g_1(\mathbf{X}, U) + g_2(\mathbf{X}, U, A)\},
    \end{split}
\end{equation*}
with the following choices of $g_1(\mathbf{X}, U)$:
\begin{equation*}
    \begin{split}
    &\text{Model}~(1): \qquad g_1(\mathbf{X}, U) = 1 - X_1 + X_2 + \xi U, \\
    &\text{Model}~(2): \qquad g_1(\mathbf{X}, U) = 1 - X_1^2 + X_2^2 + \xi X_1 X_2 U, 
    \end{split}
\end{equation*}
and $g_2(\mathbf{X}, U)$:
\begin{equation*}
    \begin{split}
    &\text{Model}~(1): \qquad g_2(\mathbf{X}, U, A) = 0.442(1 - X_1 + X_2 + \delta U)A, \\
    &\text{Model}~(2): \qquad g_2(\mathbf{X}, U, A) = (X_2 - 0.25X_1^2 - 1 + \delta U)A.
    \end{split}
\end{equation*}
In the above specifications, $\lambda$ controls the level of interaction between $U$ and $\mathbf{X}$ on $P(A = 1)$, and $\delta$ controls the level of interaction between $U$ and $A$ on the outcome. Assumptions underpinning naive methods (OWL, EARL, etc) hold when $(\lambda, \xi, \delta) = (0, 0, 0)$. Moreover, the data-generating process considered here does \emph{not} necessarily satisfy the IV identification assumptions in \citet{cui2019semiparametric} and \citet{qiu2020optimal}. We direct interested readers to Supplementary Material F.1 for more details.

\subsubsection{IV Identification Assumptions and Estimators of $L(\mathbf{x})$ and $U(\mathbf{x})$}
We considered the IV identification set $\mathcal{C}_{\text{BP}}$ discussed in Section \ref{subsec: balke pearl bound}. Note that $Z \sim \text{Bern}(0.5)$ trivially satisfies $\mathcal{C}_{\text{BP}}$. Under $\mathcal{C}_{\text{BP}}$, $L(\mathbf{X})$ and $U(\mathbf{X})$ are calculated as in \eqref{eqn: bp lower bound} and \eqref{eqn: bp upper bound}, and require estimating conditional probabilities $P(Y = y, A = a \mid Z = z, \mathbf{X} = \mathbf{x})$ for $2 \times 2 \times 2 = 8$ ($y = \pm 1$, $a = \pm 1$, $z = \pm 1$) different $(y, a, z)$ combinations. These conditional probabilities do not involve $U$ and are identified from the observed data. In general, these conditional probabilities may not admit simple and familiar parametric form, and researchers are advised to use some flexible estimation routines, e.g., random forest (\citealp{breiman2001random}), to fit the data. We fit all conditional probabilities using random forest models with default settings as implemented in the \textsf{R} package \textsf{randomForest} (\citealp{RF_package}). We also considered estimating these conditional probabilities using simple (but misspecified) multinomial logistic regression models. Likewise, when implementing a naive OWL method, we estimated the propensity score model using a random forest and a logistic regression model.

\subsubsection{Training and Testing Dataset}
\label{subsubsec: train and test dataset}
Our training dataset consisted of $n_{\text{train}} = 300$ or $500$ independent samples of $(Z, \mathbf{X}, A, Y)$. Although the true data-generating process involved the unmeasured confounder $U$, we did \emph{not} observe or make use of $U$ throughout the training process. The testing dataset consisted of $n_{\text{test}} = 100,000$ independently drawn copies of $(\mathbf{X}, U)$. Their true conditional average treatment effects were calculated (since we had \emph{both} $\mathbf{X}$ and $U$ information). We reported estimated  $\mathcal{R}(f, f_{\text{omni}})$ of various candidate ITRs on this testing dataset. We also calculated and reported $C_{\text{DGP}}$ for each data-generating process so that $\mathcal{R}(f, f_{\text{opt}}) = \mathcal{R}(f, f_{\text{omni}}) - C_{\text{DGP}}$ can be immediately inferred.

\subsection{Numerical Results}
\label{subsec: simulation results}
We report simulation results in this section. We considered three classifiers:
\begin{enumerate}
    \item IV-PILE-RF: IV-PILE with relevant conditional probabilities in $L(\mathbf{X})$ and $U(\mathbf{X})$ estimated via random forests and classification performed with a Gaussian kernel;
    \item OWL-RF: OWL with the propensity score estimated via random forests and classification performed with a Gaussian kernel;
    \item COIN-FLIP: Classifier based on random coin flips.
\end{enumerate}

Table \ref{tbl: simulation 1} reports the estimated weighted misclassification error rate $\widehat{\mathcal{R}}(f, f_{\text{omni}})$ of IV-PILE-RF ($\widehat{\mathcal{R}}(\text{IV-PILE-RF}, f_{\text{omni}})$), OWL-RF ($\widehat{\mathcal{R}}(\text{OWL-RF}, f_{\text{omni}})$), and COIN-FLIP ($\widehat{\mathcal{R}}(\text{COIN-FLIP}, f_{\text{omni}})$) for different $(\lambda, \xi, \delta)$, $g_1(\mathbf{X}, U)$, and $n_{\text{train}}$ combinations when $g_2(\mathbf{X}, U, A)$ is taken to be Model (1). Supplementary Material F.1 reports the same numerical results when $g_2(\mathbf{X}, U, A)$ is taken to be Model (2). 

Table \ref{tbl: simulation 1} suggests two consistent trends that align well with our theory and intuition. First, we would expect that $\widehat{\mathcal{R}}(\text{IV-PILE-RF}, f_{\text{omni}})$ and $\widehat{\mathcal{R}}(\text{IV-PILE-RF}, f_{\text{opt}})$ would not go to $0$ even when $n_{\text{train}} \rightarrow \infty$. This is verified by noting that increasing $n_{\text{train}}$ does \emph{not} drive either error on the testing dataset to $0$ (See Supplementary Material F.3 for results when $n_{\text{train}}$ is larger than $500$). Moreover, we observe that $\widehat{\mathcal{R}}(\text{OWL-RF}, f_{\text{omni}})$ and $\widehat{\mathcal{R}}(\text{OWL-RF}, f_{\text{opt}})$ also remain large as $n_{\text{train}}$ grows, which reflects that the problem of unmeasured confounding is fundamental, and does \emph{not} go away as the training sample size grows. Second, the IV-PILE estimator seems to be robust and \emph{outperforms} the naive OWL estimator in all simulation settings considered here. However, we would like to point out that this is \emph{not} suggesting that our approach \emph{always} outperforms OWL or similar methods. Unlike our proposed approach, when the assumptions underpinning these methods are not met, no guarantees can provided about their performance, and it is difficult to predict what would happen to these methods in practice. 
\begin{table}[ht]
\centering
\caption{\small Estimated \emph{weighted misclassification error rate} for different $(\lambda, \delta, \xi)$ and $g_1(\mathbf{X}, U)$ combinations. We take $g_2(\mathbf{X}, U, A)$ to be Model (1) throughout. Training data sample size $n_{\text{train}} = 300$ or $500$. Each number in the cell is averaged over $500$ simulations. Standard errors are in parentheses.}
\label{tbl: simulation 1}
\resizebox{\columnwidth}{!}{%
\fbox{%
\begin{tabular}{ccccccc} \\[-0.8em]
   &\multicolumn{2}{c}{IV-PILE-RF} &\multicolumn{2}{c}{OWL-RF} &COIN-FLIP & $C_{\text{DGP}}$\\[-0.9em] \\ \\[-1em] \\[-0.8em]
 $\xi = 0$, $g_1 = \text{Model}~(1)$& $n_{\text{train}} = 300$ & $n_{\text{train}} = 500$ & $n_{\text{train}} = 300$ & $n_{\text{train}} = 500$ \\[-0.9em] \\ \\[-1em] \cline{1-1} \\[-0.8em]
$(\lambda, \delta) = (0.5, 0.5)$  &0.005 (0.000) &0.005 (0.000) & 0.014 (0.004) & 0.015 (0.003) &0.031 (0.000) & 0.001\\
$(\lambda, \delta) = (1.0, 1.0)$  &0.008 (0.000) &0.008 (0.000) & 0.018 (0.004) &0.018 (0.003) &0.033 (0.000) & 0.005 \\ 
$(\lambda, \delta) = (1.5, 1.5)$  &0.014 (0.001) &0.013 (0.000) & 0.022 (0.004) & 0.023 (0.003) &0.037 (0.000) & 0.010\\ 
$(\lambda, \delta) = (2.0, 2.0)$  &0.020 (0.000) &0.020 (0.000) & 0.029 (0.003) &0.029 (0.003) &0.043 (0.000) & 0.016 \\ \\[-0.8em]\hline 
\\[-0.8em]
$\xi = 1$, $g_1 = \text{Model}~(1)$ \\[-0.9em] \\ \cline{1-1} \\[-0.9em]
$(\lambda, \delta) = (0.5, 0.5)$  &0.005 (0.000) &0.004 (0.000) &0.013 (0.004) &0.014 (0.003) &0.029 (0.000) & 0.001\\
$(\lambda, \delta) = (1.0, 1.0)$  &0.007 (0.000) &0.007 (0.000) &0.016 (0.003) &0.016 (0.003) &0.029 (0.000) & 0.005\\
$(\lambda, \delta) = (1.5, 1.5)$  &0.012 (0.000) &0.012 (0.000) &0.019 (0.003) &0.020 (0.002) &0.031 (0.000) & 0.009\\
$(\lambda, \delta) = (2.0, 2.0)$  &0.018 (0.000) &0.018 (0.000) &0.025 (0.003) &0.025 (0.002) &0.035 (0.000) & 0.016\\ \\[-0.8em]\hline 
\\[-0.8em]
 $\xi = 0$, $g_1 = \text{Model}~(2)$ \\[-0.9em] \\ \\[-1em] \cline{1-1} \\[-0.8em]
$(\lambda, \delta) = (0.5, 0.5)$  &0.005 (0.000) &0.005 (0.000) & 0.017 (0.006) & 0.018 (0.005) &0.040 (0.000) &0.001\\
$(\lambda, \delta) = (1.0, 1.0)$  &0.007 (0.000) &0.007 (0.000) & 0.020 (0.006) &0.021 (0.005) &0.042 (0.000) &0.004 \\ 
$(\lambda, \delta) = (1.5, 1.5)$  &0.012 (0.000) &0.012 (0.000) & 0.024 (0.006) & 0.025 (0.005) &0.045 (0.000) & 0.008\\ 
$(\lambda, \delta) = (2.0, 2.0)$  &0.019 (0.000) & 0.019(0.000) & 0.031 (0.006) &0.031 (0.005) &0.050 (0.000) &0.014 \\ \\[-0.8em]\hline 
\\[-0.8em]
$\xi = 1$, $g_1 = \text{Model}~(2)$ \\[-0.9em] \\ \cline{1-1} \\[-0.9em]
$(\lambda, \delta) = (0.5, 0.5)$  &0.004 (0.000) &0.004 (0.000) &0.017 (0.004) &0.017 (0.005) &0.040 (0.000) & 0.001\\
$(\lambda, \delta) = (1.0, 1.0)$  &0.007 (0.000) &0.007 (0.000) &0.019 (0.006) &0.020 (0.005) &0.042 (0.000) & 0.004\\
$(\lambda, \delta) = (1.5, 1.5)$  &0.012 (0.000) &0.012 (0.000) &0.024 (0.006) &0.025 (0.005) &0.045 (0.000) & 0.009\\
$(\lambda, \delta) = (2.0, 2.0)$  &0.019 (0.000) &0.019 (0.000) &0.030 (0.005) &0.031 (0.005) &0.049 (0.000) & 0.015\\
\end{tabular}}}
\end{table}

\subsection{Additional Simulations}
\label{subsec: additional simu}
We conducted abundant additional simulations and reported results in Supplementary Material F.2 through F.7. In particular, in Supplementary Material F.2, we compared the performance of IV-PILE and OWL when relevant conditional probabilities in $L(\mathbf{X})$ and $U(\mathbf{X})$ were fit via a misspecified multinomial logistic regression model, and with random forest models with node sizes equal to $5$ or $10$. Qualitative behaviors described in Section \ref{subsec: simulation results} still held, and the IV-PILE algorithm seemed to be robust against misspecification of models used to fit relevant probabilities in $L(\mathbf{X})$ and $U(\mathbf{X})$. We also observed that using a larger node size in a random forest model seemed to largely improve the performance in some scenarios. In Supplementary Material F.3, we reported simulation results with larger training sample size $n_{\text{train}}$. In Supplementary Material F.4, we repeated a subset of simulation studies with a sample-splitting version of the IV-PILE algorithm, whose theoretical properties were studied in Section \ref{sec: theory}. We found that the sample-splitting version of the IV-PILE had slightly inferior finite-sample performance compared to the non-splitting version when $n_{\text{train}}$ is small; however, the sample-splitting version still largely outperformed the non-sample-splitting OWL estimator in simulation settings considered here. 

In Supplementary Material F.5, we varied the association between the IV $Z$ and the treatment $A$. We found that the performance of IV-PILE became worse when the association between the IV and the treatment became smaller. This observation aligns very well with Proposition \ref{prop: approx error} and our intuition because a weaker IV corresponds to less informative and wider partial identification intervals. In Supplementary Material F.6, we allowed the IV to violate the exclusion restriction assumption and have a direct effect on the outcome. We found that the IV-PILE estimator seemed to be robust to slight violation of the exclusion restriction assumption. Finally, in Supplementary Material F.7, we considered settings where the outcome was continuous but bounded, and estimated the partial identification interval using the Manski-Pepper bounds. We found that the qualitative conclusions that held for a binary outcome still held for the continuous but bounded outcome.

\section{Differential Impact of Delivery Hospital on Preemies' Outcomes Revisited}
\label{sec: application}
We now revisit the NICU data and apply our developed method to it. We considered data describing all premature babies in the Commonwealth of Pennsylvania from the year $1995$ to $2005$, with the following observed covariates describing mothers and their preemies: birth weight, gestational age in weeks, age of mother, insurance type of mother (fee for service, HMO, federal/state, other, uninsured), mother's race (white, African American, Hispanic, other), prenatal care, mother's education level, mother's parity, and the following covariates describing the zip code the mother lives in: median income, median home value, percentage of people who rent, percentage below poverty, percentage with high school degree, percentage with college degree. The treatment is $1$ if the baby is delivered at a hospital with high-level NICUs and $0$ otherwise. The treatment is believed to be still confounded because there are other important covariates not accounted for, e.g., the severity of mother's comorbidities. Mothers with severe comorbidities are more likely to be sent to high-level NICUs and their babies are at higher risk.

To resolve this concern, we followed \citet{lorch2012differential} and used the differential travel time, defined as mother's travel time to the nearest high-level NICU minus time to the nearest low-level NICU, as an instrumental variable. We constructed a binary IV $Z$ out of the excess travel time as follows: $Z = 1$ if excess travel time in its highest $10$ percentile, and $Z = -1$ if in its lowest $10$ percentile. The outcome of interest is premature infant mortality, including both fetal and non-fetal death. Our goal is to estimate an individualized treatment rule that recommends whether to send mothers to hospitals with high-level NICUs based on her observed covariates. In the case of having a premature baby, mothers should be directed to a hospital with high-level NICUs instead of the \emph{closest} hospital only when a high-level NICU is \emph{significantly} better for preemie mortality. As has been discussed in Section \ref{subsec: recast prob in semi-sup learning}, we let $\Delta$ control for what margin is significant. In practice, $\Delta$ should be informed based on some practical knowledge of the situation, e.g., how far the hospital with high-level NICUs is, and how urgent the situation is. 

Before proceeding to estimation, we saved $5,000$ data points as testing data and used the rest $25,702$ as training data. We considered two IV assumption sets: $\mathcal{C}_{\text{BP}}$ that underpins the Balke-Pearl bound and $\mathcal{C}_{\text{Sid}}$ that underpins the Siddique bound (see Section \ref{sec: examples of C and application to NICU}). Table \ref{tbl: size of D_l and D_ul real data} summarizes the sizes of the labeled and unlabeled parts of the training data, i.e., $|\mathcal{D}_l|$ and $|\mathcal{D}_{ul}|$, under assumption sets $\mathcal{C}_{\text{BP}}$ and $\mathcal{C}_{\text{Sid}}$ and for assorted $\Delta$ values. We observed that the additional ``Correct Non-Compliant Decision'' assumption in $\mathcal{C}_{\text{Sid}}$ helped significantly reduce the length of the partial identification intervals for our data, and therefore largely reduced $|\mathcal{D}_{ul}|$. Nevertheless, $|\mathcal{D}_{ul}| > 0$ in all cases for both $\mathcal{C}_{\text{BP}}$ and $\mathcal{C}_{\text{Sid}}$.

\begin{table}[h]
\centering
\caption{\small Size of $\mathcal{D}_l$ and $\mathcal{D}_{ul}$, the labeled and unlabeled part of the training data, under assumption sets $\mathcal{C}_{\text{BP}}$ and $\mathcal{C}_{\text{Sid}}$.}
\label{tbl: size of D_l and D_ul real data}
\begin{tabular}{lcc|cc|cc}\hline \\[-0.8em]
   &\multicolumn{2}{c}{$\Delta = 0$} &\multicolumn{2}{c}{$\Delta = 0.02$} &\multicolumn{2}{c}{$\Delta = 0.05$}\\[-0.9em] \\ \\[-1em] \\[-0.8em]
 Assumption Set & $|\mathcal{D}_l|$ & $|\mathcal{D}_{ul}|$ & $|\mathcal{D}_l|$ & $|\mathcal{D}_{ul}|$ & $|\mathcal{D}_l|$ & $|\mathcal{D}_{ul}|$ \\[-0.9em] \\ \\[-1em] \\[-0.8em]
$\mathcal{C}_{\text{BP}}$: Balke-Pearl Bound  &2,132 &23,617 & 4,926 & 20,776 &7,463 & 18,239 \\
$\mathcal{C}_{\text{Sid}}$: Siddique Bound  & 23,253 &2,449 &25,549 &153 & 25,535 & 167\\ \\[-0.8em]\hline 
\end{tabular}
\end{table}

We applied the developed IV-PILE approach to the training data, and used a $5$-fold cross validation to select the tuning parameters $\lambda_n$ (controlling the model complexity) and $\sigma$ (Gaussian kernel parameter) on a logarithmic grid from $10^{-3}$ to $10^3$. Let $\widehat{f}_{\text{BP}}$ denote the estimated ITR with optimal tuning parameters under the assumption set $\mathcal{C}_{\text{BP}}$ and $\widehat{f}_{\text{Sid}}$ under $\mathcal{C}_{\text{Sid}}$. We then applied $\widehat{f}_{\text{BP}}$ and $\widehat{f}_{\text{Sid}}$ on the $5,000$ testing data and estimated $\mathcal{R}_{\text{upper}}$. When $\Delta = 0$, we had $\widehat{\mathcal{R}}_{\text{upper}}(\widehat{f}_{\text{BP}}) = 0.149$ and $\widehat{\mathcal{R}}_{\text{upper}}(\widehat{f}_{\text{Sid}}) = 0.020$. This suggests that $\widehat{f}_{\text{BP}}$ would incur a weighted misclassification error \emph{at most} as large as $0.149$ under the four core IV assumptions. If we further assumed the ``Correct Non-Compliant Decision'' assumption as in $\mathcal{C}_{\text{Sid}}$, the estimated ITR $\widehat{f}_{\text{Sid}}$ would incur a weighted misclassification error \emph{at most} as large as $0.020$. It is not surprising that the expected worst-case loss is much smaller under $\mathcal{C}_{\text{Sid}}$, given that the additional assumption in $\mathcal{C}_{\text{Sid}}$ largely reduced the length of $[L(\mathbf{X}), U(\mathbf{X})]$. Importantly, although the optimal rule is not identifiable under $\mathcal{C}_{\text{upper}}$ or $\mathcal{C}_{\text{Sid}}$, we can estimate the worst-case risk associated with the IV-PILE estimator, and this gives practitioners important guidance: the learned treatment rule is potentially useful and beneficial when the identification assumption set is agreed upon by experts, and the worst-case risk is deemed reasonable.

\section{Discussion}
\label{sec: discussion and future works}
We study in detail the problem of estimating individualized treatment rules with a valid instrumental variable in this article. We have two major contributions. First, we point out the connection and a fundamental distinction between ITR estimation problems with and without an IV: both problems can be viewed as a classification task; however, the partial identification nature of an IV analysis creates a third latent class, those for whom we cannot assert if the treatment is beneficial or harmful, in addition to those who would ``benefit'' or ``not benefit'' from the treatment. This perspective provides a \emph{unifying} framework that facilitates thinking and framing the problem under distinct and problem-specific IV identification assumptions. 

Second, we approach this unique classification problem by defining a new notion of ``IV-optimality'': an IV-optimal rule minimizes the worst-case weighted misclassification error with respect to the putative IV and under the set of IV identification assumptions. IV-optimality is a sensible criterion that is always well-defined, and an IV-optimal rule can be estimated even under minimal IV identification assumptions and mild modeling assumptions. Our proposed IV-PILE estimator estimates such an IV-optimal rule, and may be advantageous compared to naively applying OWL or similar methods to observational data when NUCA fails, or when the putative IV does not allow point identifying the conditional average treatment effect. Although the focus of the article is estimating ITRs using observational data, the method developed here also applies to randomized control trials with individual noncompliance, as is commonly seen in clinical decision support systems. 

Works most related to our proposed approach are \citet{kallus2018confounding} and \citet{kallus2018interval}, both of which consider the problem of improving a baseline policy when a $\Gamma$-sensitivity analysis model is used to control the degree of unmeasured confounding. Both \citet{kallus2018confounding} and \citet{kallus2018interval} consider minimizing the maximum risk \emph{relative to} the baseline policy, and the CATE is also partially identified under the prescribed sensitivity analysis model. There are two main differences between their approach and ours. First, their approach necessarily requires a baseline policy/ITR and their derived policy/ITR is only guaranteed to do no worse than this baseline under their prescribed sensitivity analysis model. On the other hand, our approach does not require a baseline, and our method can be thought of as delivering a reasonably ``good" baseline policy/ITR. Second, their ``improved policy" always \emph{mimics} the baseline when the CATE under the sensitivity analysis model covers $0$. On the other hand, our IV-PILE estimator has a very different target, i.e., ``IV-optimality", and would recommend a treatment based on the partial identification region alone. One promising research direction is to study how to improve a baseline policy/ITR using one or several valid instrumental variables, instead of relying on a sensitivity analysis model.

Finally, we outline several broad future directions. First and foremost, it is of great interest to develop alternative optimality criteria when an IV only partially identifies the conditional average treatment effect, and compare these alternative criteria with IV-optimality. Second, it is of great importance to restrict the function class under consideration to some parsimonious and scientifically meaningful classes, e.g., the class of decision trees as considered in \citet{laber2015tree}, and the decision lists as considered in \citet{zhang2015using} and \citet{zhang2018interpretable}, and develop more \emph{interpretable} treatment rules under the IV setting. The ``IV-optimality'' developed in this article is still a relevant criterion for such an interpretable decision rule. Third, the ``min-max'' approach as developed in this article can be made less conservative in some settings. One possibility is to consider additional structural assumptions on $C(\mathbf{x})$. For instance, it is conceivable that $C(\mathbf{x})$, subject to $L(\mathbf{x}) \preceq C(\mathbf{x}) \preceq U(\mathbf{x})$, is smooth in $\mathbf{x}$. Lastly, instead of the single-decision setting considered in this article, it is interesting to consider multi-stage problems, i.e, \emph{dynamic treatment rules} (\citealp{murphy2001marginal}; \citealp{murphy2003optimal}; \citealp{robins2004optimal}; \citealp{moodie2007demystifying}; \citealp{zhao2011reinforcement}), and investigate learning optimal dynamic treatment rules with a potentially time-varying instrumental variable. 

\section*{Acknowledgement}
The authors would like to thank Dylan S. Small and reading group participants at the University of Pennsylvania for helpful thoughts and feedback. The authors would like to acknowledge Scott A. Lorch for access to the NICU data.

\begin{center}
{\large\bf SUPPLEMENTARY MATERIALS}
\end{center}

\begin{description}

\item[Online Supplementary Materials] 
Supplementary Material A reviews partial identification bounds results for continuous but bounded outcome. Supplementary Material B derives the linear and nonlinear decision rules and solves the associated optimization problems. Supplementary Material C contains proofs of Proposition 1-3 and Theorem 1. Supplementary Material D contains proof of Proposition 4 and how to estimate partial identification intervals. Supplementary Material E contains proofs of Lemmas and Theorem 2. Supplementary Material F contains additional simulation results. Supplementary Materials G constructs simple plug-in estimators of IV-optimal rules and proves that they are minimax optimal. \textsf{R} package \textsf{ivitr} implements the proposed method and can be downloaded via the Comprehensive \textsf{R} Archive Network (CRAN).
\end{description}

\bibliographystyle{rss}
\bibliography{IV_ITR}

\clearpage
\pagenumbering{arabic}

\setcounter{figure}{0}  
\setcounter{table}{0}  

  \begin{center}
    {\LARGE\bf Online Supplementary Materials for ``Estimating Optimal Treatment Rules with an Instrumental Variable: A Partial Identification Learning Approach" by Hongming Pu and Bo Zhang}
\end{center}
  \medskip
  
 \begin{abstract}
     Supplementary Material A reviews partial identification bounds results for continuous but bounded outcome (\citealp{manski1998monotone}). Supplementary Material B plots the surrogate loss function, derives the linear/nonlinear decision rules, and solves the associated optimization problems. Supplementary Material C contains proofs of Proposition 1-3 and Theorem 1. Supplementary Material D contains proof of Proposition 4 and how to estimate $L(\bm X)$ and $U(\bm X)$ so that Assumption 5 is satisfied. Supplementary Material E contains proofs of Lemmas and Theorem 2. Supplementary Material F contains a discussion of assumptions underpinning the identification results of \citet{cui2019semiparametric}, and additional simulation results, including simulation results when relevant conditional probabilities are estimated via simple parametric models or random forest with different node sizes, when a larger training sample size $n_{\text{train}}$ is used, when a sample-splitting version of the IV-PILE algorithm is used, when the IV strength varies, when the exclusion restriction assumption is mildly violated, and when the outcome is continuous but bounded and the Manski-Pepper bound is used. Finally, Supplementary Material G constructs simple plug-in estimators for IV-optimal rules and proves that they are minimax optimal.
 \end{abstract}

\begin{center}
    {\large\bf Supplementary Material A: Partial Identification Bounds for Continuous but Bounded Outcomes and Multilevel Instrumental Variables}
\end{center}

We first consider the simple case with a binary IV $Z$, a binary treatment $A$, a continuous outcome $Y$, and no measured confounder $\mathbf{X}$. \citet{manski1998monotone} considered the following \emph{monotone instrumental variable} (MIV) assumption and \emph{boundedness outcome assumption}:\\

\textbf{MIV Assumption}: $Z$ is a binary monotone instrumental variable in the sense of mean-monotonicity if, for $a \in \{0, 1\}$, 
\begin{equation}
    \label{eqn: MIV assumption}
    \mathbb{E}[Y(a) \mid Z = 1] \geq \mathbb{E}[Y(a) \mid Z = 0].
\end{equation}

\textbf{Boundedness Outcome Assumption}: $Y$ is a bounded outcome such that $Y \in [K_0, K_1]$.

Under the MIV assumption and the boundedness outcome assumption, \citet{manski1998monotone} showed that the marginal mean couterfactual outcome $\mathbb{E}[Y(a)]$, $a = 0, 1$, satisfies:
\begin{equation}
\label{eqn: manski-pepper bound}
\begin{split}
    &\sum_{z \in \{0, 1\}}P(Z = z)\left\{ \sup_{z_1 \leq z}\left[\mathbb{E}\{Y \mid Z = z_1, A = a\} \cdot P(A = a \mid Z = z_1) + K_0 \cdot P(A = 1 - a \mid Z = z_1)\right] \right\} \\
    &\hspace{7cm}\leq \mathbb{E}[Y(a)] \leq\\
    &\sum_{z \in \{0, 1\}}P(Z = z)\left\{ \inf_{z_2 \geq z}\left[\mathbb{E}\{Y \mid Z = z_2, A = a\} \cdot P(A = a \mid Z = z_2) + K_1 \cdot P(A = 1 - a \mid Z = z_2)\right]\right\} \\
\end{split}
\end{equation}
Once the upper and lower bound on $\mathbb{E}[Y(1)]$ and $\mathbb{E}[Y(0)]$ are obtained, the lower (upper) bound on the ATE = $\mathbb{E}[Y(1)] - \mathbb{E}[Y(0)]$ follows immediately by subtract the upper (lower) bound on $\mathbb{E}[Y(0)]$ from the lower (upper) bound on $\mathbb{E}[Y(1)]$. 
 
Now suppose that we further have observed covariates $\mathbf{X}$. We can assume that MIV assumption holds within each strata formed by the observed covariates $\mathbf{X}$:
\begin{equation}
    \label{eqn: MIV assumption with x}
    \mathbb{E}[Y(a) \mid \mathbf{X} = \mathbf{x}, Z = 1] \geq \mathbb{E}[Y(a) \mid \mathbf{X} = \mathbf{x}, Z = 0],
\end{equation}
and the Manski-Pepper bound \eqref{eqn: manski-pepper bound} can be modified by replacing $P(Z = z)$ with $P(Z = z \mid \mathbf{X})$, $\mathbb{E}\{Y \mid Z = z, A = a\}$ with $\mathbb{E}\{Y \mid Z = z, A = a, \mathbf{X}\}$, and $P(A = 1 - a \mid Z = z)$ with $P(A = 1 - a \mid Z = z_1, \mathbf{X})$. In practice, these conditional probabilities and expectations can be fit using flexible machine learning tools or simple parametric models as discussed in Section \ref{sec: examples of C and application to NICU}; in this way, partial identification intervals $[L(\mathbf{X}), U(\mathbf{X})]$ for a continuous but bounded outcome $Y$ can be obtained. MIV assumption and the Manski-Pepper bound \eqref{eqn: manski-pepper bound} can be further generalized to handle a multi-leveled IV. Let $\mathcal{Z}$ denote an ordered set of values $Z$ can take. IV $Z$ is said to be a monotone multi-leveled instrumental variable if for all $\mathbf{X} = \mathbf{x}$ and all $(z_1, z_2) \in \mathcal{Z} \times \mathcal{Z}$ such that $z_2 \geq z_1$, the following holds:
\begin{equation}
    \label{eqn: MIV assumption with multilevel Z}
    \mathbb{E}[Y(a) \mid \mathbf{X} = \mathbf{x}, Z = z_2] \geq \mathbb{E}[Y(a) \mid \mathbf{X} = \mathbf{x}, Z = z_1].
\end{equation}
The Manksi-Pepper bound \eqref{eqn: manski-pepper bound} can be adapted to a monotone multi-leveled IV by replacing $\sum_{z \in \{0, 1\}}$ with $\sum_{z \in \mathcal{Z}}$. For other partial identification bounds results for continuous but bounded outcome under slightly different IV identification assumptions, see \citet{kitagawa2009identification} and \citet{huber2017sharp}. 

Finally, we explicitly write down the Manski-Pepper bounds for a binary IV $Z$, a binary treatment $A$, and a continuous but bounded outcome $Y$ under the MIV assumption and the boundedness outcome assumptions for reference. Define
\begin{equation}
\begin{split}
    \psi_{Z = z, A = a}(Y, \mathbf{X}; K) = &\mathbb{E}\{Y \mid Z = z, A = a \mid \mathbf{X}\}\cdot P(A = a \mid Z = z, \mathbf{X})\\ 
    + &K \cdot P(A = 1 - a \mid Z = z, \mathbf{X}).
\end{split}
\end{equation}
We then have
\begin{equation}
    \begin{split}
        &P(Z = 0 \mid \mathbf{X}) \cdot \psi_{0, 0}(Y, \mathbf{X}; K_0) + P(Z = 1 \mid \mathbf{X})\cdot \max\{\psi_{0, 0}(Y, \mathbf{X}; K_0), \psi_{1, 0}(Y, \mathbf{X}; K_0)\} \\
        &\hspace{6 cm}\leq \mathbb{E}\{Y(0) \mid \mathbf{X}\} \leq \\
        &P(Z = 0 \mid \mathbf{X}) \cdot \min\{\psi_{0, 0}(Y, \mathbf{X}; K_1), \psi_{1, 0}(Y, \mathbf{X}; K_1)\} + P(Z = 1 \mid \mathbf{X}) \cdot \psi_{1, 0}(Y, \mathbf{X}; K_1),
    \end{split}
\end{equation}
and 
\begin{equation}
    \begin{split}
        &P(Z = 0 \mid \mathbf{X}) \cdot \psi_{0, 1}(Y, \mathbf{X}; K_0) + P(Z = 1 \mid \mathbf{X})\cdot \max\{\psi_{0, 1}(Y, \mathbf{X}; K_0), \psi_{1, 1}(Y, \mathbf{X}; K_0)\} \\
        &\hspace{6 cm}\leq \mathbb{E}\{Y(1) \mid \mathbf{X}\} \leq \\
        &P(Z = 0 \mid \mathbf{X}) \cdot \min\{\psi_{0, 1}(Y, \mathbf{X}; K_1), \psi_{1, 1}(Y, \mathbf{X}; K_1)\} + P(Z = 1 \mid \mathbf{X}) \cdot \psi_{1, 1}(Y, \mathbf{X}; K_1).
    \end{split}
\end{equation}
Finally, $\text{CATE}(\mathbf{X}) = \mathbb{E}\{Y(1) \mid \mathbf{X}\} - \mathbb{E}\{Y(0) \mid \mathbf{X}\}$ has a lower bound:
\begin{equation}
\label{eqn: Mansk-Pepper lower bound}
\begin{split}
    L(\mathbf{X}) = &P(Z = 0 \mid \mathbf{X}) \cdot \psi_{0, 1}(Y, \mathbf{X}; K_0) + P(Z = 1 \mid \mathbf{X})\cdot \max\{\psi_{0, 1}(Y, \mathbf{X}; K_0), \psi_{1, 1}(Y, \mathbf{X}; K_0)\} \\
    - &P(Z = 0 \mid \mathbf{X}) \cdot \min\{\psi_{0, 0}(Y, \mathbf{X}; K_1), \psi_{1, 0}(Y, \mathbf{X}; K_1)\} - P(Z = 1 \mid \mathbf{X}) \cdot \psi_{1, 0}(Y, \mathbf{X}; K_1),
\end{split}
\end{equation}
and an upper bound
\begin{equation}
\label{eqn: Mansk-Pepper upper bound}
\begin{split}
    U(\mathbf{X}) = &P(Z = 0 \mid \mathbf{X}) \cdot \min\{\psi_{0, 1}(Y, \mathbf{X}; K_1), \psi_{1, 1}(Y, \mathbf{X}; K_1)\} + P(Z = 1 \mid \mathbf{X}) \cdot \psi_{1, 1}(Y, \mathbf{X}; K_1) \\
    - &P(Z = 0 \mid \mathbf{X}) \cdot \psi_{0, 0}(Y, \mathbf{X}; K_0) - P(Z = 1 \mid \mathbf{X})\cdot \max\{\psi_{0, 0}(Y, \mathbf{X}; K_0), \psi_{1, 0}(Y, \mathbf{X}; K_0)\} ,
\end{split}
\end{equation}

In practice, we need to specify a range $[K_0, K_1]$ for the outcome of interest $Y$, and estimate $P(Z = z \mid \mathbf{X})$ and each part in $\psi_{Z = z, A = a}(Y, \mathbf{X}; K)$ using flexible machine learning tools or parsimonious parametric models. We evaluate the performance of the IV-PILE estimator when the outcome is continuous but bounded in Supplementary Material F.

\clearpage
\begin{center}
    {\large\bf Supplementary Material B: Deriving Linear/Nonlinear Decision Rules and Solving Associated Optimization Problems}
\end{center}

\subsection*{B.1: Plot of the Surrogate Loss}

\begin{figure}[h]
    \centering
    \caption{\small An illustration of the original 0-1-based loss and the corresponding surrogate loss for four types of $[L(\mathbf{x}), U(\mathbf{x})]$. Blue dashed lines represent the original 0-1-based loss. Red solid lines represent the surrogate loss.}
\label{fig: surrogate loss and 0-1 loss}
     \subfloat[${[}L(\mathbf{x}), U(\mathbf{x}){]} = {[}1, 3{]}$ ]{\includegraphics[width = 0.45\columnwidth, height = 6cm]{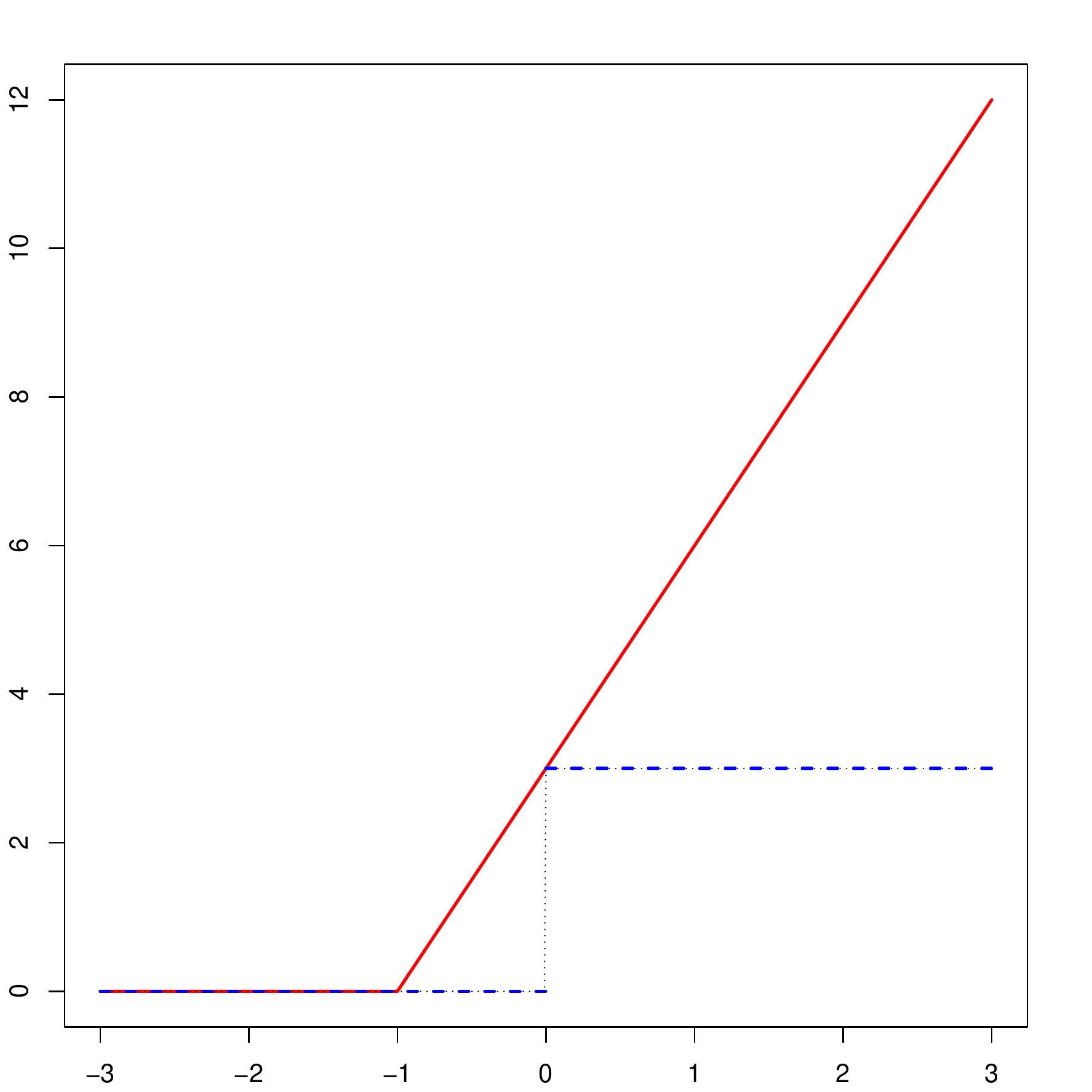}}
     \subfloat[${[}L(\mathbf{x}), U(\mathbf{x}){]} = {[}-3, -1{]}$]{\includegraphics[width = 0.45\columnwidth, height = 6cm]{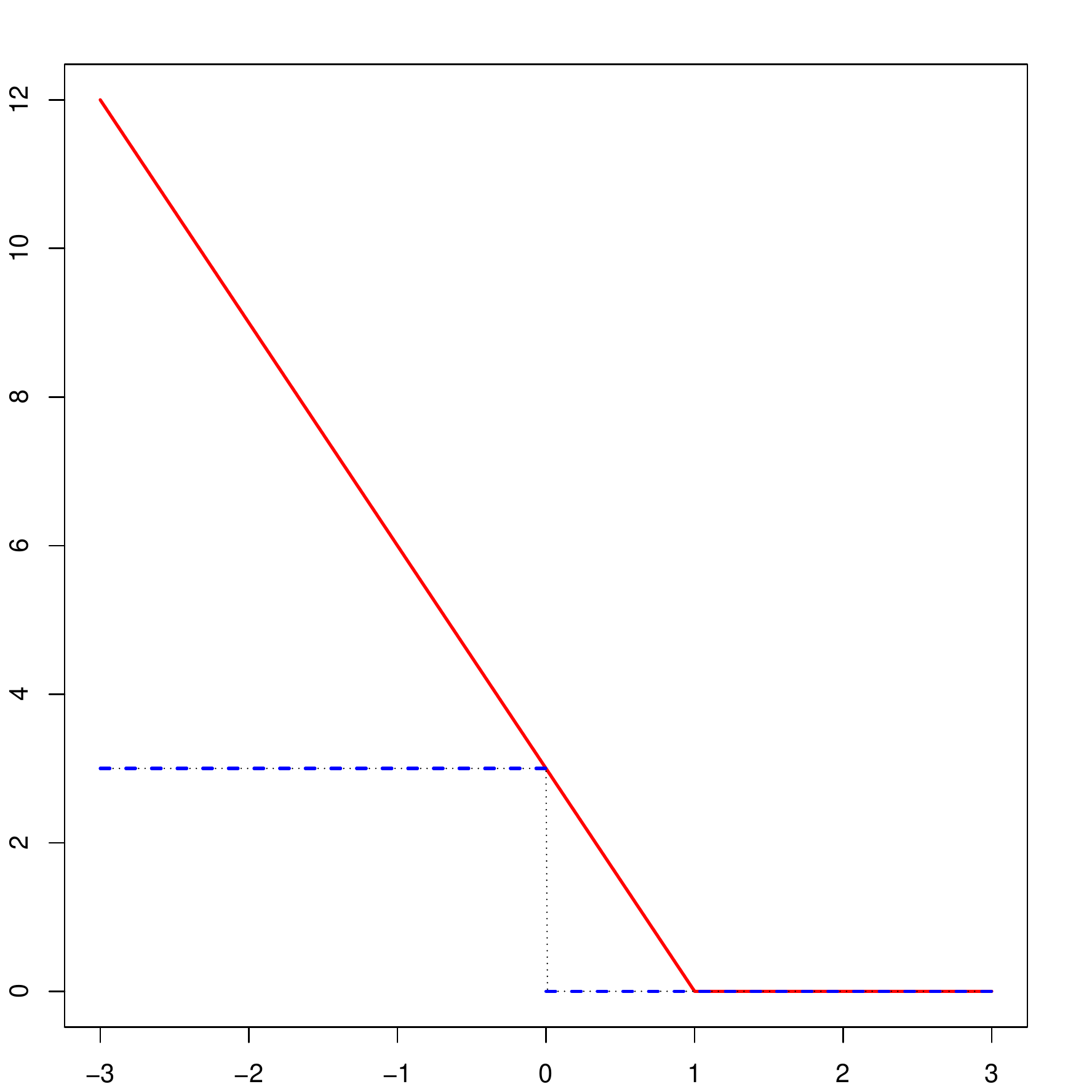}}\\
     \subfloat[${[}L(\mathbf{x}), U(\mathbf{x}){]} = {[}-1, 3{]}$]{\includegraphics[width = 0.45\columnwidth, height = 6cm]{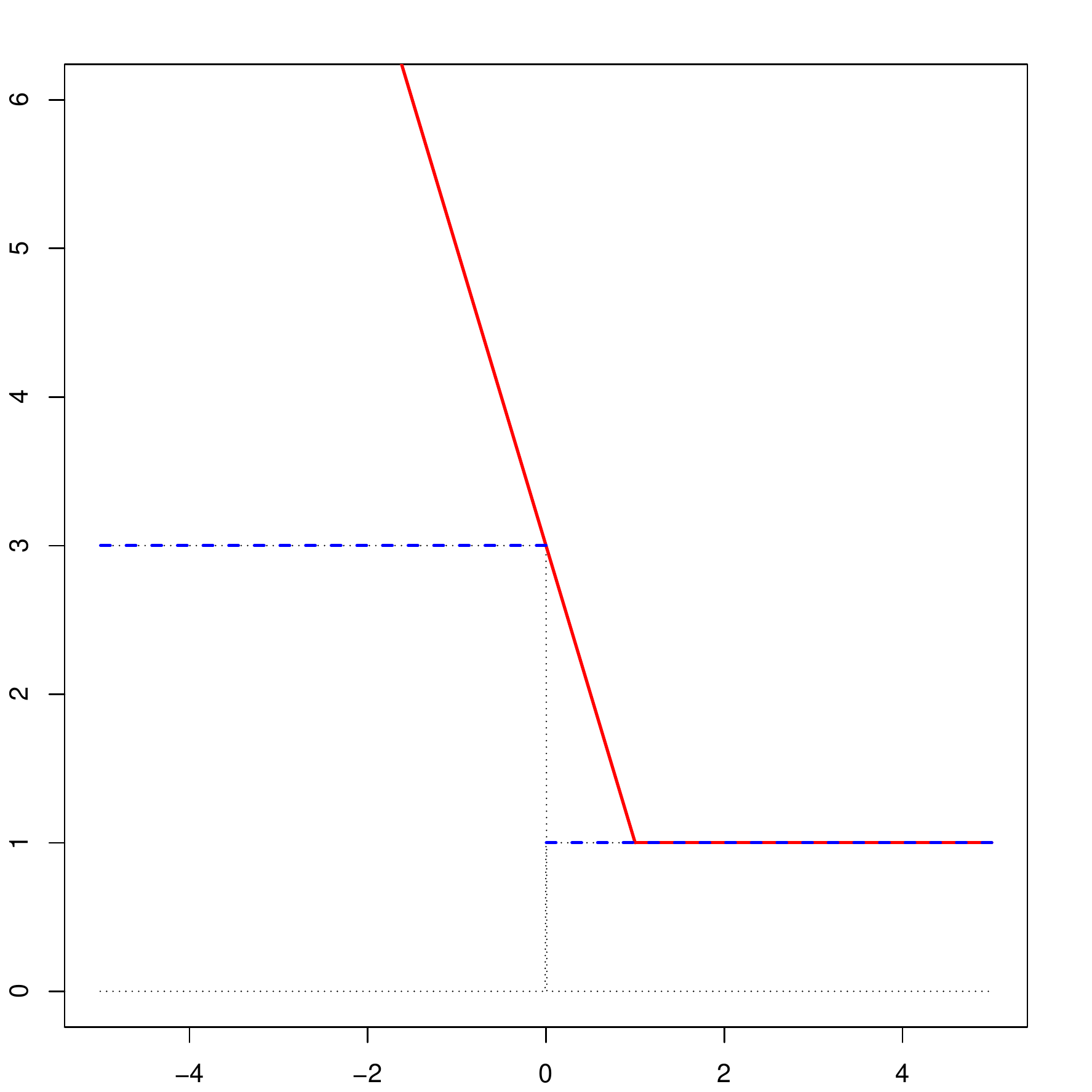}}
     \subfloat[${[}L(\mathbf{x}), U(\mathbf{x}){]} = {[}-3, 1{]}$]{\includegraphics[width = 0.45\columnwidth, height = 6cm]{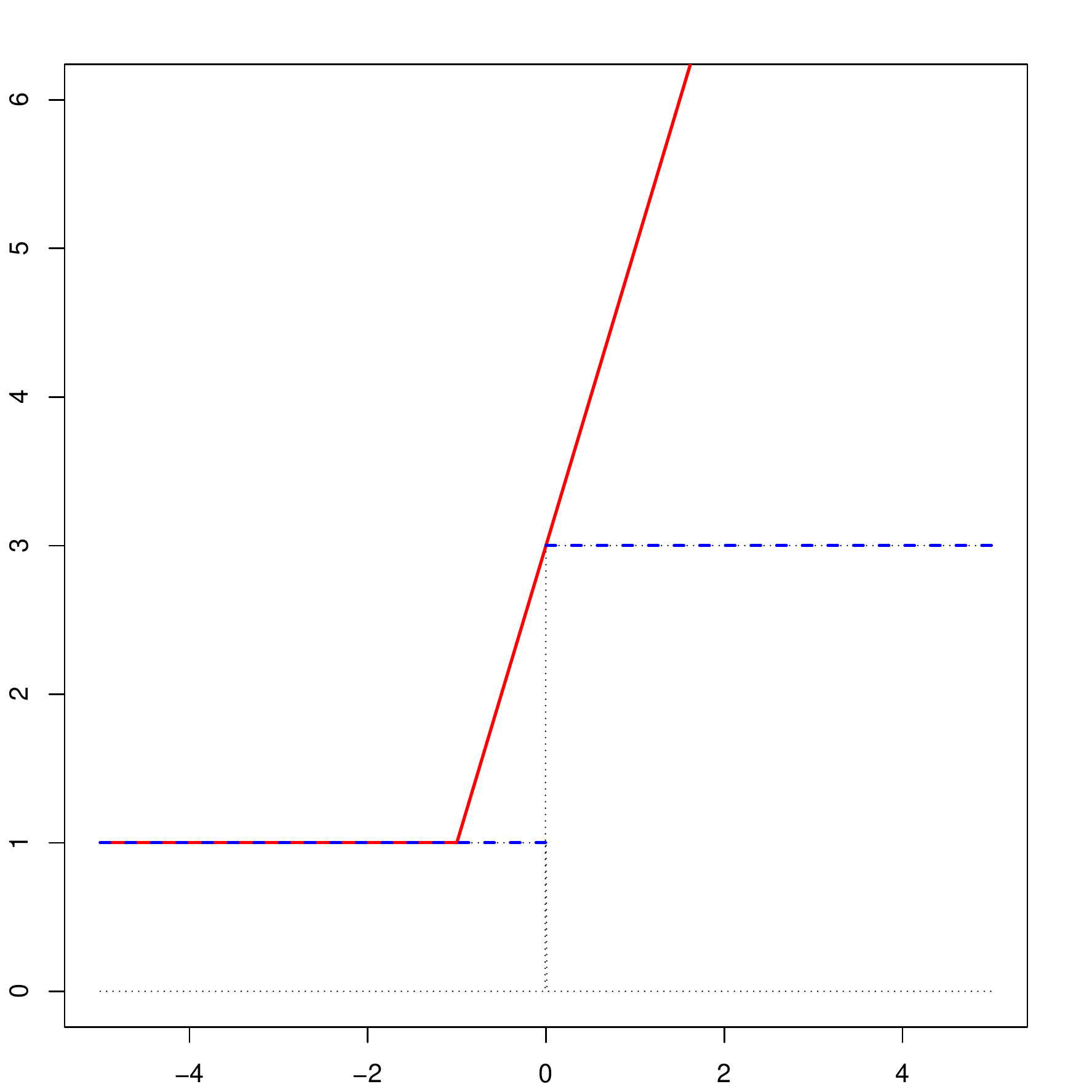}}
\end{figure}

\subsection*{B.2: Deriving Linear/Nonlinear Decision Rules} 
Recall that under the surrogate loss, the objective function is:
\begin{equation}
\begin{split}
    \label{eqn app: empirical version of risk function}
    \widehat{f}(\cdot) = \underset{f \in \mathcal{F}}{\text{argmin}} \sum_{i = 1}^l &\left[\widehat{U}(\mathbf{X}_i)\cdot \phi\{f(\mathbf{X}_i)\}\cdot \mathbbm{1}\big\{\widehat{L}(\mathbf{X}_i) > 0\big\} +
    \{-\widehat{L}(\mathbf{X}_i)\}\cdot \phi\{-f(\mathbf{X}_i)\}\cdot \mathbbm{1}\big\{\widehat{U}(\mathbf{X}_i) < 0\big\}\right] \\
    + \sum_{i = l+1}^n &\bigg[\left[|\widehat{L}(\mathbf{X}_i)| + (|\widehat{U}(\mathbf{X}_i)| - |\widehat{L}(\mathbf{X}_i)|)\cdot\phi\{f(\mathbf{X}_i)\}\right]\cdot\mathbbm{1}\big\{|\widehat{U}(\mathbf{X}_i)| \geq |\widehat{L}(\mathbf{X}_i)|\big\}\\
    &+ \left[|\widehat{U}(\mathbf{X}_i)| + (|\widehat{L}(\mathbf{X}_i)| - |\widehat{U}(\mathbf{X}_i)|)\cdot\phi\{-f(\mathbf{X}_i)\}\right]\cdot\mathbbm{1}\big\{|\widehat{L}(\mathbf{X}_i)| > |\widehat{U}(\mathbf{X}_i)|\big\}\bigg]\\
    &\hspace{-1 cm}+\frac{n\lambda_n}{2} ||f||^{2}.
    \end{split}
\end{equation}
We now derive an efficient algorithm that outputs $\widehat{f}$ as the solution to optimization problem (\ref{eqn app: empirical version of risk function}). Suppose for the moment that $f(\cdot)$ is a linear function of the form $f(\mathbf{x}) = \beta^T \mathbf{x} + \beta_0$. For ease of exposition, we write $\widehat{U}_i = \widehat{U}(\mathbf{X}_i)$ and $\widehat{L}_i = \widehat{L}(\mathbf{X}_i)$. Let $\mathcal{A}_{\text{p}}$ be an index set for subjects with $[\widehat{L}_i, \widehat{U}_i] > 0$, $\mathcal{A}_{\text{n}}$ those with $[\widehat{L}_i, \widehat{U}_i] < 0$, $\mathcal{A}_{\text{sp}}$ those with $[\widehat{L}_i, \widehat{U}_i] \ni 0$ and $|\widehat{L}_i|\leq |\widehat{U}_i|$, and $\mathcal{A}_{\text{sn}}$ those with $[\widehat{L}_i, \widehat{U}_i] \ni 0$ and $|\widehat{L}_i| > |\widehat{U}_i|$.  Objective function (\ref{eqn: empirical version of risk function}) becomes:
\begin{equation}
\begin{split}
    \label{eqn: objective linear case}
    (\widehat{\beta}, \widehat{\beta_0}) =  \underset{\beta, \beta_0}{\text{argmin}}~&\sum_{i \in \mathcal{A}_{\text{p}}} \widehat{U}_i\cdot\{1 - (\beta^T \mathbf{X}_i + \beta_0)\}^+ + \sum_{i \in \mathcal{A}_{\text{n}}} (- \widehat{L}_i)\cdot \{1 + (\beta^T \mathbf{X}_i + \beta_0)\}^+ \\
    &+\sum_{i \in \mathcal{A}_{\text{sp}}}\left[|\widehat{L}_i| + (|\widehat{U}_i| - |\widehat{L}_i|)\cdot\{1 - (\beta^T \mathbf{X}_i + \beta_0)\}^+ \right] \\
    &+\sum_{i \in \mathcal{A}_{\text{sn}}}\left[|\widehat{U}_i| + (|\widehat{L}_i| - |\widehat{U}_i|)\cdot\{1 + (\beta^T \mathbf{X}_i + \beta_0)\}^+
    \right]
    +\frac{n\lambda_n}{2}||f||^{2}.
\end{split}
\end{equation}
Let
\begin{equation}
\label{eqn: w_i def}
    \widehat{w}(\mathbf{X}_i) = |\widehat{U}_i|\cdot\mathbbm{1}\big\{\widehat{L}_i > 0\big\} + |\widehat{L}_i |\cdot\mathbbm{1}\big\{\widehat{U}_i<0 \big\} + \big||\widehat{U}_i | - |\widehat{L}_i |\big|\cdot\mathbbm{1}\big\{[\widehat{L}_i , \widehat{U}_i ] \ni 0\big\},~\forall i,
\end{equation}
and 
\begin{equation}
\label{eqn: e_i def}
    \widehat{e}(\mathbf{X}_i) = \mathbbm{1}\big\{\widehat{U}_i < 0\big\} - \mathbbm{1}\big\{\widehat{L}_i > 0\big\} - \text{sgn}\{|\widehat{U}_i | - |\widehat{L}_i |\}\cdot \mathbbm{1}\big\{[\widehat{L}_i , \widehat{U}_i ] \ni 0\big\},~\forall i.
\end{equation}
Optimization problem (\ref{eqn: objective linear case}) is reduced to the following weighted SVM form:
\begin{equation*}
\begin{split}
  (\widehat{\beta}, \widehat{\beta_0}) =  \underset{\beta, \beta_0}{\text{argmin}}~&\sum_{i = 1}^{n} \widehat{w}(\mathbf{X}_i) \cdot\{1 +\widehat{e}(\mathbf{X}_i)\cdot(\beta^T \mathbf{X}_i + \beta_0)\}^+,   
\end{split}    
\end{equation*}
which is equivalent to:
\begin{equation}
    \begin{split}
        \label{eqn: objective linear equivalence}
    &\underset{\beta, \beta_0}{\text{argmin}} ~~\sum_{i = 1}^n \mathcal{E}_i + \frac{n\lambda_n }{2}\|\beta\|^2 \\
\text{subject to}\quad & \mathcal{E}_i \geq \widehat{w}(\mathbf{X}_i) \cdot\{1 +\widehat{e}(\mathbf{X}_i)\cdot(\beta^T \mathbf{X}_i + \beta_0)\}, ~\forall i,\\
&\mathcal{E}_i \geq 0, ~\forall i, \\
    \end{split}
\end{equation}
where we used the fact that $\mathcal{E}_i \geq \max(a, b) \Leftrightarrow \{\mathcal{E}_i \geq a\} \cap \{\mathcal{E}_i \geq b\}$, and $\widehat{w}(\mathbf{X}_i)$ and $\widehat{e}(\mathbf{X}_i)$ are defined in (\ref{eqn: w_i def}) and (\ref{eqn: e_i def}), respectively.

As the optimization problem is transformed into a particular instance of weighted SVM (\citealp{vapnik2013nature}), it can be readily solved using standard solvers, just like the widely-used outcome weighted learning (OWL) approach proposed by \citet{zhao2012estimating}. Proposition \ref{prop: solution in linear case} gives the representation of the solution $(\widehat{\beta}, \widehat{\beta}_0)$ to the optimization problem defined in (\ref{eqn: objective linear equivalence}). 
\begin{customprop}{B1}\rm
\label{prop: solution in linear case}
Solution $\widehat{\beta}$ to the optimization problem (\ref{eqn: objective linear equivalence}) (and hence to (\ref{eqn: objective linear case})) has the following representation:
\begin{equation}
    \label{eqn: representation of beta in linear case}
    \widehat{\beta} = -\frac{1}{n\lambda_n}\sum_{i =1}^{n} q_i \widehat{w}(\mathbf{X}_i) \widehat{e}(\mathbf{X}_i) \mathbf{X}_i,    
\end{equation}
where $\widehat{w}(\mathbf{X}_i)$ and $\widehat{e}(\mathbf{X}_i)$ are defined in (\ref{eqn: w_i def}) and (\ref{eqn: e_i def}), and $\{q_i, i = 1, \cdots, n\}$ are solutions to the following quadratic programming problem:
\begin{equation}
\begin{split}
    \label{eqn: QP standard form}
    \underset{\mathbf{q}}{\text{min}}\qquad  &\frac{1}{2}\mathbf{q}^T D \mathbf{q} - d^T \mathbf{q} \\
    \text{subject to}\qquad\qquad &0\preceq \mathbf{q}\preceq 1,
\end{split}
\end{equation}
where
\[d=(w_1,w_2,\dots, w_n)\quad\text{and}\quad D_{ij} = \frac{1}{n\lambda_n}\widehat{w}(\mathbf{X}_i) \widehat{w}(\mathbf{X}_j) \widehat{e}(\mathbf{X}_i) \widehat{e}(\mathbf{X}_j)\langle X_i, X_j \rangle.
\]Finally, $\beta_0$ can be solved using the Karush-Kuhn-Tucker (KKT) conditions.  
\end{customprop}

Above derivation can be generalized to nonlinear decision rules. Suppose that $f(\cdot)$ resides in a \emph{reproducing kernel Hilbert space} (RKHS) $\mathcal{H}_{\mathcal{K}}$ associated with the kernel function $\mathcal{K}(\cdot, \cdot): \mathcal{X} \times \mathcal{X} \mapsto \mathbb{R}$. The Hilbert space $\mathcal{H}_{\mathcal{K}}$ is equipped with inner product $\langle \cdot, \cdot \rangle_{\mathcal{K}}$ and norm $\|\cdot\|_{\mathcal{K}}$. Proposition \ref{prop: representation in general case} gives the representation of the optimal rule for $f(x)$ in this case.

\begin{customprop}{B2}\rm
\label{prop: representation in general case}
Optimal decision function $f(x)$ has the following representation:
\begin{equation}
    \label{eqn: representation of beta in general case}
    f(x) = -\frac{1}{n\lambda_n}\sum_{i =1}^{n} q_i  \widehat{w}(\mathbf{X}_i) \widehat{e}(\mathbf{X}_i)  \langle\mathbf{X}_i, x\rangle_{\mathcal{K}} + \widehat{\beta}_{0},
\end{equation}
where
 $\{q_i, i=1,\dots, n\}$ are solutions to the same quadratic programming problem as in Proposition \ref{prop: solution in linear case} with $\langle \mathbf{X}_i, \mathbf{X}_j \rangle_{\mathcal{K}}$ in place of $\langle \mathbf{X}_i, \mathbf{X}_j \rangle$.
\end{customprop}

\subsection*{B.3: Proofs of Proposition \ref{prop: solution in linear case} and \ref{prop: representation in general case}}
We consider the linear case and the general case is analogous. Recall that the objective function in a linear decision boundary case can be rewritten as:
\begin{equation}
    \begin{split}
        \label{eqn: objective linear equivalence 2}
    &\underset{\beta, \beta_0}{\text{argmin}} ~~\sum_{i = 1}^n \mathcal{E}_i + \frac{n\lambda_n }{2}\|\beta\|^2 \\
\text{subject to}\qquad & \mathcal{E}_i \geq \widehat{w}(\mathbf{X}_i) \cdot\{1 +\widehat{e}(\mathbf{X}_i) (\beta^T \mathbf{X}_i + \beta_0)\},~\forall i,\\
&\mathcal{E}_i \geq 0, ~\forall i. \\
    \end{split}
\end{equation}

The Lagrangian of the above optimization problem is 
\begin{equation*}
\begin{split}
    L = &\sum_{i = 1}^n \mathcal{E}_i + \frac{n\lambda_n}{2}\|\beta\|^2 - \sum_{i =1}^{n}  p_i \mathcal{E}_i - \sum_{i =1}^{n} q_i \cdot\left[\mathcal{E}_i - \widehat{w}(\mathbf{X}_i) \cdot\{1 +\widehat{e}(\mathbf{X}_i) (\beta^T \mathbf{X}_i + \beta_0)\}\right], 
\end{split}
\end{equation*}
with $p_i, q_i \geq 0$, and $p_i , q_i$ defined for $i=1,\dots,n$.  

Let $\mathbf{p}$, $\mathbf{q}$, and $\mathbf{\mathcal{E}}$ denote the vector of $p_i$, $q_i$,  and $\mathcal{E}_i$, respectively. Let $g(\mathbf{p}, \mathbf{q}) = \underset{\beta, \beta_0, \mathbf{\mathcal{E}}}{\inf} L$. To minimize $L$, let \[
\frac{\partial L}{\partial \beta} = 0, \quad \frac{\partial L}{\partial \beta_0} = 0, \quad \frac{\partial L}{\partial \mathbf{\mathcal{E}}} = 0,
\] and we arrive at the following system of equations:
\begin{gather*}
        n\lambda_n\beta + \sum_{i =1}^{n} q_i  \widehat{w}(\mathbf{X}_i) \widehat{e}(\mathbf{X}_i)  \mathbf{X}_i =0,~\forall i, \\
        \sum_{i =1}^{n} q_i \widehat{w}(\mathbf{X}_i) \widehat{e}(\mathbf{X}_i) = 0,~\forall i \\
        1 - p_i - q_i = 0,~\forall i.  \\
\end{gather*}
Plug the above equations into the Lagrangian and we have the following dual problem: 
\begin{equation*}
\begin{split}
    \underset{0\preceq \mathbf{q}\preceq 1}{\max} & -\frac{1}{2n\lambda_n}\bigg|\bigg|\sum_{i =1}^{n} q_i  \widehat{w}(\mathbf{X}_i) \widehat{e}(\mathbf{X}_i)  \mathbf{X}_i\bigg|\bigg|^{2}+ \sum_{i =1}^{n} q_i \widehat{w}(\mathbf{X}_i). 
    \end{split}
\end{equation*}
 The dual problem can now be put in the following standard form of a quadratic programming (QP) problem with objective function $\underset{\mathbf{q}}{\text{min}}~~\frac{1}{2}\mathbf{q}^T D \mathbf{q} - d^T \mathbf{q}$, where
$$d=(w_1,w_2,\dots, w_n),$$
$$D_{ij} = \frac{1}{n\lambda_n}\widehat{w}(\mathbf{X}_i) \widehat{w}(\mathbf{X}_j)\widehat{e}(\mathbf{X}_i)\widehat{e}(\mathbf{X}_j)\langle X_i, X_j\rangle,$$
subject to $$0\preceq \mathbf{q}\preceq 1,$$
and linear equality and inequality constraints.

\bigskip
\bigskip
\bigskip
\begin{center}
{\large\bf Supplementary Material C: Proofs of Proposition 1-3 and Theorem \ref{thm: relation between surrogate and true loss}}
\end{center}

\subsection*{C.1: Proof of Proposition \ref{proposition: exchangeability}}
We will prove 
\[\mathbb{E}\left[\sup_{C'(\mathbf{X}) \in [L(\mathbf{X}), U(\mathbf{X})]}  |C'(\mathbf{X})|\cdot \mathbbm{1}\{\text{sgn}\{f(\mathbf{X})\}\} \neq \text{sgn}\{C'(\mathbf{X})\}\}\right]=\sup_{C'(\cdot):~ L \preceq C' \preceq U} \mathcal{R}(f; C'(\cdot)),
\label{eqn: proposition}\]
and Proposition \ref{proposition: exchangeability} follows immediately.\par
On one hand, for any function $C'(\cdot)$ s.t. $L(\cdot) \preceq C'(\cdot) \preceq U(\cdot)$, we have: 
\[\begin{split}
\mathcal{R}(f; C'(\cdot))&=\mathbb{E}\left[ |C'(\mathbf{X})|\cdot \mathbbm{1}\{\text{sgn}\{f(\mathbf{X})\} \neq \text{sgn}\{C'(\mathbf{X})\}\}\right]\\
&\leq 
\mathbb{E}\left[\sup_{C'(\mathbf{X}) \in [L(\mathbf{X}), U(\mathbf{X})]}  |C'(\mathbf{X})|\cdot \mathbbm{1}\{\text{sgn}\{f(\mathbf{X})\} \neq \text{sgn}\{C'(\mathbf{X})\}\}\right].\\
\end{split}
\]
\par
On the other hand, let $C^\ast(x)=L(x)\mathbbm{1}\{f(x)>0\}+U(x)\mathbbm{1}\{f(x)\leq 0  \}$. We have
\[\begin{split}
 &\mathbb{E}\left[\sup_{C'(\mathbf{X}) \in [L(\mathbf{X}), U(\mathbf{X})]}  |C'(\mathbf{X})|\cdot \mathbbm{1}\{\text{sgn}\{f(\mathbf{X})\} \neq \text{sgn}\{C'(\mathbf{X})\}\}\right] \\
 = &\mathcal{R}(f; C^\ast(\cdot)) \leq \sup_{C'(\cdot):~ L \preceq C' \preceq U} \mathcal{R}(f; C'(\cdot)).
 \end{split}
\]
Combine these two inequalities, and we have:
\[\mathbb{E}\left[\sup_{C'(\mathbf{X}) \in [L(\mathbf{X}), U(\mathbf{X})]}  |C'(\mathbf{X})|\cdot \mathbbm{1}\{\text{sgn}\{f(\mathbf{X})\} \neq \text{sgn}\{C'(\mathbf{X})\}\}\right] = \sup_{C'(\cdot):~ L \preceq C' \preceq U} \mathcal{R}(f; C'(\cdot)).
\]

\subsection*{C.2: Proof of Proposition \ref{prop: Bayes decision rule}}
Recall the risk function of a decision rule $f$ is 
\[
\mathcal{R}_{\text{upper}}(f) = \mathbb{E}\left[\max_{C'(\mathbf{X}) \in [L(\mathbf{X}), U(\mathbf{X})]}  |C'(\mathbf{X})|\cdot \mathbbm{1}\{\text{sgn}\{f(\mathbf{X})\} \neq \text{sgn}\{C'(\mathbf{X})\}\}\right].
\]
We now derive the Bayes decision rule. 
\begin{equation*}
    \begin{split}
      &\mathcal{R}_{\text{upper}}(f) \\
      =&  \mathbb{E}\left[\mathbb{E}\left[\max_{C'(\mathbf{X}) \in [L(\mathbf{X}), U(\mathbf{X})]} |C'(\mathbf{X})|\cdot \mathbbm{1}\big\{\text{sgn}\{f(\mathbf{X})\} \neq \text{sgn}\{C'(\mathbf{X})\}\big\}\mid \text{sgn}\{L(\mathbf{X})\}, \text{sgn}\{U(\mathbf{X})\}\right]\right]\\
      =&\mathbb{E}\bigg[|U(\mathbf{X})|\cdot \mathbbm{1}\big\{\text{sgn}\{f(\mathbf{X})\} \neq 1\big\}\cdot \mathbbm{1}\{L(\mathbf{X}) > 0\} + |L(\mathbf{X})|\cdot \mathbbm{1}\big\{\text{sgn}\{f(\mathbf{X})\} \neq -1\big\}\cdot \mathbbm{1}\{U(\mathbf{X}) < 0\} \\
      + &\max\big\{|L(\mathbf{X})|\cdot \mathbbm{1}\{\text{sgn}\{f(\mathbf{X})\}\} \neq -1\},~|U(\mathbf{X})|\cdot \mathbbm{1}\{\text{sgn}\{f(\mathbf{X})\} \neq 1\}\big\}\cdot \mathbbm{1}\big\{[L(\mathbf{X}), U(\mathbf{X})] \ni 0\big\}\bigg]. 
    \end{split}
\end{equation*}
Observe that 
\begin{equation*}
    \begin{split}
        &\max\big\{|L(\mathbf{X})|\cdot \mathbbm{1}\{\text{sgn}\{f(\mathbf{X})\}\} \neq -1\},~|U(\mathbf{X})|\cdot \mathbbm{1}\{\text{sgn}\{f(\mathbf{X})\} \neq 1\}\big\} \\
        =& |L(\mathbf{X})|\cdot \mathbbm{1}\big\{\text{sgn}\{f(\mathbf{X})\}\neq -1\big\} + |U(\mathbf{X})|\cdot \mathbbm{1}\big\{\text{sgn}\{f(\mathbf{X})\} \neq 1\big\}\\
        =&|L(\mathbf{X})|\cdot\left[1 - \mathbbm{1}\big\{\text{sgn}\{f(\mathbf{X})\} \neq 1\big\}\right] + |U(\mathbf{X})|\cdot \mathbbm{1}\big\{\text{sgn}\{f(\mathbf{X})\} \neq 1\big\}\\
        =&|L(\mathbf{X})| + (|U(\mathbf{X})| - |L(\mathbf{X})|)\cdot\mathbbm{1}\big\{\text{sgn}\{f(\mathbf{X})\} \neq 1\big\}.
    \end{split}
\end{equation*}
Then we can deduce that
\begin{equation*}
    \begin{split}
      &\mathcal{R}_{\text{upper}}(f) \\
      =&\mathbb{E}\bigg[|U(\mathbf{X})|\cdot \mathbbm{1}\big\{\text{sgn}\{f(\mathbf{X})\} \neq 1\big\}\cdot \mathbbm{1}\{L(\mathbf{X}) > 0\} + |L(\mathbf{X})|\cdot \big(1-\mathbbm{1}\big\{\text{sgn}\{f(\mathbf{X})\} \neq 1\big\}\big)\cdot \mathbbm{1}\{U(\mathbf{X}) < 0\} \\
      &+ \big[|L(\mathbf{X})| + (|U(\mathbf{X})| - |L(\mathbf{X})|)\cdot\mathbbm{1}\big\{\text{sgn}\{f(\mathbf{X})\} \neq 1\big\}\big]\cdot \mathbbm{1}\big\{[L(\mathbf{X}), U(\mathbf{X})] \ni 0\big\}\bigg]\\
      =&C_0+\mathbb{E}\bigg[\mathbbm{1}\big\{\text{sgn}\{f(\mathbf{X})\} \neq 1\big\}\cdot\\
      &
      \big[|U(\mathbf{X})|\cdot\mathbbm{1}\{L(\mathbf{X}) > 0\} - |L(\mathbf{X})|\cdot\mathbbm{1}\{U(\mathbf{X}) < 0\} + (|U(\mathbf{X})| - |L(\mathbf{X})|)\cdot\mathbbm{1}\{[L(\mathbf{X}), U(\mathbf{X})] \ni 0\}\big]\bigg],
    \end{split}
\end{equation*}
where $C_0$ is a constant that does not depend on $f$.

Recall
\begin{equation*}
    \begin{split}
        \eta(\mathbf{x}) &= |U(\mathbf{x})|\cdot\mathbbm{1}\{L(\mathbf{x}) > 0\} - |L(\mathbf{x})|\cdot\mathbbm{1}\{U(\mathbf{x}) < 0\} + (|U(\mathbf{x})| - |L(\mathbf{x})|)\cdot\mathbbm{1}\{[L(\mathbf{x}), U(\mathbf{x})] \ni 0\},
    \end{split}
\end{equation*}
and we have\[
\mathcal{R}_{\text{upper}}(f) = \mathbb{E}\left[\eta(\mathbf{X})\cdot\mathbbm{1}\big\{\text{sgn}\{f(\mathbf{X})\} \neq 1\big\} \right]+C_0.
\]
Clearly, the expectation is minimized by choosing $f = f^\ast$, where
\[
    \text{sgn}\{f^\ast(\mathbf{x})\} = \begin{cases}
+1, \quad&\text{if}~~\eta(\mathbf{x})\geq 0,\\
-1, &\text{if}~~\eta(\mathbf{x})< 0.
\end{cases}
\]

\subsection*{C.3: Proof of Proposition \ref{prop: excess risk}}
We compute the excess risk for an arbitrary measurable function $f$:
\begin{equation*}
    \begin{split}
        \mathcal{R}_{\text{upper}}(f) - \mathcal{R}^\ast_{\text{upper}}(f) &= \mathcal{R}_{\text{upper}}(f) - \mathcal{R}_{\text{upper}}(f^\ast)\\
        &=\mathbb{E}\bigg[\left[\mathbbm{1}\big\{\text{sgn}\{f(\mathbf{X})\} \neq 1\big\} - \mathbbm{1}\big\{\text{sgn}\{f^\ast(\mathbf{X})\} \neq 1\big\}\right]\cdot\eta(\mathbf{X})\bigg],
    \end{split}
\end{equation*}
where $f^\ast$ is constructed in Proposition \ref{prop: Bayes decision rule}, and $\eta(\mathbf{x})$ is defined in Proposition \ref{prop: Bayes decision rule}.

Observe that 
\begin{equation*}
    \begin{split}
        &\left[\mathbbm{1}\big\{\text{sgn}\{f(\mathbf{X})\} \neq 1\big\} - \mathbbm{1}\big\{\text{sgn}\{f^\ast(\mathbf{X})\} \neq 1\big\}\right]\cdot\eta(\mathbf{X})\\
        =& \mathbbm{1}\big\{\text{sgn}\{f(\mathbf{X})\} \neq \text{sgn}\{f^\ast(\mathbf{X})\}\big\}\cdot \left[\mathbbm{1}\big\{\text{sgn}\{f(\mathbf{X})\} \neq 1\big\} - \mathbbm{1}\big\{\text{sgn}\{f^\ast(\mathbf{X})\} \neq 1\big\}\right]\cdot\eta(\mathbf{X}).
    \end{split}
\end{equation*}
By construction of $f^\ast$, we have
\begin{equation*}
\begin{split}
   &\mathbbm{1}\big\{\text{sgn}\{f(\mathbf{X})\} \neq \text{sgn}\{f^\ast(\mathbf{X})\}\big\}\cdot \left[\mathbbm{1}\big\{\text{sgn}\{f(\mathbf{X})\} \neq 1\big\} - \mathbbm{1}\big\{\text{sgn}\{f^\ast(\mathbf{X})\} \neq 1\big\}\right]\cdot\eta(\mathbf{X}) \\
   =& \begin{cases}
\mathbbm{1}\big\{\text{sgn}\{f(\mathbf{X})\} \neq \text{sgn}\{f^\ast(\mathbf{X})\}\big\}\cdot\eta(\mathbf{X}), \quad&\text{if}~~\eta(\mathbf{X}) \geq 0\\
\mathbbm{1}\big\{\text{sgn}\{f(\mathbf{X})\} \neq \text{sgn}\{f^\ast(\mathbf{X})\}\big\}\cdot\{-\eta(\mathbf{X})\}, \quad&\text{if}~~\eta(\mathbf{X}) < 0
\end{cases}\\
=&\mathbbm{1}\big\{\text{sgn}\{f(\mathbf{X})\} \neq \text{sgn}\{f^\ast(\mathbf{X})\}\big\}\cdot|\eta(\mathbf{X})|
\end{split}
\end{equation*}
Therefore, we have established\[
\mathcal{R}_{\text{upper}}(f) - \mathcal{R}^\ast_{\text{upper}}(f) = \mathbb{E}\left[\mathbbm{1}\big\{\text{sgn}\{f(\mathbf{X})\} \neq \text{sgn}\{f^\ast(\mathbf{X})\}\big\}\cdot|\eta(\mathbf{X})|\right].
\]

\subsection*{C.4: Proof of Theorem \ref{thm: relation between surrogate and true loss}}
Fix $\mathbf{x} \in \mathcal{X}$. The risk under the surrogate loss conditional on $x$ is 
\begin{equation}
\begin{split}
    \text{Conditional $\phi$-Risk} &= |U(\mathbf{x})|\cdot \phi\{f(\mathbf{x})\}\cdot\mathbbm{1}\{L(\mathbf{x}) > 0\} + |L(\mathbf{x})|\cdot \phi\{-f(\mathbf{x})\}\cdot\mathbbm{1}\{U(\mathbf{x}) < 0\}\\ 
    +& \left[|L(\mathbf{x})| + (|U(\mathbf{x})| - |L(\mathbf{x})|)\cdot \phi\{f(\mathbf{x})\}\right]\cdot\mathbbm{1}\big\{[L(\mathbf{x}), U(\mathbf{x})] \ni 0, |U(\mathbf{x})| \geq |L(\mathbf{x})|\big\}\\
    +& \left[|U(\mathbf{x})| + (|L(\mathbf{x})| - |U(\mathbf{x})|)\cdot \phi\{-f(\mathbf{x})\}\right]\cdot\mathbbm{1}\big\{[L(\mathbf{x}), U(\mathbf{x})] \ni 0, |U(\mathbf{x})| < |L(\mathbf{x})|\big\}.
\end{split}
\end{equation}
We follow \citet{bartlett2006convexity} and think of the above conditional $\phi-$risk in terms of a generic classifier value $f(\mathbf{x}) = \alpha \in \mathbb{R}$ and generic $L = L(\mathbf{x}) \in\mathbb{R}$ and $U = U(\mathbf{x}) \in \mathbb{R}$ values. To this end, we define the \emph{generic conditional $\phi$-risk}:
\begin{equation*}
    \begin{split}
        C_{L, U}(\alpha) &= |U|\cdot \phi(\alpha)\cdot\mathbbm{1}\{L > 0\} + |L|\cdot \phi(-\alpha)\cdot\mathbbm{1}\{U < 0\}\\ 
    &+ \{|L| + (|U| - |L|)\cdot \phi(\alpha)\}\cdot\mathbbm{1}\{[L, U] \ni 0, |U| \geq |L|\}\\
    &+ \{|U| + (|L| - |U|)\cdot \phi(-\alpha)\}\cdot\mathbbm{1}\{[L, U] \ni 0, |U| < |L|\}. 
    \end{split}
\end{equation*}

The \emph{optimal conditional $\phi$-risk} is defined to be
\begin{equation*}
    H(L, U) = \underset{\alpha \in \mathbb{R}}{\text{inf}}C_{L, U}(\alpha),
\end{equation*}
and the optimal $\phi-$risk can be written as
\begin{equation}
\label{eqn: optimal phi risk and H}
    \mathcal{R}^{h, \ast}_{\text{upper}} = \mathbb{E}\{H(L(\mathbf{X}), U(\mathbf{X}))\}.
\end{equation}

Straightforward calculation shows that 
\[
H(L, U) = \begin{cases}
0, \quad&L > 0 ~\text{or}~U < 0\\
|L|, &[L, U] \ni 0, ~|U| \geq |L| \\
|U|, &[L, U] \ni 0, ~|U| < |L|.
\end{cases}
\]

For the surrogate loss to be useful, we need to make sure that the optimal conditional $\phi$-risk can be achieved with an $\alpha$ that has the same sign as the optimal rule. To this end, we follow \citet{bartlett2006convexity} and define
\[
H^-(L, U) = \text{inf}\{C_{L, U}(\alpha): \alpha \cdot \eta \leq 0\},
\]where $\eta$ is a shorthand of $\eta(\mathbf{x})$ defined in Proposition \ref{prop: Bayes decision rule}. It is straightforward to show that $H^-(L, U)$ attains its minimum at $\alpha = 0$, with the optimal value: 
\[
H^-(L, U) = |L|\cdot\mathbbm{1}\{[L, U] \ni 0, |U| \geq |L|\} + |U|\cdot\mathbbm{1}\{[L, U] \ni 0, |U| < |L|\}.
\]

We can now compute 
\begin{equation*}
\begin{split}
    H^-(L, U) - H(L, U) &= |U|\cdot\mathbbm{1}\{L > 0\} + |L|\cdot\mathbbm{1}\{U < 0\} + (|U| - |L|)\cdot \mathbbm{1}\{[L, U] \ni 0, |U| \geq |L|\} \\
    &+ (|L| - |U|)\cdot \mathbbm{1}\{[L, U] \ni 0, |U| < |L|\}.
\end{split}
\end{equation*}
It is clear that for any $L \in \mathbb{R}$ and $U \in \mathbb{R}$ such that $L \neq U$, we have $H^-(L, U) - H(L, U) > 0$. This suggests that the surrogate loss is \emph{classification-calibrated} in the terminology of \citet{bartlett2006convexity}. Moreover, observe that 
\[
H^-(L, U) - H(L, U) = \begin{cases}
|U|, \quad&L > 0 \\
|L|, &U < 0\\
||U| - |L||, &[L, U] \ni 0,
\end{cases}
\] and we have
\begin{equation}
\label{eqn: H^- - H = eta}
    H^-(L, U) - H(L, U) = |\eta|.
\end{equation}

Now we are ready to put together everything and prove the proposition:
\begin{equation*}
    \begin{split}
        \mathcal{R}_{\text{upper}}(f) -  \mathcal{R}^\ast_{\text{upper}} &= \mathbb{E}\left[\mathbbm{1}\{f(\mathbf{X}) \neq f^\ast(\mathbf{X})\}\cdot|\eta(\mathbf{X})|\right]\\
        &=\mathbb{E}\left[\mathbbm{1}\{f(\mathbf{X}) \neq f^\ast(\mathbf{X})\}\cdot \big\{H^-\{L(\mathbf{X}), U(\mathbf{X})\} - H\{L(\mathbf{X}), U(\mathbf{X})\}\big\}\right]\\
        &\leq \mathbb{E}\left[C_{L(\mathbf{X}), U(\mathbf{X})}\{f(\mathbf{X})\} - H\{L(\mathbf{X}), U(\mathbf{X})\}\right]\\
        &=\mathbb{E}\left[C_{L(\mathbf{X}), U(\mathbf{X})}\{f(\mathbf{X})\}\right] - \mathbb{E}\left[H\{L(\mathbf{X}), U(\mathbf{X})\}\right]\\
        &=\mathcal{R}^h_{\text{upper}}(f) - \mathcal{R}^{h, \ast}_{\text{upper}},
    \end{split}
\end{equation*}
where the first equality is by Proposition \ref{prop: excess risk}; the second equality is by equation (\ref{eqn: H^- - H = eta}); the third inequality is by the fact that $H^-(L, U)$ minimizes the conditional $\phi$-risk $C_{L, U}$ when $\mathbbm{1}\{f(\mathbf{X}) \neq f^\ast(\mathbf{X})\} = 1$, i.e., $f$ and the optimal rule $f^\ast$ disagree; the last equality is by definition and equation (\ref{eqn: optimal phi risk and H}). This completes the proof of Theorem \ref{thm: relation between surrogate and true loss}.

\clearpage
\begin{center}
{\large\bf Supplementary Material D: Proof of Proposition \ref{prop: approx error} and Convergence Rate Results on $L(\mathbf{X})$ and $U(\mathbf{X})$
}
\end{center}

\subsection*{D.1: Proof of Proposition \ref{prop: approx error}}

\begin{proof}[of the first part]
First, we have
\begin{align*}
\mathcal{R}_{\text{upper}}(f;L(\cdot),U(\cdot))  
=&  \mathbb{E}\left[\sup_{C'(\mathbf{X}) \in [L(\mathbf{X}), U(\mathbf{X})]}  |C'(\mathbf{X})|\cdot \mathbbm{1}\big\{\text{sgn}\{f(\mathbf{X})\} \neq \text{sgn}\{C'(\mathbf{X})\}\big\}\right]\\
\geq&\mathbb{E}\left[ |C(\mathbf{X})|\cdot \mathbbm{1}\big\{\text{sgn}\{f(\mathbf{X})\} \neq \text{sgn}\{C(\mathbf{X})\}\big\}\right]\\
=& \mathcal{R}(f).
\end{align*}
On the other hand we claim that for any $\mathbf{X} = \mathbf{x}$,
\begin{align}
&\sup_{C'(\mathbf{x}) \in [L(\mathbf{x}), U(\mathbf{x})]}  |C'(\mathbf{x})|\cdot \mathbbm{1}\big\{\text{sgn}\{f(\mathbf{x})\} \neq \text{sgn}\{C'(\mathbf{x})\}\big\}-|C(\mathbf{x})|\cdot \mathbbm{1}\big\{\text{sgn}\{f(\mathbf{x})\} \neq \text{sgn}\{C(\mathbf{x})\}\big\}\nonumber\\
&\leq U(\mathbf{x})-L(\mathbf{x}).\label{eqn: loss upper bound}
\end{align}
If $[L(\mathbf{x}),U(\mathbf{x})]$ does not cover $0$, then for any $C'(\mathbf{x})\in [L(\mathbf{x}),U(\mathbf{x})]$, we have
$\mathbbm{1}\big\{\text{sgn}\{f(\mathbf{x})\} \neq \text{sgn}\{C'(\mathbf{x})\}\big\}=\mathbbm{1}\big\{\text{sgn}\{f(\mathbf{x})\} \neq \text{sgn}\{C(\mathbf{x})\}\big\}$. Then we have
\begin{align*}
&\text{LHS of } (\ref{eqn: loss upper bound}) 
\leq \sup_{C'(\mathbf{x}) \in [L(\mathbf{x}), U(\mathbf{x})]}  |C'(\mathbf{x})|-|C(\mathbf{x})|\leq U(\mathbf{x})-L(\mathbf{x}).
\end{align*}
If $[L(\mathbf{x}), U(\mathbf{x})]$ covers $0$, then we have
\begin{align*}
&\text{LHS of }(\ref{eqn: loss upper bound}) \\
\leq& \sup_{C'(\mathbf{x}) \in [L(\mathbf{x}), U(\mathbf{x})]}|C'(\mathbf{x})|\cdot \mathbbm{1}\big\{\text{sgn}\{f(\mathbf{x})\} \neq \text{sgn}\{C'(\mathbf{x})\}\big\}\\
\leq&\sup_{C'(\mathbf{x}) \in [L(\mathbf{x}), U(\mathbf{x})]}|C'(\mathbf{x})|\\
\leq& U(\mathbf{x})-L(\mathbf{x}).
\end{align*}
Combine these results, we prove the first part of the proposition.
\end{proof}

\begin{proof}[of the second part]
To prove the result we only need to show that for every $\mathbf{X} = \mathbf{x}$, we have
\begin{equation}\small
\begin{split}
&\left[|C(\mathbf{x})|\cdot \mathbbm{1}\big\{\text{sgn}\{C(\mathbf{x})\} \neq \text{sgn}\{f^\ast(\mathbf{x})\}\big\}\right] \\
\leq &\left[{\mathbbm{1}\left\{L(\mathbf{x})<0<U(\mathbf{x})\right\}}\cdot{\left\{U(\mathbf{x})-L(\mathbf{x})\right\}}\cdot
{\left\{\frac{1-\rho(\mathbf{x};U, L)}{2}\right\}}\cdot{\mathbbm{1}\left\{\rho^{c}(\mathbf{x};U,L,C)>\rho(\mathbf{x};U,L)\right\}}\right].\label{eqn: part2}
\end{split}
\end{equation}
If $[L(\mathbf{x}), U(\mathbf{x})]$ does not cover $0$, then both LHS and RHS of \eqref{eqn: part2} are $0$ and the inequality is trivially satisfied. If $[L(\mathbf{x}), U(\mathbf{x})] \ni 0$ and $\text{sgn}\{C(\mathbf{x})\} = \text{sgn}\{f^\ast(\mathbf{x})\}$, then we have
\begin{align*}
\text{LHS of } (\ref{eqn: part2})=0\leq \text{RHS of } (\ref{eqn: part2}).    
\end{align*}
  If $[L(\mathbf{x}), U(\mathbf{x})] \ni 0$ and $\text{sgn}\{C(\mathbf{x})\} \neq \text{sgn}\{f^\ast(\mathbf{x})\}$, note that  $\text{sgn}\{f^\ast(\mathbf{x})\}=\text{sgn}\{U(\mathbf{x})+L(\mathbf{x})\}$, and we have $\text{sgn}\{C(\mathbf{x})\} \neq \text{sgn}\left\{\frac{L(\mathbf{x})+U(\mathbf{x})}{2}\right\}$. It follows that
  \begin{align*}
      |C(\mathbf{x})|< \left|C(\mathbf{x})-\frac{L(\mathbf{x})+U(\mathbf{x})}{2}\right|,
  \end{align*}
and thus we have $\mathbbm{1}\left\{\rho^{c}(\mathbf{x};U,L,C)>\rho(\mathbf{x};U,L)\right\}=1$. Therefore, we have
\[
\text{LHS of } (\ref{eqn: part2})=|C(\mathbf{x})|, 
\] and
\[
\text{RHS of } (\ref{eqn: part2})=\left\{U(\mathbf{x})-L(\mathbf{x})\right\}\cdot
{\left\{\frac{1-\rho(\mathbf{x};U, L)}{2}\right\}}.
\]
 Observe that in this case $|C(\mathbf{x})|\leq |U(\mathbf{x})|\wedge |L(\mathbf{x})|$ and $|U(\mathbf{x})|\wedge |L(\mathbf{x})|=\left\{U(\mathbf{x})-L(\mathbf{x})\right\}\cdot
{\left\{\frac{1-\rho(\mathbf{x};U, L)}{2}\right\}}$ we conclude $\text{LHS of } (\ref{eqn: part2})\leq \text{RHS of } (\ref{eqn: part2})$.

\end{proof}

\subsection*{D.2: Convergence Rate of $L(\mathbf{X})$ and $U(\mathbf{X})$ for Balke-Pearl Bounds, Siddique Bounds, and Manski-Pepper bounds}

Lemma D2 states that if $K$ functions are all $n^{-\theta}$ estimable then their linear combinations, maximum/minimum, and product are also $n^{-\theta}$ estimable. Recall that for a binary outcome, the Balke-Pearl bounds and the Siddique bounds, are compositions of a series of functions using maximum/minimum and linear combinations. To obtain an estimate of $L(\mathbf{X})$ and $U(\mathbf{X})$ that satisfy Assumption \ref{assump: rate of convergence of L and U} for Balke-Pearl bounds and Siddique bounds, it suffices to first obtain an estimate for each constituent part, i.e. $\{{p}_{y, a \mid z, \mathbf{X}},~y=\pm 1, a=\pm 1, z=\pm 1\}$, that converges in $L_1$ with $n^{-\theta}$, and then plug all estimates into the $L(\mathbf{X})$ and $U(\mathbf{X})$ expressions. Analogously, for a continuous outcome and the Manski-Pepper bounds, we only need to obtain estimates for $P(Z = z \mid \mathbf{X})$, $\mathbb{E}\{Y \mid Z = z, A = a \mid \mathbf{X}\}$, and $P(A =  a \mid Z = z, \mathbf{X})$ that converge in $L_1$ with $n^{-\theta}$ rate, and then plug these estimates into the expressions of the Manski-Pepper bounds; see Supplementary Material A. 

\begin{customlemma}{D2}
Let $g_i:\mathcal{X}\mapsto \mathbb{R}, i=1,\dots,K$, be $K$ functions of $\mathbf{X}$ to be estimated. Suppose we have estimators  $\{\hat{g}_i,i=1,\dots,K\}$ s.t.
$$\mathbb{E}\left[\big|\hat{g}_i(\mathbf{X})-g_i(\mathbf{X})\big|\right]\leq \delta_i, i=1,2\dots,K$$.
Then we have the following:\par

\begin{enumerate}
    \item For any constant $c_1,c_2,\dots,c_K$: \begin{equation*}
    \begin{split}
        &\mathbb{E}\left[\left|\{c_1 g_1(\mathbf{X})+c_2 g_2(\mathbf{X})+\dots+c_K g_K(\mathbf{X})\}-\{c_1 \hat{g}_1(\mathbf{X})+c_2 \hat{g}_2(\mathbf{X})+\dots +c_K g_K(\mathbf{X})\}\right|\right]\\
        &\leq |c_1|\cdot \delta_1 +|c_2|\cdot \delta_2+\dots+|c_K|\cdot \delta_K.
    \end{split}
\end{equation*}
\item \[
\mathbb{E}\left[|\max\{ g_1(\mathbf{X}), g_2(\mathbf{X}),\dots,g_K(\mathbf{X})\}-\max\{ \hat{g}_1(\mathbf{X}), \hat{g}_2(\mathbf{X}),\dots,\hat{g}_K(\mathbf{X})\}|\right]\leq \delta_1+\delta_2+\dots+\delta_K.
\]
\item 
\[
\mathbb{E}\left[|\min\{ g_1(\mathbf{X}), g_2(\mathbf{X}),\dots,g_K(\mathbf{X})\}-\min\{ \hat{g}_1(\mathbf{X}), \hat{g}_2(\mathbf{X}),\dots,g_K(\mathbf{X})\}|\right]\leq \delta_1+\delta_2+\dots+\delta_K.
\]
\item If we further assume that $|g_{i}(\textbf{X})|,|\hat{g}_i(\textbf{X})|\leq C$ with probability 1 for some constant C and $i=1, 2$, then
\[
\mathbb{E}\left\{|g_1(\textbf{X})g_2(\textbf{X})-\hat{g}_1(\textbf{X})\hat{g}_2(\textbf{X})|\right\}\leq C\cdot(\delta_1+\delta_2).
\]
\end{enumerate}
\end{customlemma}

\begin{proof}
Observe that
\begin{align*}
&\mathbb{E}\left[|\{c_1 g_1(\mathbf{X})+c_2 g_2(\mathbf{X})\}-\{c_1 \hat{g}_1(\mathbf{X})+c_2 \hat{g}_2(\mathbf{X})\}|\right] \\
\leq& \mathbb{E}\left[|c_1|\cdot| g_1(\mathbf{X})-\hat{g}_1(\mathbf{X})|+|c_2|\cdot |g_2(\mathbf{X})-\hat{g}_2(\mathbf{X})|\right] \\
\leq& |c_1|\cdot \delta_1 +|c_2|\cdot \delta_2.
\end{align*}
Apply this result iteratively and part (a) is proved by induction.

Next, we prove an upper bound for $|\max\{g_1(\mathbf{X}),g_2(\mathbf{X})\}-\max\{\hat{g}_1(\mathbf{X}),\hat{g}_2(\mathbf{X})\}|$ by cases . If $g_1(\mathbf{X})\geq g_2(\mathbf{X})$, then we can deduce that
\begin{align*}
&\max\{g_1(\mathbf{X}),g_2(\mathbf{X})\}-\max\{\hat{g}_1(\mathbf{X}),\hat{g}_2(\mathbf{X})\} \\
=&g_1(\mathbf{X})-\max\{\hat{g}_1(\mathbf{X}),\hat{g}_2(\mathbf{X})\}\\
\leq&g_1(\mathbf{X})-\hat{g}_1(\mathbf{X})\\
\leq& |g_1(\mathbf{X})-\hat{g}_1(\mathbf{X})|.
\end{align*}
By symmetry, if $g_1(\mathbf{X})< g_2(\mathbf{X})$, then we can prove
$$\max\{g_1(\mathbf{X}),g_2(\mathbf{X})\}-\max\{\hat{g}_1(\mathbf{X}),\hat{g}_2(\mathbf{X})\} \leq |g_2(\mathbf{X})-\hat{g}_2(\mathbf{X})|.$$
Combine these two results we have 
\begin{align}\max\{g_1(\mathbf{X}),g_2(\mathbf{X})\}-\max\{\hat{g}_1(\mathbf{X}),\hat{g}_2(\mathbf{X})\} \leq |g_2(\mathbf{X})-\hat{g}_2(\mathbf{X})|+|g_1(\mathbf{X})-\hat{g}_1(\mathbf{X})|.\label{eq: max-max upper}
\end{align}
On the other hand, if $\hat{g}_1(\mathbf{X})\geq \hat{g}_2(\mathbf{X})$, then we can deduce that
\begin{align*}
&\max\{g_1(\mathbf{X}),g_2(\mathbf{X})\}-\max\{\hat{g}_1(\mathbf{X}),\hat{g}_2(\mathbf{X})\} \\
=&\max\{g_1(\mathbf{X}),g_2(\mathbf{X})\}-\hat{g}_1(\mathbf{X})\\
\geq&g_1(\mathbf{X})-\hat{g}_1(\mathbf{X})\\
\geq& -|g_1(\mathbf{X})-\hat{g}_1(\mathbf{X})|.
\end{align*}
By symmetry, if $\hat{g}_1(\mathbf{X})< \hat{g}_2(\mathbf{X})$, then we can prove
$$\max\{g_1(\mathbf{X}),g_2(\mathbf{X})\}-\max\{\hat{g}_1(\mathbf{X}),\hat{g}_2(\mathbf{X})\} \geq -|g_2(\mathbf{X})-\hat{g}_2(\mathbf{X})|.$$
Therefore we have
\begin{align}\max\{g_1(\mathbf{X}),g_2(\mathbf{X})\}-\max\{\hat{g}_1(\mathbf{X}),\hat{g}_2(\mathbf{X})\} \geq -|g_2(\mathbf{X})-\hat{g}_2(\mathbf{X})|-|g_1(\mathbf{X})-\hat{g}_1(\mathbf{X})|.\label{eq: max-max lower}
\end{align}
Finally combine inequalities (\ref{eq: max-max upper}) and (\ref{eq: max-max lower}), we have
\begin{align}|\max\{g_1(\mathbf{X}),g_2(\mathbf{X})\}-\max\{\hat{g}_1(\mathbf{X}),\hat{g}_2(\mathbf{X})\}|\leq |g_2(\mathbf{X})-\hat{g}_2(\mathbf{X})|+|g_1(\mathbf{X})-\hat{g}_1(\mathbf{X})|.\label{eq: lemma upper}\end{align}

Use inequality (\ref{eq: lemma upper}) iteratively and we have
\begin{align*}
&|\max\{ g_1(\mathbf{X}), g_2(\mathbf{X}),\dots,g_K(\mathbf{X})\}-\max\{ \hat{g}_1(\mathbf{X}), \hat{g}_2(\mathbf{X}),\dots,\hat{g}_K(\mathbf{X})\}|   \\
=&|\max\left\{\max\{ g_1(\mathbf{X}), g_2(\mathbf{X}),\dots,g_{K-1}(\mathbf{X})\},g_K(\mathbf{X})\right\}-\max\left\{\max\{ \hat{g}_1(\mathbf{X}), \hat{g}_2(\mathbf{X}),\dots,\hat{g}_{K-1}(\mathbf{X})\},\hat{g}_K(\mathbf{X})\right\}|\\
\leq&|\max\{ g_1(\mathbf{X}), g_2(\mathbf{X}),\dots,g_{K-1}(\mathbf{X})\}-\max\{ \hat{g}_1(\mathbf{X}), \hat{g}_2(\mathbf{X}),\dots,\hat{g}_{K-1}(\mathbf{X})\}|+|g_K(\mathbf{X})-\hat{g}_K(\mathbf{X})|\\
\leq&\dots\\
\leq&\sum_{i=1}^{K}|g_i(\mathbf{X})-\hat{g}_i(\mathbf{X})|.
\end{align*}
Then we can conclude:
\begin{align*}
&\mathbb{E}\left[\left|\max\{ g_1(\mathbf{X}), g_2(\mathbf{X}),\dots,g_K(\mathbf{X})\}-\max\{ \hat{g}_1(\mathbf{X}), \hat{g}_2(\mathbf{X}),\dots,\hat{g}_K(\mathbf{X})\}\right|\right]    \\
\leq&\mathbb{E}\left[\sum_{i=1}^{K}|g_i(\mathbf{X})-\hat{g}_i(\mathbf{X})|\right]\\
\leq&\sum_{i=1}^{K}\delta_i,
\end{align*}
and part (b) is proved. Proof of part (c) is analogous to that of part (b) and is omitted.
\par

Finally, to prove part (d), it suffices to observe that
\begin{align*}
&\mathbb{E}\left\{|g_1(\textbf{X})g_2(\textbf{X})-\hat{g}_1(\textbf{X})\hat{g}_2(\textbf{X})|\right\}\\
\leq&\mathbb{E}\left[|g_1(\textbf{X})\left\{g_2(\textbf{X})-\hat{g}_2(\textbf{X})\right\}|+|\left\{\hat{g}_1(\textbf{X})-g_1(\textbf{X})\right\}\hat{g}_2(\textbf{X})|\right]\\
\leq& C\cdot \mathbb{E}\left[|g_2(\textbf{X})-\hat{g}_2(\textbf{X})|+|\hat{g}_1(\textbf{X})-g_1(\textbf{X})|\right]\\
\leq&C\cdot (\delta_1+\delta_2).
\end{align*}
\end{proof}

\bigskip
\bigskip
\begin{center}
{\large\bf Supplementary Material E: Proofs of Lemmas and Theorem \ref{thm: conv rate} }
\end{center}

\subsection*{E.1: Proof of Theorem \ref{thm: conv rate}}
First notice that $\widehat{f}_n ^{\lambda_n}$ is the minimizer of $$\frac{1}{|I_2|}\sum_{i\in I_2}  
l(\textbf{X}_i ;\widehat{w},\widehat{e},f)+
\frac{\lambda_{n}}{2}||f||^{2}.$$
Consider a function $f_0 = 0$ everywhere, then 
\begin{equation}
\begin{split}
\frac{1}{|I_2|}\sum_{i\in I_2}  
l(\textbf{X}_i ;\widehat{w},\widehat{e},f)+
\frac{\lambda_{n}}{2}||\widehat{f}_n ^{\lambda_n}||^{2} &\leq \frac{1}{|I_2|}\sum_{i\in I_2}  
l(\textbf{X}_i ;\widehat{w},\widehat{e},f_0)+
\frac{\lambda_{n}}{2}||f_0||^{2} \\
&=\frac{1}{|I_2|}\sum_{i\in I_2}\widehat{w}(\textbf{X}_i).
\label{ineql: bound of f hat}
\end{split}
\end{equation}
According to Assumption \ref{assump: boundedness of L U estimates}, we have $|\widehat{L}|,|\widehat{U}|\leq M_3$; therefore $|\widehat{w}(\textbf{X}_i)|\leq 2M_3$.
Plug his into (\ref{ineql: bound of f hat}) and we have:
$$2M_3 \geq \frac{1}{|I_2|}\sum_{i\in I_2}  
l(\textbf{X}_i ;\widehat{w},\widehat{e},f)+
\frac{\lambda_{n}}{2}||\widehat{f}_n ^{\lambda_n}||^{2} \geq \frac{\lambda_{n}}{2}||\widehat{f}_n ^{\lambda_n}||^{2}.$$ 
This implies that $||\widehat{f}_n ^{\lambda_n}||\leq \frac{2\sqrt{M_3}}{\sqrt{\lambda_n}}$. 
Let $f_n ^*$ be the optimal function with respect to penalized risk:
$$f_n ^*= \underset{f\in \mathcal{F}}{\text{argmin}}~\mathbb{E}l(\textbf{X};w,e,f)+\frac{\lambda_n}{2}||f||^2
$$
When the penalty factor is set to be $\lambda_n$, $f_{n}^{*}$ is the best we can get. 
Similar norm bound results for $f_n ^*$ exist:
\begin{equation*}
\begin{split}
\mathbb{E}l(\textbf{X};w,e,f_n ^*)+\frac{\lambda_n}{2}||f_n ^* ||^2&\leq \mathbb{E}l(\textbf{X};w,e,f_0 )+\frac{\lambda_n}{2}||f_0 ||^2\\
&\leq 2M_1.
\end{split}
\end{equation*}
Therefore $||f_n ^*||\leq \frac{2\sqrt{M_1}}{\sqrt{\lambda_n}}$.
According to assumption we have $||f||\geq M_4 ||f||_\infty$ for any $f\in \mathcal{F}$. Therefore, $||f_n ^*||_\infty, ||\widehat{f}_n ^{\lambda_n}||_{\infty} \leq \frac{2\sqrt{M_1 \vee M_3}}{M_4\sqrt{\lambda_n}}$.
 Define $B_{n}$ to be the set of functions $f$ s.t. $f\in \mathcal{F}$ and $||f||_{\infty}\leq  \frac{2\sqrt{M_3 \vee M_1}}{M_4\sqrt{\lambda_n}}$. Obviously $f_n ^* , \widehat{f}_n ^{\lambda_n}\in B_n$. By assumption \ref{assump: entropy}, we also have $ \widehat{f}_n ^{\lambda_n}\in B_n^*$.
 According to Assumption \ref{assump: existence of fnite minimizer}, there exists an optimal rule $f_\mathcal{F}^*$ with finite norm. We naturally have the following risk decomposition:
\[
\begin{split}
&\mathcal{R}^{\text{h}}_{\text{upper}}(\widehat{f}_{n}^{\lambda_n}) - \inf_{f\in \mathcal{F}}\mathcal{R}^{\text{h}}_{\text{upper}}(f) \\
=& \mathbb{E} l(\textbf{X}; w, e, \widehat{f}_{n}^{\lambda_n})-\mathbb{E} l(\textbf{X}; w, e, f_\mathcal{F}^{\ast})\\
=&  \underbrace{\mathbb{E} l(\textbf{X}; w, e, \widehat{f}_{n}^{\lambda_n})-\mathbb{E} l(\textbf{X}; w, e, f_n ^*)-\frac{\lambda_n}{2}||f_n ^*||^2}_{Q1} + \underbrace{\mathbb{E} l(\textbf{X}; w, e, f_n ^*)+\frac{\lambda_n}{2}||f_n ^*||^2-\mathbb{E} l(\textbf{X}; w, e, f_\mathcal{F}^{\ast})}_{Q2}.
\end{split}
\]
We  bound Q1 and Q2 separately.  Note that $\widehat{w}$ and $\widehat{e}$ are also random variables in addition to $\textbf{X}$. We first bound $Q_1$ by decomposing $Q_1$ into five terms. Note that
\[
\begin{split}
Q1 \leq & \mathbb{E} l(\textbf{X}; w, e, \widehat{f}_{n}^{\lambda_n})+\frac{\lambda_n}{2}||\widehat{f}_n ^{\lambda_n}||^2 -\mathbb{E} l(\textbf{X}; w, e, f_n ^*)-\frac{\lambda_n}{2}||f_n ^*||^2\\
=& \underbrace{\mathbb{E}\left\{\sum_{i\in I_2}\frac{1}{|I_2 |}l(\textbf{X}_i ;w,e,f_n ^* )-\mathbb{E} l(\textbf{X};w,e,f_n ^* )\right\}}_{Q11}\\
&+ \underbrace{\mathbb{
E}\left\{ \sum_{i\in I_2}\frac{1}{|I_2 |}l(\textbf{X}_i ;\widehat{w},\widehat{e},f_n ^* )- \sum_{i\in I_2}\frac{1}{|I_2 |}l(\textbf{X}_i ;w,e,f_n ^* )\right\}}_{Q12}\\
&+ \underbrace{\mathbb{E}\left\{\frac{1}{|I_2 |}\sum_{i\in I_2} l(\textbf{X}_i ; \widehat{w},\widehat{e},\widehat{f}_n ^{\lambda_n})+\frac{\lambda_n}{2}||\widehat{f}_n ^{\lambda_n}||^2- \frac{1}{|I_2 |}\sum_{i\in I_2} l(\textbf{X}_i ; \widehat{w},\widehat{e},f_n ^* )-\frac{\lambda_n}{2}||f_n ^* ||^2\right\}}_{Q13}\\
&+ \underbrace{\mathbb{E}\left\{\frac{1}{|I_2 |}\sum_{i\in I_2} l(\textbf{X}_i ; w,e,\widehat{f}_n ^{\lambda_n}) -\sum_{i\in I_2} \frac{1}{|I_2 |}l(\textbf{X}_i ; \widehat{w},\widehat{e},\widehat{f}_n ^{\lambda_n})\right\}}_{Q14}\\
&+ \underbrace{\mathbb{E}\left\{\mathbb{E}l(\textbf{X} ; w,e,\widehat{f}_n ^{\lambda_n}) - \sum_{i\in I_2}\frac{1}{|I_2 |} l(\textbf{X}_i ; w,e,\widehat{f}_n ^{\lambda_n})\right\}}_{Q15}
\end{split}
\]
Below we will bound $Q_{11}$, $Q_{12}$, $Q_{13}$, $Q_{14}$, and $Q_{15}$ separately.  

First, we have
\begin{equation}    \label{proof part1}
\begin{split}
  Q_{15}
    &\leq \mathbb{E}_{\textbf{X}_{|I_1|}}\mathbb{E}_{\textbf{X}_{|I_2|}}\left|\mathbb{E}l(\textbf{X} ; w,e,\hat{f}_{n}^{\lambda_n}) - \sum_{i\in I_2}\frac{1}{|I_2 |} l(\textbf{X}_i ; w,e,\hat{f}_{n}^{\lambda_n})\right|\\
    &\leq\mathbb{E}_{\textbf{X}_{|I_1 |}} \mathbb{E}_{\textbf{X}_{[I_2 ]}}\sup_{f\in B_n^*}\left|\mathbb{E}l(\textbf{X} ; w,e,f) - \sum_{i\in I_2}\frac{1}{|I_2 |} l(\textbf{X}_i ; w,e,f)\right|(\text{ by $\hat{f}_{n}^{\lambda_n}\in B_n^*$}) \\
    &= \mathbb{E}_{\textbf{X}_{[I_1]}}Q_{11}' \leq O\left(\frac{1}{\sqrt{n}\cdot\lambda_n}\right),
\end{split}
\end{equation}
where the last inequality follows from Lemma \ref{lemma: lemma 2}. Besides we have $Q_{11}=0$ by simple calculation.\par
Second, we have
\begin{equation}\label{proof part2} 
    \begin{split}
      Q_{14}
      &\leq \mathbb{E}_{\textbf{X}_{[I_1]}}\mathbb{E}_{\textbf{X}_{[I_2]}}\left\{\frac{1}{|I_2 |}\sum_{i\in I_2} l(\textbf{X}_i ; w,e,\hat{f}_{n}^{\lambda_n}) -\sum_{i\in I_2} \frac{1}{|I_2 |}l(\textbf{X}_i ; \widehat{w},\widehat{e},\hat{f}_{n}^{\lambda_n})\right\} \\
      &\leq \mathbb{E}_{\textbf{X}_{[I_1]}}\sup_{f\in B_n }\left|\mathbb{E}_{\textbf{X}_{[I_2]}}\left\{\frac{1}{|I_2 |}\sum_{i\in I_2} l(\textbf{X}_i ; w,e,f) -\sum_{i\in I_2} \frac{1}{|I_2 |}l(\textbf{X}_i ; \widehat{w},\widehat{e},f)\right\}\right| \\
        &\leq O\left(\lambda_n^{-\frac{1}{2}}n^{-(\alpha\wedge \beta)}\right),
    \end{split}
\end{equation}
where the last inequality follows from Lemma \ref{lemma: difference caused by hat w and w}. Similarly,
\begin{equation}
    \begin{split}
        Q_{12} &\leq \mathbb{E}_{\textbf{X}_{[I_1]}}\sup_{f\in B_n }\left|\mathbb{E}_{\textbf{X}_{[I_2]}}\left\{\frac{1}{|I_2 |}\sum_{i\in I_2} l(\textbf{X}_i ; w,e,f) -\sum_{i\in I_2} \frac{1}{|I_2 |}l(\textbf{X}_i ; \widehat{w},\widehat{e},f)\right\}\right| \\
        &\leq O\left(\lambda_n^{-\frac{1}{2}}n^{-(\alpha\wedge \beta)}\right)
    \end{split}
\end{equation}

Third, by definition of $\widehat{f}_{n}^{\lambda_n}$, we have
\begin{align}
    Q_{13}\leq 0.
\end{align}

Finally, we have
\begin{equation}
    \begin{split}
        Q_{2}=&\mathbb{E} l(\textbf{X}; w, e, f_n ^*)+\frac{\lambda_n}{2}||f_n ^*||^2-\mathbb{E} l(\textbf{X}; w, e, f_\mathcal{F}^{\ast})\\
=&   \mathbb{E} l(\textbf{X}; w, e, f_n ^*)+\frac{\lambda_n}{2}||f_n ^*||^2-\mathbb{E} l(\textbf{X}; w, e, f_\mathcal{F}^{\ast})-\frac{\lambda_n}{2}||f_\mathcal{F}^{\ast}||^2+\frac{\lambda_n}{2}||f_\mathcal{F}^{\ast}||^2\\
\leq & \frac{\lambda_n}{2}||f_\mathcal{F}^{\ast}||^2 ~~~(\text{By definition of $f_{n}^{*}$}) \\
\leq & O\left(\lambda_n\right)
~~(\text{By assumption \ref{assump: existence of fnite minimizer}}).    \end{split}
\end{equation}
Combine all the results above, we conclude
\begin{equation*}
    \begin{split}
 &\mathcal{R}^{\text{h}}_{\text{upper}}(\widehat{f}_{n}^{\lambda_n}) - \inf_{f\in \mathcal{F}}\mathcal{R}^{\text{h}}_{\text{upper}}(f)\\
 \leq & Q_{11}+Q_{12}+Q_{13}+Q_{14}+Q_{15}+Q_{2}\\
 \leq & O(\lambda_n+n^{-\frac{1}{2}}\lambda_n^{-1}+\lambda_n^{-\frac{1}{2}}(n^{-\alpha}+n^{-\beta}),
\end{split}
\end{equation*} and this completes the proof of Theorem 2.

\subsection*{E.2: Proof of Lemma \ref{lemma: difference caused by hat w and w}}
Without loss of generality, let $1\in I_2$. Note that
\begin{equation}
\begin{split}
    &\mathbb{E}_{\textbf{X}_{[I_1]}}\sup_{f\in B_n }\left|\mathbb{E}_{\textbf{X}_{[I_2]}}\left\{\frac{1}{|I_2 |}\sum_{i\in I_2} l(\textbf{X}_i ; w,e,f) -\sum_{i\in I_2} \frac{1}{|I_2 |}l(\textbf{X}_i ; \widehat{w},\widehat{e},f)\right\}\right| \\
  &\leq \mathbb{E}_{\textbf{X}_{[I_1]}}\sum_{i\in I_2}\sup_{f\in B_n }\left|\mathbb{E}_{\textbf{X}_{i}}\left\{\frac{1}{|I_2 |} l(\textbf{X}_i ; w,e,f) - \frac{1}{|I_2 |}l(\textbf{X}_i ; \widehat{w},\widehat{e},f)\right\}\right| \\
  &=\mathbb{E}_{\textbf{X}_{[I_1]}}\sup_{f\in B_n }\left|\mathbb{E}_{\textbf{X}_{1}}\{ l(\textbf{X}_1 ; w,e,f) - l(\textbf{X}_1 ; \widehat{w},\widehat{e},f)\}\right|\\
   &\leq \mathbb{E}_{\textbf{X}_{[I_1]}}\sup_{f\in B_n }\mathbb{E}_{\textbf{X}_{1}}| l(\textbf{X}_1 ; w,e,f) - l(\textbf{X}_1 ; \widehat{w},\widehat{e},f)|.\\
  \label{eqn: lemma1 partial}
\end{split}
\end{equation}
Moreover, note that
\begin{equation}
\begin{split}
&|l(x;w,e,f)-l(x;\widehat{w},\widehat{e},f)|\\
=&|\{w(x)-\widehat{w}(x)\} + f(x)\{w(x)e(x)-\widehat{w}(x)\widehat{e}(x)\}|\\
\leq &|\{w(x)-\widehat{w}(x)\}|+|f(x)\{w(x)e(x)-\widehat{w}(x)\widehat{e}(x)\}|\\
\leq &|\{w(x)-\widehat{w}(x)\}| + O\left(\frac{1}{\sqrt{\lambda_n}}\right)|w(x)e(x)-\widehat{w}(x)\widehat{e}(x)|,
\end{split}
\end{equation}
where the last inequality follows because $f \in B_n$.

Plug this into (\ref{eqn: lemma1 partial}) and we have
\begin{equation*}
\begin{split}
    &\mathbb{E}_{\textbf{X}_{[I_1]}}\sup_{f\in B_n }\mathbb{E}_{\textbf{X}_{1}}| l(\textbf{X}_1 ; w,e,f) - l(\textbf{X}_1 ; \widehat{w},\widehat{e},f)|\\
    \leq & \mathbb{E}_{\textbf{X}_{[I_1]}}\sup_{f\in B_n }\mathbb{E}_{\textbf{X}_{1}}\left\{\left|w(\textbf{X}_1)-\widehat{w}(\textbf{X}_1)\right|+O\left(\frac{1}{\sqrt{\lambda_n}}\right)\left|w(\textbf{X}_1)e(\textbf{X}_1)-\widehat{w}(\textbf{X}_1)\widehat{e}(\textbf{X}_1)\right|\right\}\\
    =& \mathbb{E}_{\textbf{X}_{[I_1]}}\mathbb{E}_{\textbf{X}_{1}}\left\{\left|w(\textbf{X}_1)-\widehat{w}(\textbf{X}_1)\right|+O\left(\frac{1}{\sqrt{\lambda_n}}\right)\left|w(\textbf{X}_1)e(\textbf{X}_1)-\widehat{w}(\textbf{X}_1)\widehat{e}(\textbf{X}_1)\right|\right\}.\\
    =& \mathbb{E}\left\{\left|w(\textbf{X}_1)-\widehat{w}(\textbf{X}_1)\right|+O\left(\frac{1}{\sqrt{\lambda_n}}\right)\left|w(\textbf{X}_1)e(\textbf{X}_1)-\widehat{w}(\textbf{X}_1)\widehat{e}(\textbf{X}_1)\right|\right\}\\
    \leq & n^{-(\alpha \wedge \beta)}O\left(1+\frac{1}{\sqrt{\lambda_n}}\right)~~(\text{by Lemma \ref{lemma: what and w} to be stated and proved in C.3})\\
    \leq &O\left(n^{-(\alpha \wedge \beta)}/\sqrt{\lambda_n}\right)~~(\text{because}~\lambda_n\ \text{is\ } o(1)).
\end{split}
\end{equation*}

\subsection*{E.3: Lemma \ref{lemma: what and w} and Proof}
\begin{customlemma}{E1}\label{lemma: what and w}\rm
Under Assumption \ref{assump: rate of convergence of L and U}, we have
\begin{align*}
    \mathbb{E}\left\{|\widehat{w}(\textbf{X})-w(\textbf{X})|\right\} \leq 
    \mathbb{E}\left\{|\widehat{w}(\textbf{X})\widehat{e}(\textbf{X})-w(\textbf{X})e(\textbf{X})|\right\} \leq O\left(n^{-(\alpha \wedge \beta)}\right)
\end{align*}
\end{customlemma}
\par
\begin{proof}
Consider 
\[
\begin{split}
w'(x) =w(x)e(x) = &-U(x)\cdot\mathbbm{1}\{{L}(x)  > 0\} - L(x) \cdot\mathbbm{1}\{U(x)<0 \} \\
&+ \{-|U(x) | + |L(x)|\}\cdot\mathbbm{1}\{[L(x), U(x) ] \ni 0\},
\end{split}
\] and 
\[
\begin{split}
\widehat{w}'(x)=\widehat{w}(x)\widehat{e}(x)= &-\widehat{U}(x)\cdot\mathbbm{1}\{{\widehat{L}}(x)  > 0\} - \widehat{L}(x) \cdot\mathbbm{1}\{\widehat{U}(x)<0 \} \\
&+ \{-|\widehat{U}(x)| + |\widehat{L}(x) |\}\cdot\mathbbm{1}\{[\widehat{L}(x), \widehat{U}(x) ] \ni 0\}.
\end{split}
\]
Then it can be easily verified that
\begin{align*}
    w(x)=|w'(x)| \quad \text{and} \quad e(x) = \text{sgn}\{w'(x)\},
\end{align*}
and similarly
\begin{align*}
    \widehat{w}(x)=|\widehat{w}'(x)| \quad \text{and} \quad \widehat{e}(x) = \text{sgn}\{\widehat{w}'(x)\}.
\end{align*}
Therefore, we have
\begin{align*}
    |\widehat{w}(x)\widehat{e}(x)-w(x)e(x)|=|\widehat{w}'(x)-w'(x)|
\end{align*}
and
\begin{align*}
    |w(x)-\widehat{w}(x)|=\big||w'(x)|-|\widehat{w}'(x)|\big|\leq |\widehat{w}'(x)-w'(x)|,
\end{align*}
which directly implies that
\begin{equation}
\mathbb{E}\left\{|\widehat{w}(\textbf{X})-w(\textbf{X})|\right\}\leq 
    \mathbb{E}\left\{|\widehat{w}(\textbf{X})\widehat{e}(\textbf{X})-w(\textbf{X})e(\textbf{X})|\right\}.
    \label{eqn: E[|w_hat - w|]}
\end{equation}

Below we will bound $|\widehat{w}'(x)-w'(x)|$. Let
\begin{equation}
    \begin{split}
        S = &|\widehat{w}'(x)-w'(x)|\\
= &\bigg|\left[-U(x)\cdot\mathbbm{1}\{{L}(x)  > 0\} - L(x) \cdot\mathbbm{1}\{U(x)<0 \} + \{-|U(x) | + |L(x) |\}\cdot\mathbbm{1}\big\{[L(x), U(x)] \ni 0\big\}\right]\notag\\
&-\left[-\widehat{U}(x)\cdot\mathbbm{1}\{{\widehat{L}}(x)  > 0\} - \widehat{L}(x) \cdot\mathbbm{1}\{\widehat{U}(x)<0 \} + \{-|\widehat{U}(x) | + |\widehat{L}(x)|\}\cdot\mathbbm{1}\big\{[\widehat{L}(x), \widehat{U}(x) ] \ni 0\big\}\right]\bigg|
\end{split}
\label{eqn: S}
\end{equation}
We claim 
\begin{align}
    S\leq |\widehat{U}(x)-U(x)|+|\widehat{L}(x)-L(x)|.\label{ineq: lemma 3 case}
\end{align}
We prove this inequality case by case:
\begin{itemize}
    \item[Case 1:] If one of the following three clauses holds, it's straightforward to verify (\ref{ineq: lemma 3 case}):
\begin{itemize}
    \item $L(x)>0$ and $\widehat{L}(x)>0$;
    \item $U(x)<0$ and $\widehat{U}(x)<0$;
    \item $[{L}(x), {U}(x) ] \ni 0$ and $[\widehat{L}(x), \widehat{U}(x) ] \ni 0$.
\end{itemize}
\item[Case 2:] If $L(x)>0$ and $[\widehat{L}(x), \widehat{U}(x) ] \ni 0$, 
\begin{align*}
    S=|\widehat{U}(x)-U(x)+\widehat{L}(x)|\leq |\widehat{U}(x)-U(x)|-\widehat{L}(x)\leq |\widehat{U}(x)-U(x)|+|\widehat{L}(x)-L(x)|.
\end{align*}\par
By symmetry, if $\widehat{L}(x)>0$ and $[L(x), {U}(x) ] \ni 0$, (\ref{ineq: lemma 3 case}) holds too.
\item[Case 3:] If $L(x)>0$ and $\widehat{U}(x)<0$, then
\begin{align*}
    S=-\widehat{L}(x)+U(x)\leq U(x)-\widehat{U}(x)+L(x)-\widehat{L}(x)\leq |\widehat{U}(x)-U(x)|+|\widehat{L}(x)-L(x)|.
\end{align*}\par
By symmetry, if $\widehat{L}(x)>0$ and ${U}(x)<0$ (\ref{ineq: lemma 3 case}) holds too.
\item[Case 4:] If $\widehat{U}(x)<0$ and $[{L}(x), {U}(x) ] \ni 0$, then
\begin{align*}
    S=|-L(x)-U(x)+\widehat{L}(x)|\leq |\widehat{L}(x)-L(x)|+U(x)\leq |\widehat{U}(x)-U(x)|+|\widehat{L}(x)-L(x)|
\end{align*}\par
By symmetry, if ${U}(x)<0$ and $[\widehat{L}(x), \widehat{U}(x) ] \ni 0$, (\ref{ineq: lemma 3 case}) holds too.
\end{itemize}
Therefore, we claim that 
\begin{align*}
    S\leq |\widehat{U}(x)-U(x)|+|\widehat{L}(x)-L(x)|.
\end{align*}

Combine (\ref{eqn: E[|w_hat - w|]}) and (\ref{eqn: S}) and we finally prove Lemma 3:
\begin{equation*}
    \begin{split}
        &\mathbb{E}\left\{|\widehat{w}(\textbf{X})\widehat{e}(\textbf{X})-w(\textbf{X})e(\textbf{X})|\right\} \\
        = &\mathbb{E}\left\{|\widehat{w}'(\textbf{X})-w'(\textbf{X})|\right\}\\
\leq &\mathbb{E} \left\{\left|\widehat{U}(\textbf{X})-U(\textbf{X})\right| + \left|\widehat{L}(\textbf{X})-L(\textbf{X})\right|\right\}\\
\leq &O(n^{-(\alpha \wedge \beta)}).
\end{split}
\end{equation*}
\end{proof}

\subsection*{E.4: Proof of Lemma \ref{lemma: lemma 2}}
We prove Lemma 2 using theorem 2.14.1 of \citet{van1996weak}.
Consider the function class $\mathcal{F}_{w,e,n}=\{l(x;w,e,f):f\in B_n^*\}$. Then 
$$Q_{11}'=\frac{1}{\sqrt{|I_2|}}\mathbb{E}||\mathbb{G}_{n}||_{\mathcal{F}_{w,e,n}}.$$
Define a constant function $F(x)=M_1 \left(1+\frac{2\sqrt{M_3 \vee M_1}}{M_4\sqrt{\lambda_n }} \right)$, and we have
\begin{equation*}
    \begin{split}
        |l(x;w,e,f)|=&|w(x)\{1+e(x)f(x)\}^+| \\
\leq & |w(x)\{1+e(x)f(x)\}|\\
\leq & M_1 \cdot \left(1+\frac{2\sqrt{M_3 \vee M_1}}{M_4\sqrt{\lambda_n }}\right)\\
=& F(x)=O\left(1/\sqrt{\lambda_n}\right).
    \end{split}
\end{equation*}
Therefore, $F$ is an envelop function for $\mathcal{F}_{w,e,n}$.\par

According to theorem 2.14.1 of \citet{van1996weak}, we have
\begin{align*}
    \mathbb{E}||\mathbb{G}_{n}||_{\mathcal{F}_{w,e,n}}\leq O\left(\sup_{Q}\int_{0}^{1}\sqrt{1+\log N(\epsilon\cdot ||F||_{Q,2},\mathcal{F}_{w,e,n},L_2(Q))}d\epsilon\cdot ||F||_{P,2}\right)\label{eqn: rate},
\end{align*}
where the supremum is taken over all discrete probability measures Q with $||F||_{Q,2}>0$ and the inequality holds up to a universal constant. Note $||F||_{Q,2}\propto \frac{1}{\sqrt{\lambda_n}}$ and
 by Assumption \ref{assump: entropy} we have 
 \begin{align*}
\log N(\epsilon\cdot ||F||_{Q,2},\mathcal{F}_{w,e,n},L_2(Q))
\leq& \log N(\epsilon\cdot ||F||_{Q,2},\frac{C_0}{\lambda_n}\mathcal{F}_0,L_2(Q))  \\
\leq& \log N(\epsilon\cdot ||F||_{Q,2}\cdot \frac{\lambda_n}{C_0},\mathcal{F}_0,L_2(Q))  \\
\leq& O\big((\epsilon\cdot \sqrt{\lambda_n})^{-v}\big).
 \end{align*}
It follows that 
$$\sup_{Q}\int_{0}^{1}\sqrt{1+\log N(\epsilon\cdot ||F||_{Q,2},\mathcal{F}_{w,e,n},L_2(Q))}d\epsilon\leq O(\lambda_n^{-v/4})\leq \lambda_n^{-\frac{1}{2}}.$$
Therefore, we have
\begin{align*}
    \mathbb{E}||\mathbb{G}_{n}||_{\mathcal{F}_{w,e,n}}\leq O\left(\frac{1}{\sqrt{\lambda_n}}\cdot||F||_{P,2}\right)\leq O\left(1/{\lambda_n}\right),
\end{align*}
which implies that $Q_{11}'\leq O\left(\frac{1}{\sqrt{n}\cdot\lambda_n}\right)$.

Finally we have
\begin{equation*}
    \begin{split}
\mathbb{E}_{\textbf{X}_{[I_1]}}\mathbb{E}_{\textbf{X}_{[I_2 ]}}\sup_{f\in B_n^*}\left|\mathbb{E}l(\textbf{X} ; w,e,f) - \sum_{i\in I_2}\frac{1}{|I_2 |} l(\textbf{X}_i ; w,e,f)\right| \leq \mathbb{E}Q_{11}'\leq O\left(\frac{1}{\sqrt{n}\cdot\lambda_n}\right).
    \end{split}
\end{equation*}

\clearpage
\begin{center}
{\large\bf Supplementary Material F: Additional Simulations}
\end{center}

\subsection*{F.1: Data Generating Processes and Simulation Results when $g_2(\mathbf{X}, U, A)$ is Model (2)}
Data generating processes considered in the main article does not necessarily satisfy the IV identification assumptions underpinning the approach proposed by \citet{cui2019semiparametric}. To see this, we follow \citet{cui2019semiparametric} and \citet{wang2018bounded} and denote
\[
\Tilde{\delta}(\mathbf{X}, U) = P(A = 1 \mid Z = 1, \mathbf{X}, U) - P(A = 1 \mid Z = -1, \mathbf{X}, U),
\] and 
\[
\Tilde{\gamma}(\mathbf{X}, U) = \mathbb{E}[Y(1)\mid \mathbf{X}, U] - \mathbb{E}[Y(-1) \mid \mathbf{X}, U].
\]
\citet{wang2018bounded} showed that one sufficient condition to point identify the treatment effect is the following: $\text{Cov}(\Tilde{\delta}(\mathbf{X}, U), \Tilde{\gamma(\mathbf{X}, U)} \mid \mathbf{X}) = 0$. It is easy to see that according to the DGP in Section \ref{sec: simulation}, we have
\[
\Tilde{\delta}(X, U) = \text{expit}\{8 + X_1 - 7X_2 + \lambda(1+X_1)U\} - \text{expit}\{-8 + X_1 - 7X_2 + \lambda(1+X_1)U\},
\]
which depends on $U$ unless $\lambda = 0$. Similarly, observe that 
\[
\Tilde{\gamma}(X, U) = \text{expit}\{g_1(\mathbf{X}, U) + g_2(\mathbf{X}, U, 1)\} - \text{expit}\{g_1(\mathbf{X}, U) + g_2(\mathbf{X}, U, -1)\}
\] depends on $U$ except when $\xi = \delta = 0$. Therefore, we have $\text{Cov}(\Tilde{\delta}(\mathbf{X}, U), \Tilde{\gamma(\mathbf{X}, U)} \mid \mathbf{X}) \neq 0$ in general and the data-generating processes considered in the simulation section does \emph{not} always satisfy the IV identification assumptions studied in \citet{wang2018bounded} and \citet{cui2019semiparametric}.

Table \ref{tbl: simulation 2} is analogous to Table \ref{tbl: simulation 1} in the main article; it summarizes the simulation results when $g_2(\mathbf{X}, U, A)$ is set to be Model (2).

\begin{table}[h]
\centering
\caption{\small Estimated weighted misclassification error for different $(\lambda, \delta, \xi)$ and $g_1(\mathbf{X}, U)$ combinations. We take $g_2(\mathbf{X}, U, A)$ to be Model (2) throughout. Training data sample size $n_{\text{train}} = 300$ or $500$. Each number in the cell is averaged over $500$ simulations. Standard errors are in parentheses.}
\label{tbl: simulation 2}
\resizebox{\columnwidth}{!}{%
\fbox{%
\begin{tabular}{ccccccc} \\[-0.8em]
   &\multicolumn{2}{c}{IV-PILE-RF} &\multicolumn{2}{c}{OWL-RF} &COIN-FLIP &$C_{\text{DGP}}$\\[-0.9em] \\ \\[-1em] \\[-0.8em]
$\xi = 0$, $g_1 = \text{Model}~(1)$& $n_{\text{train}} = 300$ & $n_{\text{train}} = 500$ & $n_{\text{train}} = 300$ & $n_{\text{train}} = 500$ \\[-0.9em] \\ \\[-1em] \cline{1-1} \\[-0.8em]
$(\lambda, \delta) = (0.5, 0.5)$  &0.019 (0.019) &0.017 (0.016) & 0.194 (0.011) &0.195 (0.008) &0.111 (0.000) & 0.001\\
$(\lambda, \delta) = (1.0, 1.0)$  &0.023 (0.021) &0.022 (0.019) &0.194 (0.011) &0.196 (0.008) &0.114 (0.000) & 0.003\\
$(\lambda, \delta) = (1.5, 1.5)$  &0.034 (0.026) &0.031 (0.020) &0.198 (0.011) &0.199 (0.009) &0.119 (0.000) & 0.009\\
$(\lambda, \delta) = (2.0, 2.0)$  &0.043 (0.033) &0.042 (0.024) &0.199 (0.012) &0.202 (0.008) &0.126 (0.000) & 0.018\\ \\[-0.8em]\hline 
\\[-0.8em]
$\xi = 1$, $g_1 = \text{Model}~(1)$ \\[-0.9em] \\ \cline{1-1} \\[-0.9em]
$(\lambda, \delta) = (0.5, 0.5)$  &0.026 (0.024) &0.020 (0.019) &0.184 (0.010) &0.185 (0.008) &0.105 (0.000) & 0.000\\
$(\lambda, \delta) = (1.0, 1.0)$  &0.032 (0.028) &0.029 (0.024) &0.183 (0.010) &0.183 (0.008) &0.107 (0.000) & 0.003\\
$(\lambda, \delta) = (1.5, 1.5)$ &0.042 (0.034) &0.038 (0.028) &0.181 (0.010) &0.181 (0.008) &0.109 (0.000) & 0.005\\
$(\lambda, \delta) = (2.0, 2.0)$ &0.063 (0.042) &0.052 (0.032) &0.183 (0.011) &0.182 (0.008) &0.114 (0.000) & 0.009\\ \\[-0.8em]\hline 
\\[-0.8em]
$\xi = 0$, $g_1 = \text{Model}~(2)$  \\[-0.9em] \\ \cline{1-1} \\[-0.9em]
$(\lambda, \delta) = (0.5, 0.5)$  &0.047 (0.046) &0.039 (0.036) & 0.214 (0.011) &0.215 (0.009) &0.123 (0.000) & 0.001\\
$(\lambda, \delta) = (1.0, 1.0)$  &0.060 (0.051) &0.046 (0.037) &0.217 (0.011) &0.217 (0.008) &0.127 (0.000) & 0.005\\
$(\lambda, \delta) = (1.5, 1.5)$  &0.075 (0.054) &0.067 (0.045) &0.220 (0.011) &0.221 (0.008) &0.133 (0.000) & 0.012\\
$(\lambda, \delta) = (2.0, 2.0)$  &0.102 (0.060) &0.095 (0.050) &0.227 (0.011) &0.226 (0.009) &0.141 (0.000) & 0.021\\ \\[-0.8em]\hline 
\\[-0.8em]
$\xi = 1$, $g_1 = \text{Model}~(2)$  \\[-0.9em] \\ \cline{1-1} \\[-0.9em]
$(\lambda, \delta) = (0.5, 0.5)$  &0.046 (0.044) &0.036 (0.033) &0.213 (0.011) &0.214 (0.008) &0.123 (0.000) & 0.001\\
$(\lambda, \delta) = (1.0, 1.0)$  &0.066 (0.055) &0.048 (0.039) &0.216 (0.010) &0.216 (0.008) &0.126 (0.000) & 0.004\\
$(\lambda, \delta) = (1.5, 1.5)$ &0.079 (0.056) &0.067 (0.044) &0.220 (0.011) &0.220 (0.009) &0.132 (0.000) &0.012\\
$(\lambda, \delta) = (2.0, 2.0)$ &0.105 (0.063) &0.099 (0.052) &0.226 (0.010) &0.226 (0.008) &0.140 (0.000) & 0.021\\
\end{tabular}}}
\end{table}

\subsection*{F.2: Simulation Results Comparing IV-PILE and OWL when relevant probabilities in $L(\mathbf{X})$ and $U(\mathbf{X})$ are Estimated via Multinomial Logistic Regression and Random Forest with Different Node Sizes}

In this section, we report simulations results for IV-PILE and OWL when relevant conditional probabilities in $L(\mathbf{X})$ and $U(\mathbf{X})$ are estimated with simple but misspecified parametric regression models and random forests with different node sizes. The default node size setting as implemented in the \textsf{randomforest} package in \textsf{R} is $1$ for classification problems, and simulation results corresponding to using this default setting has been reported in the main article. We consider three additional settings in this section: 1) fitting relevant conditional probabilities with multinomial logistic regression; 2) fitting all relevant conditional probabilities with random forest model with node size = 5; 3) fitting all relevant conditional probabilities with random forest model with node size = 10. We report simulations results with $n_{\text{train}} = 300$ in Table \ref{tbl: app n_train 300} and $n_{\text{train}} = 500$ in Table \ref{tbl: app n_train 500}. Note that $C_{\text{DGP}}$ in these cases are the same as before as $C_{\text{DGP}}$ depends only on the data-generating process and not on the specific algorithms. All qualitative trends summarized in the main article apply in these additional simulations. Notably, IV-PILE outperforms OWL in all additional simulation settings, when relevant conditional probabilities are fitted using the same method. We found that tuning the node size in the random forest classifier might improve the performance in some settings, and random forest with properly selected node size outperformed the misspecified parametric models in general.

\begin{table}[ht]
\centering
\caption{\small Estimated weighted misclassification error for different combinations of $(\lambda, \delta, \xi)$, $g_2(\mathbf{X}, U, A)$, and methods for estimating relevant conditional probabilities. LOGIT: all relevant conditional probabilities are estimated via multinomial logistic regression;  RF-5: random forest with node size = 5; RF-10: random forest with node size = 10. We take $g_1(\mathbf{X}, U)$ to be Model (1) throughout. Training data sample size $n_{\text{train}} = 300$. Each number in the cell is averaged over $500$ simulations. Standard errors are in parentheses.}
\label{tbl: app n_train 300}
\resizebox{\columnwidth}{!}{
\begin{tabular}{ccccccc}\hline \\[-0.8em]
   &\multicolumn{3}{c}{IV-PILE-RF} &\multicolumn{3}{c}{OWL-RF} \\[-0.9em] \\ \\[-1em] \\[-0.8em]
$(\lambda, \delta)$ & LOGIT & RF-5 & RF-10 & LOGIT & RF-5 & RF-10\\[-0.9em] \\ \\[-1em] \\[-0.8em] \hline \\[-0.8em]
&&&$\xi = 0, g_2 = (1)$ \\[-0.9em] \\  \\[-0.9em]
$ (0.5, 0.5)$  &0.006 (0.004) &0.005 (0.001) &0.006 (0.002) &0.020 (0.003) &0.015 (0.003) &0.016 (0.003)\\
$ (1.0, 1.0)$  &0.010 (0.003) &0.008 (0.001) & 0.009 (0.002) &0.023 (0.003) &0.018 (0.003) &0.019 (0.003)  \\ 
$ (1.5, 1.5)$  &0.015 (0.003) & 0.014 (0.001) & 0.014 (0.003) & 0.028 (0.003) &0.023 (0.003) & 0.024 (0.003)  \\
$ (2.0, 2.0)$  &0.022 (0.003) & 0.021 (0.001) & 0.021 (0.003) & 0.034 (0.003) &0.029 (0.003) & 0.030 (0.003) \\ \\[-0.8em]\hline 
\\[-0.8em]
&&&$\xi = 0$, $g_2 = (2)$   \\[-0.9em] \\  \\[-0.9em]
$ (0.5, 0.5)$   &0.004 (0.007) &0.007 (0.009) & 0.003 (0.005) & 0.160 (0.018) & 0.189 (0.013) & 0.153 (0.052)\\
$ (1.0, 1.0)$   &0.007 (0.008) &0.010 (0.011) &0.006 (0.007) & 0.161 (0.016) &0.190 (0.012) & 0.184 (0.013)\\
$ (1.5, 1.5)$   &0.013 (0.007) &0.017 (0.013) &0.012 (0.007) & 0.166 (0.014) & 0.195 (0.011) &0.189 (0.012)\\
$ (2.0, 2.0)$   &0.022 (0.010) &0.027 (0.017) &0.021 (0.010) & 0.172 (0.015) & 0.198 (0.012) & 0.193 (0.013)\\ \\[-0.8em]\hline 
\\[-0.8em]
&&&$\xi = 1, g_2 = (1)$   \\[-0.9em] \\  \\[-0.9em]
$(0.5, 0.5)$  &0.007 (0.005) &0.004 (0.001) &0.004 (0.002) &0.019 (0.003) &0.014 (0.004) &0.015 (0.004) \\
$ (1.0, 1.0)$  &0.010 (0.006) &0.007 (0.001) &0.008 (0.002) &0.020 (0.002) &0.016 (0.003) &0.017 (0.003) \\
$(1.5, 1.5)$  &0.013 (0.002) & 0.012 (0.001) & 0.012 (0.002) &0.023 (0.002) & 0.020 (0.003) & 0.020 (0.003)\\
$ (2.0, 2.0)$  &0.019 (0.002) & 0.018 (0.001) & 0.019 (0.002) & 0.028 (0.002) & 0.025 (0.002) & 0.026 (0.002) \\ \\[-0.8em]\hline 
\\[-0.8em]
&&&$\xi = 1$, $g_2 = (2)$   \\[-0.9em] \\  \\[-0.9em]
$ (0.5, 0.5)$   &0.018 (0.043) & 0.015 (0.026) & 0.004 (0.009) & 0.155 (0.015) & 0.180 (0.011) & 0.126 (0.066)\\
$ (1.0, 1.0)$  &0.027 (0.041) &0.012 (0.016) &0.006 (0.009) &0.155 (0.014) &0.179 (0.011) &0.175 (0.012)\\
$ (1.5, 1.5)$  &0.039 (0.045) & 0.018 (0.019) & 0.010 (0.012) & 0.156 (0.012) & 0.178 (0.011) & 0.174 (0.011)\\
$ (2.0, 2.0)$  &0.038 (0.033) & 0.028 (0.023) & 0.016 (0.013) & 0.159 (0.012) & 0.179 (0.011) & 0.175 (0.012)\\\hline 
\end{tabular}}
\end{table}

\begin{table}[ht]
\centering
\caption{\small Estimated weighted misclassification error for different combinations of $(\lambda, \delta, \xi)$, $g_2(\mathbf{X}, U, A)$, and methods for estimating relevant conditional probabilities. LOGIT: all relevant conditional probabilities are estimated via multinomial logistic regression;  RF-5: random forest with node size = 5; RF-10: random forest with node size = 10. We take $g_1(\mathbf{X}, U)$ to be Model (1) throughout. Training data sample size $n_{\text{train}} = 500$. Each number in the cell is averaged over $500$ simulations. Standard errors are in parentheses.}
\label{tbl: app n_train 500}
\resizebox{\columnwidth}{!}{
\begin{tabular}{ccccccc}\hline \\[-0.8em]
   &\multicolumn{3}{c}{IV-PILE-RF} &\multicolumn{3}{c}{OWL-RF} \\[-0.9em] \\ \\[-1em] \\[-0.8em]
$(\lambda, \delta)$ & LOGIT & RF-5 & RF-10 & LOGIT & RF-5 & RF-10\\[-0.9em] \\ \\[-1em] \\[-0.8em]\hline \\[-0.8em]
&&&$\xi = 0, g_2 = (1)$ \\[-0.9em] \\  \\[-0.9em]
$ (0.5, 0.5)$  &0.006 (0.004) &0.005 (0.001) &0.005 (0.001) &0.021 (0.002) &0.015 (0.003) &0.016 (0.003)\\
$ (1.0, 1.0)$  &0.010 (0.006) &0.008 (0.001) &0.008 (0.002) &0.024 (0.002) &0.018 (0.003) &0.019 (0.003)  \\ 
$ (1.5, 1.5)$  &0.014 (0.003) &0.013 (0.001) &0.013 (0.001) &0.029 (0.002) &0.023 (0.003) & 0.024 (0.003) \\ 
$ (2.0, 2.0)$  &0.021 (0.003) &0.020 (0.001) &0.020 (0.002) &0.035 (0.002) &0.030 (0.003) & 0.030 (0.003) \\ \\[-0.8em]\hline 
\\[-0.8em]
&&&$\xi = 0$, $g_2 = (2)$   \\[-0.9em] \\ \\[-0.9em]
$ (0.5, 0.5)$   &0.014 (0.045) & 0.002 (0.003) & 0.003 (0.005) & 0.155 (0.014) & 0.192 (0.009) & 0.153 (0.059)\\
$ (1.0, 1.0)$   &0.018 (0.035) &0.009 (0.009) &0.006 (0.006) & 0.158 (0.013) &0.192 (0.009) &0.187 (0.010)\\
$ (1.5, 1.5)$  &0.030 (0.041) &0.016 (0.012) &0.012 (0.007) & 0.161 (0.013) & 0.194 (0.009) & 0.189 (0.104)\\
$ (2.0, 2.0)$  &0.038 (0.050) &0.025 (0.013) &0.020 (0.007) & 0.167 (0.012) & 0.200 (0.009) &0.194 (0.010) \\ \\[-0.8em]\hline 
\\[-0.8em]
&&&$\xi = 1, g_2 = (1)$   \\[-0.9em] \\  \\[-0.9em]
$ (0.5, 0.5)$  &0.007 (0.005) &0.004 (0.001) &0.004 (0.002) &0.019 (0.003) &0.014 (0.004) &0.015 (0.004) \\
$ (1.0, 1.0)$  &0.009 (0.006) &0.007 (0.001) &0.007 (0.001) &0.021 (0.002) &0.016 (0.003) &0.017 (0.003)\\
$ (1.5, 1.5)$  &0.012 (0.002) & 0.012 (0.001) &0.012 (0.001) & 0.024 (0.002) & 0.020 (0.003) & 0.021 (0.002)\\
$ (2.0, 2.0)$ &0.019 (0.002) &0.018 (0.000) & 0.018 (0.001) &0.029 (0.002) &0.025 (0.002) &0.026 (0.002) \\ \\[-0.8em]\hline 
\\[-0.8em]
&&&$\xi = 1$, $g_2 = (2)$   \\[-0.9em] \\  \\[-0.9em]
$(0.5, 0.5)$  &0.030 (0.061) & 0.002 (0.004) & 0.003 (0.007) &0.152 (0.012) & 0.182 (0.008) & 0.159 (0.031)\\
$ (1.0, 1.0)$  &0.036 (0.047) &0.012 (0.012) &0.006 (0.008) &0.153 (0.010) &0.181 (0.007) &0.177 (0.008)  \\
$ (1.5, 1.5)$  &0.052 (0.050) &0.017 (0.016) & 0.010 (0.010) & 0.154 (0.010) & 0.180 (0.008) & 0.176 (0.009)\\
$ (2.0, 2.0)$ &0.042 (0.045) &0.025 (0.019) &0.016 (0.012) &0.156 (0.009) & 0.181 (0.008) & 0.177 (0.008)\\\hline 
\end{tabular}}
\end{table}

\clearpage
\subsection*{F.3: Simulation Results of OWL-RF and IV-PILE-RF when $n_{\text{train}}$ = 1000, 2000, and 3000}

We report additional simulation results when $n_{\text{train}} = 1000$, $2000$, and $3000$. We let $g_1$ be Model (1), $g_2$ be Model $(1)$, $\xi = 0$, $\lambda = \delta = 0.5$, and all relevant conditional probabilities are estimated via random forests with default settings (node size = 1). Table \ref{tbl: app n_train 1000-3000} summarizes these results. Again, we point out that $C_{\text{DGP}}$ does not depend on $n_{\text{train}}$ and the same as before.

\begin{table}[h]
\centering
\caption{\small Estimated weighted misclassification error for larger $n_{\text{train}}$. $g_1$ is taken to be Model (1); $g_2$ is taken to be Model (1), $\xi = 0$, $(\lambda, \delta) = (0.5, 0.5)$. All relevant conditional probabilities are estimated via random forests with default settings. Each number in the cell is averaged over $500$ simulations. Standard errors are in parentheses.}
\label{tbl: app n_train 1000-3000}
\begin{tabular}{cccc}\hline \\[-0.8em]
$n_{\text{train}}$  &\multicolumn{1}{c}{IV-PILE-RF} &\multicolumn{1}{c}{OWL-RF} \\[-0.9em] \\ \\[-1em] \\[-0.8em]
$n = 1000$ &0.005 (0.000) & 0.015 (0.003)\\
$n = 2000$ &0.005 (0.000) & 0.015 (0.002)  \\ 
$n = 3000$ &0.005 (0.000) & 0.015 (0.002)  \\ \\[-0.8em]\hline 
\end{tabular}
\end{table}

\subsection*{F.4: Simulation Results of OWL and IV-PILE via Sample Splitting}
In Section \ref{sec: theory}, we studied a sample splitting version of the IV-PILE estimator. In this section, we report simulation results of this version of IV-PILE estimator. We consider the following five classifiers:
\begin{enumerate}
    \item SS-IV-PILE-RF: Sample-Splitting IV-PILE with relevant conditional probabilities in $L(\mathbf{X})$ and $U(\mathbf{X})$ estimated via random forests;
    \item SS-IV-PILE-LOGIT: Sample-Splitting IV-PILE with relevant conditional probabilities in $L(\mathbf{X})$ and $U(\mathbf{X})$ estimated via multinomial logistic regressions;
    \item OWL-RF: OWL with the propensity score estimated via random forests;
    \item OWL-LOGIT: OWL with the propensity score estimated via logistic regressions;
    \item COIN-FLIP: Classifier based on random coin flips.
\end{enumerate}
Note that the sample splitting version of IV-PILE is similar to IV-PILE except that we use half of the data ($0.5\times n_{\text{train}}$) to estimate the relevant conditional probability models in $L(\mathbf{X}), U(\mathbf{X})$, either via a random forest or a multinomial regression model, and then estimate the Balke-Pearl bound for the second half of the training data ($0.5\times n_{\text{train}}$) using the estimated $L(\mathbf{X})$ and $U(\mathbf{X})$ models and finally compute the IV-PILE estimator using this second half. We did not perform sample splitting for the OWL estimator as the theoretical derivation underpinning OWL algorithm does not require sample splitting. To conclude, we are using half of the training data to build the SS-IV-PILE-RF and SS-IV-PILE-LOGIT, and all of the training data to build OWL-RF and OWL-LOGIT. The simulation set-up is as follows: we let $g_1$ be Model (1) with $\xi = 0$, $g_2$ be Model (1), $\lambda = \delta = 0.5$, $1.0$, $1.5$, and $2.0$, and $n_{\text{train}} = 300$ or $500$. All random forest models are fit with the default settings, and all classifiers adopt a Gaussian RBF kernel. The test dataset is generated in the same way as described in Section \ref{subsubsec: train and test dataset}. Again we point out that $C_{\text{DGP}}$ in these cases are the same as before as $C_{\text{DGP}}$ depends only on the data-generating process and not the specific algorithms. 

Table \ref{tbl: simulation sample split} reports the simulation results. Compare entries under ``SS-IV-PILE-RF" and ``SS-IV-PILE-LOGIT" to the corresponding entries in Table \ref{tbl: simulation 1}, \ref{tbl: app n_train 300} and \ref{tbl: app n_train 500}, and we observed that the sample-splitting version of IV-PILE had slightly inferior performance compared to the non-sample-splitting version. However, the difference was minor and not consequential; the sample-splitting version of IV-PILE still largely outperformed OWL in simulation settings considered here.

\begin{table}[ht]
\centering
\caption{\small Estimated weighted misclassification error for different $(\lambda, \delta)$ combinations. We take $g_1(\mathbf{X}, U)$ to be Model (1) with $\xi = 0$ and $g_1(\mathbf{X}, U, A)$ to be Model (1) throughout. Training data sample size $n_{\text{train}} = 300$ or $500$. Each number in the cell is averaged over $500$ simulations. Standard errors are in parentheses.}
\label{tbl: simulation sample split}
\resizebox{\columnwidth}{!}{%
\fbox{%
\begin{tabular}{cccccc} \\[-0.8em]
   &\multicolumn{2}{c}{SS-IV-PILE-RF} &\multicolumn{2}{c}{OWL-RF} &COIN-FLIP\\[-0.9em] \\ \\[-1em] \\[-0.8em]
 & $n_{\text{train}} = 300$ & $n_{\text{train}} = 500$ & $n_{\text{train}} = 300$ & $n_{\text{train}} = 500$ \\[-0.9em] \\ \\[-1em] \cline{1-1} \\[-0.8em]
$(\lambda, \delta) = (0.5, 0.5)$  &0.008 (0.007) &0.007 (0.005) & 0.014 (0.004) & 0.015 (0.003) &0.030 (0.000) \\
$(\lambda, \delta) = (1.0, 1.0)$  &0.010 (0.007) &0.010 (0.005) & 0.017 (0.004) &0.018 (0.003) &0.033 (0.000)  \\ 
$(\lambda, \delta) = (1.5, 1.5)$  &0.016 (0.004) &0.015 (0.004) & 0.022 (0.003) & 0.023 (0.003) &0.037 (0.000)\\ 
$(\lambda, \delta) = (2.0, 2.0)$  &0.024 (0.008) &0.022 (0.004) & 0.029 (0.004) &0.029 (0.003) &0.043 (0.000)  \\ \\[-0.8em]\hline 
\\[-0.8em]
 &\multicolumn{2}{c}{SS-IV-PILE-LOGIT} &\multicolumn{2}{c}{OWL-LOGIT} &COIN-FLIP\\[-0.9em] \\ \\[-1em] \\[-0.8em]
 & $n_{\text{train}} = 300$ & $n_{\text{train}} = 500$ & $n_{\text{train}} = 300$ & $n_{\text{train}} = 500$ \\[-0.9em] \\ \\[-1em] \cline{1-1} \\[-0.8em]
$(\lambda, \delta) = (0.5, 0.5)$  &0.008 (0.005) &0.007 (0.004) &0.021 (0.002) &0.014 (0.003) &0.030 (0.000) \\
$(\lambda, \delta) = (1.0, 1.0)$  &0.011 (0.005) &0.010 (0.004) &0.023 (0.003) &0.024 (0.002) &0.033 (0.000) \\
$(\lambda, \delta) = (1.5, 1.5)$  &0.016 (0.004) &0.015 (0.004) &0.028 (0.003) &0.029 (0.002) &0.037 (0.000) \\
$(\lambda, \delta) = (2.0, 2.0)$  &0.023 (0.003) &0.022 (0.004) &0.034 (0.003) &0.035 (0.002) &0.043 (0.000) \\ \\[-0.8em]
\end{tabular}}}
\end{table}

\subsection*{F.5: Simulation Results of OWL and IV-PILE when the IV is Weak}
In this section, we consider simulation scenarios when the putative IV is valid but may be weak, in the sense that the association between the IV $Z$ and the treatment $A$ may be small. We consider the same data-generating process as in Section \ref{subsec: simulation setup}, except that $\alpha$, the association between $Z$ and $A$ in the following ``A-model" will be varied:
\begin{equation*}
    \begin{split}
        &P(A = 1 \mid \mathbf{X}, U, Z) = \text{expit}\{\alpha Z + X_1 - 7X_2 + \lambda (1 + X_1) U\}.
    \end{split}
\end{equation*}
Specifically, we let $\alpha = 2$, $4$, $8$ (as in the previous simulations), and $12$. The rest of the simulation set-up is as follows: we let $g_1$ be Model (1) with $\xi = 0$, $g_2$ be Model (1), $\lambda = \delta = 0.5$, $1.0$, $1.5$, and $2.0$, and $n_{\text{train}} = 500$. We consider the following five classifiers:
\begin{enumerate}
    \item IV-PILE-RF: IV-PILE with relevant conditional probabilities in $L(\mathbf{X})$ and $U(\mathbf{X})$ estimated via random forests;
    \item IV-PILE-LOGIT: IV-PILE with relevant conditional probabilities in $L(\mathbf{X})$ and $U(\mathbf{X})$ estimated via multinomial logistic regressions;
    \item OWL-RF: OWL with the propensity score estimated via random forests;
    \item OWL-LOGIT: OWL with the propensity score estimated via logistic regressions;
    \item COIN-FLIP: Classifier based on random coin flips.
\end{enumerate}
All random forest models are fit with the default settings, and all classifiers adopt a Gaussian RBF kernel. The test dataset is generated in the same way as described in Section \ref{subsubsec: train and test dataset}. In addition to the generalization error, we also report the average estimated compliance rate for each setting. Table \ref{tbl: simulation weak IV} summarizes the results. Again we point out that $C_{\text{DGP}}$ does not depend on the strength of the IV and remains the same as before in these cases.

When $\alpha = 2$ and the compliance is low $(\approx 0.14)$, we would expect an IV-optimal rule to have poor performance, as little information (without additional assumptions) can be learned about the CATE, the partial identification intervals are wide, and the gap between ``IV-optimality" and ``optimality" is large. Simulation results under $\alpha = 2$ corroborate this. When $\alpha$ grows larger and the estimated compliance becomes larger, e.g., $\alpha = 8$ and $\alpha = 12$, we would expect that the ``IV-optimal" rule began to have a favorable performance and largely outperformed the naive OWL-based methods. Again, this is verified by simulation results under $\alpha = 8$ and $\alpha = 12$. In empirical studies, an IV with estimated compliance around $0.1 \sim 0.2$ is often considered a weak IV; an IV with estimated compliance around $0.5$ is considered a relatively strong IV (\citealp{ertefaie2018quantitative}). In clinical trials, compliance can be even significantly higher than $0.5$. Table \ref{tbl: simulation weak IV} suggests that targeting an ``IV-optimal" ITR is a sensible thing to do when the compliance is a relatively high, and might not yield a favorable performance when the IV is not informative and compliance low. In order to obtain ``higher-quality" ITR, researchers are advised to leverage a stronger IV, or an IV such that additional IV identification assumptions that help narrow down partial identification intervals hold.

\begin{table}[ht]
\centering
\caption{\small Estimated weighted misclassification error for different $\alpha$ and $(\lambda, \delta)$. We take $g_1(\mathbf{X}, U)$ to be Model (1) with $\xi = 0$ and $g_2(\mathbf{X}, U, A)$ to be Model (1) throughout. Training data sample size $n_{\text{train}} = 500$. Each number in the cell is averaged over $500$ simulations. Standard errors are in parentheses.}
\label{tbl: simulation weak IV}
\resizebox{\columnwidth}{!}{%
\fbox{%
\begin{tabular}{ccccccc} \\[-0.8em]
   &IV-PILE-LOGIT &IV-PILE-RF &OWL-LOGIT &OWL-RF &COIN-FLIP & Compliance\\[-0.9em] \\ \\[-1em] \\[-0.8em]
 \multirow{2}{*}{\begin{tabular}{c}Weak\\ ($\alpha = 2$)\end{tabular}}  & \\ \\[-0.9em] \\ \\[-1em] \cline{1-1} \\[-0.8em]
$(\lambda, \delta) = (0.5, 0.5)$  & 0.034 (0.003) & 0.035 (0.003) & 0.033 (0.002) & 0.034 (0.001) & 0.030 (0.000) & 0.143 (0.045) \\
$(\lambda, \delta) = (1.0, 1.0)$  & 0.037 (0.003) & 0.038 (0.003) & 0.035 (0.002) & 0.036 (0.001) & 0.033 (0.000)  & 0.141 (0.047)\\ 
$(\lambda, \delta) = (1.5, 1.5)$  & 0.041 (0.003) & 0.042 (0.002) & 0.039 (0.002) & 0.040 (0.001) & 0.037 (0.000) & 0.143 (0.046)\\ 
$(\lambda, \delta) = (2.0, 2.0)$  & 0.046 (0.003) & 0.047 (0.003) & 0.044 (0.002) & 0.046 (0.001) & 0.043 (0.000) & 0.143 (0.045)  \\ \\[-0.8em]\hline 
\\[-0.8em]
  \multirow{2}{*}{\begin{tabular}{c}Moderately Weak\\ ($\alpha = 4$)\end{tabular}}  & \\ \\[-0.9em] \\ \\[-1em] \cline{1-1} \\[-0.8em]
$(\lambda, \delta) = (0.5, 0.5)$  & 0.027 (0.005) & 0.026 (0.004) & 0.027 (0.002) & 0.028 (0.002) & 0.030 (0.000) & 0.281 (0.041)\\
$(\lambda, \delta) = (1.0, 1.0)$  & 0.030 (0.004) & 0.028 (0.004) & 0.029 (0.002) & 0.031 (0.002) & 0.033 (0.000) & 0.278 (0.042)  \\
$(\lambda, \delta) = (1.5, 1.5)$ & 0.034 (0.004) & 0.033 (0.004) & 0.034 (0.002) & 0.035 (0.002) & 0.037 (0.000) & 0.281 (0.042) \\
$(\lambda, \delta) = (2.0, 2.0)$ & 0.040 (0.004) & 0.038 (0.004) & 0.039 (0.002) & 0.041 (0.002) & 0.043 (0.000) & 0.278 (0.041) \\ \\[-0.8em]\hline 
\\[-0.8em]
 \multirow{2}{*}{\begin{tabular}{c}Moderately Strong\\ ($\alpha = 8$)\end{tabular}}  & \\ \\[-0.9em] \\ \\[-1em] \cline{1-1} \\[-0.8em]
$(\lambda, \delta) = (0.5, 0.5)$ & 0.005 (0.003) & 0.005 (0.000) & 0.021 (0.002) & 0.015 (0.003) & 0.030 (0.000) & 0.473 (0.033) \\
$(\lambda, \delta) = (1.0, 1.0)$  & 0.009 (0.002) & 0.008 (0.000) & 0.024 (0.002) & 0.018 (0.003) & 0.033 (0.000) & 0.473 (0.035) \\
$(\lambda, \delta) = (1.5, 1.5)$  & 0.014 (0.003) & 0.014 (0.000) & 0.029 (0.002) & 0.023 (0.003) & 0.037 (0.000) & 0.474 (0.032) \\
$(\lambda, \delta) = (2.0, 2.0)$  & 0.021 (0.003) & 0.020 (0.000) & 0.035 (0.002) & 0.029 (0.003) & 0.043 (0.000) & 0.465 (0.032) \\ \\[-0.8em]\hline
\\[-0.8em]
\multirow{2}{*}{\begin{tabular}{c}Moderately Strong\\ ($\alpha = 12$)\end{tabular}}  & \\ \\[-0.9em] \\ \\[-1em] \cline{1-1} \\[-0.8em]
$(\lambda, \delta) = (0.5, 0.5)$  & 0.005 (0.002) & 0.005 (0.000) & 0.021 (0.002) & 0.012 (0.003) & 0.030 (0.000) & 0.499 (0.033) \\
$(\lambda, \delta) = (1.0, 1.0)$  & 0.008 (0.002) & 0.008 (0.000) & 0.024 (0.002) & 0.015 (0.003) & 0.033 (0.000) & 0.499 (0.032) \\
$(\lambda, \delta) = (1.5, 1.5)$  & 0.013 (0.002) & 0.014 (0.000) & 0.029 (0.002) & 0.020 (0.003) & 0.037 (0.000) & 0.499 (0.031) \\
$(\lambda, \delta) = (2.0, 2.0)$  & 0.020 (0.002) & 0.020 (0.000) & 0.035 (0.002) & 0.026 (0.003) & 0.043 (0.000) & 0.500 (0.032) \\ \\[-0.8em]
\end{tabular}}}
\end{table}

\subsection*{F.6: Simulation Results of OWL and IV-PILE when the IV is Invalid}
We investigate the performance of IV-PILE when the putative IV is invalid in this section. Specifically, we consider a situation where the exclusion restriction assumption is violated so that the IV $Z$ has a direct effect on the outcome $Y$. We consider the following data-generating process which is slightly modified from that in Section \ref{subsubsec: data generating process}:
\begin{equation*}
    \begin{split}
        &Z \sim \text{Bern}(0.5), ~ X_1,\cdots,X_{10} \sim \text{Unif}~[-1, 1], ~ U \sim \text{Unif}~[-1, 1],\\
        &P(A = 1 \mid \mathbf{X}, U, Z) = \text{expit}\{8Z + X_1 - 7X_2 + \lambda (1 + X_1) U\},\\
        &P(Y = 1 \mid A, \mathbf{X}, U) = \text{expit}\{1 - X_1 + X_2 + 0.442(1 - X_1 + X_2 + \delta U)A + cZ\}.
    \end{split}
\end{equation*}
In the above data-generating process, we allow $Z$ to have a direct effect on $Y$ by adding a ``cZ" term in the ``Y-model". Similar to the previous simulations, we consider the following three classifiers: IV-PILE-RF, OWL-RF, and COIN-FLIP, and tabulate the results for each classifier against different choices of $(c, \lambda, \delta)$. We let $n_\text{train} = 500$. We calculate $C_{\text{DGP}}$ for the new data-generating processes. Table \ref{tbl: simulation invalid IV} summarizes the results for $c = -1$, $1$, and $2$, representing mild violations of the exclusion restriction assumption. We observed that the IV-PILE algorithm seemed to be relatively robust to mild violations of the exclusion restriction assumption.

\begin{table}[ht]
\centering
\caption{\small Estimated weighted misclassification error for different $c$ and $(\lambda, \delta)$. We take $g_1(\mathbf{X}, U)$ to be Model (1) with $\xi = 0$ and $g_2(\mathbf{X}, U, A)$ to be Model (1) throughout. Training data sample size $n_{\text{train}} = 500$. Each number in the cell is averaged over $500$ simulations. Standard errors are in parentheses.}
\label{tbl: simulation invalid IV}
\fbox{
\begin{tabular}{cccccccc} \\[-0.8em]
   &IV-PILE-RF &OWL-RF &COIN-FLIP &$C_{\text{DGP}}$\\[-0.9em] \\ \\[-1em] \\[-0.8em]
  
  $c = -1$  & \\[-0.9em] \\ \\[-1em] \cline{1-1} \\[-0.8em]
$(\lambda, \delta) = (0.5, 0.5)$  & 0.012 (0.008) & 0.023 (0.005) & 0.038 (0.000) & 0.001\\
$(\lambda, \delta) = (1.0, 1.0)$  & 0.014 (0.008) & 0.026 (0.005) & 0.040 (0.000) & 0.004\\ 
$(\lambda, \delta) = (1.5, 1.5)$  & 0.018 (0.007) & 0.030 (0.004) & 0.045 (0.000) & 0.009\\ 
$(\lambda, \delta) = (2.0, 2.0)$  & 0.024 (0.007) & 0.035 (0.004) & 0.050 (0.000) & 0.014\\ \\[-0.8em]\hline 
\\[-0.8em]
 
 $c = 1$  & \\[-0.9em] \\ \\[-1em] \cline{1-1} \\[-0.8em]
$(\lambda, \delta) = (0.5, 0.5)$  & 0.005 (0.000) & 0.011 (0.002) & 0.023 (0.000) & 0.001\\
$(\lambda, \delta) = (1.0, 1.0)$  & 0.008 (0.000) & 0.014 (0.002) & 0.026 (0.000) & 0.004\\ 
$(\lambda, \delta) = (1.5, 1.5)$  & 0.013 (0.000) & 0.018 (0.002) & 0.029 (0.000) & 0.008\\ 
$(\lambda, \delta) = (2.0, 2.0)$  & 0.019 (0.000) & 0.024 (0.002) & 0.034 (0.000) & 0.014\\ \\[-0.8em]\hline 
\\[-0.8em]
  $c = 2$  & \\[-0.9em] \\ \\[-1em] \cline{1-1} \\[-0.8em]
$(\lambda, \delta) = (0.5, 0.5)$  & 0.004 (0.000) & 0.009 (0.002) & 0.019 (0.000) & 0.001\\
$(\lambda, \delta) = (1.0, 1.0)$  & 0.007 (0.000) & 0.011 (0.002) & 0.021 (0.000) & 0.004\\ 
$(\lambda, \delta) = (1.5, 1.5)$  & 0.011 (0.000) & 0.015 (0.002) & 0.026 (0.000) & 0.008\\ 
$(\lambda, \delta) = (2.0, 2.0)$  & 0.016 (0.000) & 0.020 (0.002) & 0.028 (0.000) & 0.012\\ \\[-0.8em]
\\[-0.8em]
\end{tabular}}
\end{table}

\subsection*{F.7: Simulation Results of OWL and IV-PILE when the Outcome is Continuous but Bounded}
We consider the following data-generating process with a binary IV $Z$, a binary treatment $A$, a continuous but bounded outcome $Y$, a $10$-dimensional observed covariates $\mathbf{X}$, and an unmeasured confounder $U$:
\begin{equation*}
    \begin{split}
        &Z \sim \text{Bern}(0.5), ~ X_1,\cdots,X_{10} \sim \text{Unif}~[-1, 1], ~ U \sim \text{Unif}~[-1, 1],\\
        &P(A = 1 \mid \mathbf{X}, U, Z) = \text{expit}\{8Z + X_1 - 7X_2 + \lambda (1 + X_1) U\},\\
        &Y \mid A, \mathbf{X}, U \sim \text{Truncated Normal}(\mu(\mathbf{X}, U, A), \sigma = 1, a = -3, b = 4),
    \end{split}
\end{equation*}
where 
\[
\mu(\mathbf{X}, U, A) = g_1(\mathbf{X}, U) + 3g_2(\mathbf{X}, U, A)
\]
with the same choices of $g_1(\mathbf{X}, U)$:
\begin{equation*}
    \begin{split}
    &\text{Model}~(1): \qquad g_1(\mathbf{X}, U) = 1 - X_1 + X_2 + \xi U,
    \end{split}
\end{equation*}
and $g_2(\mathbf{X}, U, A)$ as before:
\begin{equation*}
    \begin{split}
    &\text{Model}~(1): \qquad g_2(\mathbf{X}, U, A) = 0.442(1 - X_1 + X_2 + \delta U)A, \\
    &\text{Model}~(2): \qquad g_2(\mathbf{X}, U, A) = (X_2 - 0.25X_1^2 - 1 + \delta U)A.
    \end{split}
\end{equation*}
In a truncated normal distribution, $\mu(\mathbf{X}, U, A)$ and $\sigma$ specify the mean and standard deviation of the ``parent" normal distribution, and $[a, b]$ specifies the truncation interval. Here, we have adopted a truncated normal distribution to model a continuous but bounded outcome of interest. The rest of the simulation setup is the same as described in Section \ref{subsec: simulation setup}, except that we use the Manski-Pepper bound described in \eqref{eqn: Mansk-Pepper lower bound} and \eqref{eqn: Mansk-Pepper upper bound} (see Supplementary Material A) to calculate the partial identification interval of the CATE. We calculate $C_{\text{DGP}}$ for each data-generating process.

We consider the following three classifiers:
\begin{enumerate}
    \item IV-PILE-RF: IV-PILE with relevant conditional probabilities/expectations in $L(\mathbf{X})$ and $U(\mathbf{X})$ estimated via random forests;
    \item OWL-RF: OWL with the propensity score estimated via a random forest;
    \item COIN-FLIP: Classifier based on random coin flips.
\end{enumerate}
Table \ref{tbl: simulation cont} summarizes simulation results. Again, we observed qualitatively similar behaviors as in the binary outcome simulations.

\begin{table}[ht]
\centering
\caption{\small Simulation results when $Y$ is continuous but bounded. Estimated \emph{weighted misclassification error} for different $(\lambda, \delta)$, $\xi$, and $g_2(\mathbf{X}, U, A)$ combinations. Training data sample size $n_{\text{train}} = 1000$. Each number in the cell is averaged over $500$ simulations. Standard errors are in parentheses.}
\label{tbl: simulation cont}
\fbox{%
\begin{tabular}{ccccccc} \\[-0.8em]
   &\multicolumn{1}{c}{IV-PILE-RF} &\multicolumn{1}{c}{OWL-RF} &COIN-FLIP &$C_{\text{DGP}}$\\[-0.9em] \\ \\[-1em] \\[-0.8em]
 $\xi = 0$, $g_2 = \text{Model}~(1)$&  \\[-0.9em] \\ \\[-1em] \cline{1-1} \\[-0.8em]
$(\lambda, \delta) = (0.5, 0.5)$  & 0.060 (0.029) & 0.261 (0.048) & 0.732 (0.001) & 0.003\\
$(\lambda, \delta) = (1.0, 1.0)$  & 0.106 (0.034) & 0.307 (0.047) & 0.773 (0.001) & 0.013\\ 
$(\lambda, \delta) = (1.5, 1.5)$  & 0.184 (0.039) & 0.386 (0.047) & 0.842 (0.001) & 0.030\\ 
$(\lambda, \delta) = (2.0, 2.0)$  & 0.292 (0.054) & 0.492 (0.051) & 0.936 (0.001) & 0.049\\  \\[-0.8em]\hline 
\\[-0.8em]
$\xi = 0$, $g_2 = \text{Model}~(2)$   \\[-0.9em] \\ \cline{1-1} \\[-0.9em]
$(\lambda, \delta) = (0.5, 0.5)$  & 0.020 (0.001) & 0.028 (0.006) & 1.645 (0.002) & 0.003\\
$(\lambda, \delta) = (1.0, 1.0)$  & 0.098 (0.001) & 0.115 (0.011) & 1.723 (0.002) & 0.008\\
$(\lambda, \delta) = (1.5, 1.5)$  & 0.239 (0.001) & 0.280 (0.020) & 1.864 (0.003) & 0.019\\
$(\lambda, \delta) = (2.0, 2.0)$  & 0.443 (0.003) & 0.531 (0.035) & 2.067 (0.003) & 0.032\\ \\[-0.8em]\hline 
\\[-0.8em]
$\xi = 1$, $g_2 = \text{Model}~(1)$   \\[-0.9em] \\ \cline{1-1} \\[-0.9em]
$(\lambda, \delta) = (0.5, 0.5)$  & 0.081 (0.043) & 0.265 (0.047) & 0.732 (0.001) & 0.002\\
$(\lambda, \delta) = (1.0, 1.0)$  & 0.137 (0.053) & 0.312 (0.049) & 0.774 (0.001) & 0.012\\
$(\lambda, \delta) = (1.5, 1.5)$  & 0.221 (0.066) & 0.395 (0.051) & 0.842 (0.001) & 0.025\\
$(\lambda, \delta) = (2.0, 2.0)$  & 0.342 (0.080) & 0.507 (0.050) & 0.936 (0.001) & 0.042\\ \\[-0.8em]\hline 
\\[-0.8em]
$\xi = 1$, $g_2 = \text{Model}~(2)$   \\[-0.9em] \\ \cline{1-1} \\[-0.9em]
$(\lambda, \delta) = (0.5, 0.5)$  & 0.020 (0.000) & 0.036 (0.011) & 1.645 (0.002) & 0.002\\
$(\lambda, \delta) = (1.0, 1.0)$  & 0.098 (0.001) & 0.131 (0.017) & 1.723 (0.002) & 0.005\\
$(\lambda, \delta) = (1.5, 1.5)$ & 0.239 (0.001) & 0.314 (0.032) & 1.864 (0.002) & 0.010\\
$(\lambda, \delta) = (2.0, 2.0)$ & 0.446 (0.008) & 0.601 (0.053) & 2.067 (0.003) & 0.020 \\
\end{tabular}}
\end{table}

\clearpage
\begin{center}
{\large\bf Supplementary Material G: Plug-in Estimators of IV-Optimal Rules and Theoretical Properties}
\end{center}

\subsection*{G.1: Plug-In Estimators for IV-Optimal Rules}

One approach to optimal ITR estimation problems in non-IV settings is to specify various aspects of the conditional distribution of the outcome given covariates and treatment assignment, i.e., $C(\mathbf{X})$, and take the sign of $\widehat{C}(\mathbf{X})$ to be the optimal treatment rule. These approaches are often called \emph{indirect approaches} in the literature as they do not directly target estimating ITRs. Methodologies that fall into this line include g-estimation methods in structural nested models (\citealp{robins1989analysis}; \citealp{murphy2003optimal}; \citealp{robins2004optimal}) and Q- and A-learning (\citealp{zhao2009reinforcement}; \citealp{qian2011performance}; \citealp{moodie2012q}). We can easily adapt the idea to derive a simple plug-in estimator based on $\widehat{L}(\mathbf{X})$ and $\widehat{U}(\mathbf{X})$, estimators of $L(\mathbf{X})$ and $U(\mathbf{X})$, respectively. Proposition \ref{thm: simple plug-in rules} establishes such an estimator $\widehat{f}_{\text{plug-in}}$ for IV-optimal rules.

\begin{customprop}{G1}\rm
\label{thm: simple plug-in rules}
Let $\widehat{L}$ and $\widehat{U}$ be estimators of $L$ and $U$. The plug-in estimator
\[
\widehat{f}_{\text{plug-in}}(\mathbf{x})=\text{sgn}\big\{\widehat{U}(\mathbf{x})^{+} - \{-\widehat{L}(\mathbf{x})\}^{+}\big\}
\]
minimizes $\mathcal{R}_{\text{upper}}(f;\widehat{L}(\cdot),\widehat{U}(\cdot))$, where  $u^{+} =u \vee 0$ for any $u\in \mathbb{R}$.
\end{customprop}
\par

Theorem \ref{thm: simple plug-in rules are optimal} and  Theorem \ref{thm: simple plug-in rules are optimal, lower bound} further establish the rate of convergence of $\widehat{f}_{\text{plug-in}}$ and prove that it is rate optimal.

\begin{customthm}{G1}\rm
\label{thm: simple plug-in rules are optimal}
\begin{equation}
      \mathcal{R}_{\text{upper}}(\widehat{f}_{\text{plug-in}})-\mathcal{R}^\ast_{\text{upper}}\leq O\left(n^{-(\alpha \wedge \beta)}\right).  
\end{equation}
\end{customthm}
\par

\begin{customthm}{G2}\rm
\label{thm: simple plug-in rules are optimal, lower bound}
Let $\theta$ denote a general parameter determining the joint distribution of $(\textbf{X},Z,A,Y)$ and $\Theta$ be the set of all $\theta$ that satisfy IV assumptions $IV.A1-IV.A4$. We observe i.i.d samples of $\textbf{X},Z,A,Y$. Let $U$ and $L$ be defined by Balke-Pearl Bound in (\ref{eqn: bp lower bound}) and (\ref{eqn: bp upper bound}). Suppose that we first obtain function estimates $\widehat{U}$ and $\widehat{L}$ that satisfy Assumption (\ref{assump: rate of convergence of L and U}) and then an estimator $\widehat{f}$ that only depends on $\{(\textbf{X}_{i}, Z_i),~ i=1,\dots,n \}$, and $\widehat{U}$ and $\widehat{L}$. Let 
\begin{equation*}
    \mathcal{F}_{L,n}=\left\{\widehat{L}:\mathbb{E}\left[|\widehat{L}(\textbf{X})-L(\textbf{X})|\right]\leq 2n^{-\alpha}\right\},
\end{equation*}
\begin{equation*}
    \mathcal{F}_{U,n} = \left\{\widehat{U}:\mathbb{E}\left[|\widehat{U}(\textbf{X})-U(\textbf{X})|\right]\leq 2n^{-\beta}\right\}.
\end{equation*}
Then
\begin{equation}
   \sup_{\theta\in \Theta, \widehat{L}\in \mathcal{F}_{L,n}, \widehat{U}\in \mathcal{F}_{U,n}}\mathcal{R}_{\text{upper}}(\widehat{f})-\mathcal{R}^\ast_{\text{upper}}\geq  n^{-(\alpha \wedge \beta)}.
\end{equation}
\end{customthm}

\begin{customremark}{G1}\rm
We consider supreme excess risk over all function estimates $\widehat{L}\in \mathcal{F}_{L,n}$ and $\widehat{U}\in \mathcal{F}_{U,n}$ because we treat $\widehat{L}$ and $\widehat{U}$ as general estimates in the article and do not assume any particular structure/properties other than Assumption (\ref{assump: rate of convergence of L and U}). Also, we consider $\widehat{f}$ that depends on $\{(Y_i,Z_i),i=1,\dots n\}$ only through $\widehat{U}$ and $\widehat{L}$. If we allow $\widehat{f}$ to directly depend on $Y_{i}$ and $Z_i$ then one can construct another set of $\widehat{L}$ and $\widehat{U}$ that may converge faster using $(\textbf{X}_i,Z_i,A_i,Y_i)$. To avoid such case we make the restriction. Note both the plug-in estimator and the SVM-based estimator we propose obey this restriction.
\end{customremark}\rm

Although plug-in estimators for learning IV-optimal rules are simple and rate optimal, they may some undesirable features as elaborated in Remark \ref{remark: we also develop plug-in estimators}. In essence, such estimators do not allow empirical researchers to estimate $L(\mathbf{X})$ and $U(\mathbf{X})$ using some flexible machine learning tools, while maintain the simplicity of learned ITRs, as $\widehat{f}_{\text{plug-in}}$ follows immediately from $\widehat{L}(\mathbf{X})$ and $\widehat{U}(\mathbf{X})$.

\subsection*{G.2: Proofs of Proposition \ref{thm: simple plug-in rules}, Theorem \ref{thm: simple plug-in rules are optimal}, and Theorem \ref{thm: simple plug-in rules are optimal, lower bound}}

\subsubsection*{G.2.1: Proof of Proposition \ref{thm: simple plug-in rules}}

Observe that for any $f$:
\begin{equation*}
    \begin{split}
     &\sup_{C'(\mathbf{x}) \in [\widehat{L}(\mathbf{x}), \widehat{U}(\mathbf{x})]}  |C'(\mathbf{x})|\cdot \mathbbm{1}\big\{\text{sgn}\{f(\mathbf{x})\} \neq \text{sgn}\{C'(\mathbf{x})\}\big\}\\
    &\geq \mathbbm{1}\big\{L(\mathbf{x})<0<U(\mathbf{x})\big\}\cdot\min(|L(\mathbf{x})|,|U(\mathbf{x})|)\\
    &=\sup_{C'(\mathbf{x}) \in [\widehat{L}(\mathbf{x}), \widehat{U}(\mathbf{x})]} |C'(\mathbf{x})|\cdot \mathbbm{1}\big\{\text{sgn}\{\widehat{U}(\mathbf{x})^{+} > \{-\widehat{L}(\mathbf{x})\}^{+}\} \neq \text{sgn}\{C'(\mathbf{x})\}\big\}
   \\
    &=\sup_{C'(\mathbf{x}) \in [\widehat{L}(\mathbf{x}), \widehat{U}(\mathbf{x})]} |C'(\mathbf{x})|\cdot \mathbbm{1}\big\{\text{sgn}\{\widehat{f}_{\text{plug-in}}(\mathbf{x})\} \neq \text{sgn}\{C'(\mathbf{x})\}\big\}.
     \end{split}
\end{equation*}
Therefore we have
\begin{equation*}
    \begin{split}
        &\mathcal{R}_{\text{upper}}(f;\widehat{L}(\cdot),\widehat{U}(\cdot))\\
        = &\mathbb{E}\left[\sup_{C'(\mathbf{X}) \in [\widehat{L}(\mathbf{X}), \widehat{U}(\mathbf{X})]}  |C'(\mathbf{X})|\cdot \mathbbm{1}\{\text{sgn}\{f(\mathbf{X})\} \neq \text{sgn}\{C'(\mathbf{X})\}\}\right]\\
    \geq & \mathbb{E}\left[\sup_{C'(\mathbf{X}) \in [\widehat{L}(\mathbf{X}), \widehat{U}(\mathbf{X})]}  |C'(\mathbf{X})|\cdot \mathbbm{1}\{\text{sgn}\{\widehat{f}_{\text{plug-in}}(\mathbf{X})\} \neq \text{sgn}\{C'(\mathbf{X})\}\}\right]\\
    = &\mathcal{R}_{\text{upper}}(\widehat{f}_{\text{plug-in}};\widehat{L}(\cdot),\widehat{U}(\cdot))
    \end{split}
\end{equation*}

\subsubsection*{G.2.2: Proof of Theorem \ref{thm: simple plug-in rules are optimal}}

According to Proposition \ref{prop: excess risk}
\begin{equation}
  \mathcal{R}_{\text{upper}}(\widehat{f}_{\text{plug-in}})-\mathcal{R}^\ast_{\text{upper}}  = \mathbb{E}\left[\mathbbm{1}\big\{\text{sgn}\{\widehat{f}_{\text{plug-in}}(\textbf{X})\}\neq \text{sgn}\{f^\ast(\textbf{X})\}\big\}\cdot|\eta(\textbf{X})|\right]
  \label{eqn: plug in est excess risk}
\end{equation}
From the Proposition \ref{prop: Bayes decision rule} we know that $\text{sgn}\{f^\ast(\mathbf{x})\} = \text{sgn}\{\eta(\mathbf{x})\}$. Define
\begin{equation*}
    \widehat{\eta}(\mathbf{x})=|\widehat{U}(\mathbf{x})|\cdot\mathbbm{1}\{\widehat{L}(\mathbf{x}) > 0\} - |\widehat{L}(\mathbf{x})|\cdot\mathbbm{1}\{\widehat{U}(\mathbf{x}) < 0\} + (|\widehat{U}(\mathbf{x})| - |\widehat{L}(\mathbf{x})|)\cdot\mathbbm{1}\{[\widehat{L}(\mathbf{x}), \widehat{U}(\mathbf{x})] \ni 0\}.
\end{equation*}
Then we have 
\begin{equation*}
    \text{sgn}\{\widehat{f}_{\text{plug-in}}(\mathbf{x})\} = \text{sgn}\{\widehat{\eta}(\mathbf{x})\}.
\end{equation*}
Plug into (\ref{eqn: plug in est excess risk}), and we have 
\begin{equation*}
   \mathcal{R}_{\text{upper}}(\widehat{f}_{\text{plug-in}})-\mathcal{R}^\ast_{\text{upper}}  =\mathbb{E}\left[\mathbbm{1}\big\{\text{sgn}\{\widehat{\eta}(\textbf{X})\}\neq \text{sgn}\{\eta(\textbf{X})\}\big\}\cdot|\eta(\textbf{X})|\right]
\end{equation*}
For any $\mathbf{x}$, if $\text{sgn}\{\widehat{\eta}(\mathbf{x}))\neq \text{sgn}(\eta(\mathbf{x})\}$, we then have $|\widehat{\eta}(\mathbf{x})-\eta(\mathbf{x})|\geq  |\eta(\mathbf{x})|$, which implies
\begin{equation*}
    \mathbbm{1}\big\{\text{sgn}\{\widehat{\eta}(\textbf{x})\}\neq \text{sgn}\{\eta(\textbf{x})\}\big\}\cdot|\eta(\textbf{x})|\leq |\widehat{\eta}(\mathbf{x})-\eta(\mathbf{x})|.
    \end{equation*}
Therefore
\begin{equation*}
     \mathcal{R}_{\text{upper}}(\widehat{f}_{\text{plug-in}})-\mathcal{R}^\ast_{\text{upper}} \leq \mathbb{E}\left[|\widehat{\eta}(\textbf{X})-\eta(\textbf{X})|\right].
\end{equation*}
Moreover, it can be easily verified that $\eta(\mathbf{x})=-w(\mathbf{x})e(\mathbf{x})$ and $\widehat{\eta}(\mathbf{x})=-\widehat{w}(\mathbf{x})\widehat{e}(\mathbf{x})$
, which implies
\begin{equation*}
     \mathcal{R}_{\text{upper}}(\widehat{f}_{\text{plug-in}})-\mathcal{R}^\ast_{\text{upper}} \leq \mathbb{E}\left[|\widehat{w}(\textbf{X})\widehat{e}(\textbf{X})-w(\textbf{X})e(\textbf{X})|\right].
\end{equation*}
Apply Lemma \ref{lemma: what and w} and we have
\begin{equation*}
\mathcal{R}_{\text{upper}}(\widehat{f}_{\text{plug-in}})-\mathcal{R}^\ast_{\text{upper}} \leq O\left(n^{-(\alpha \wedge \beta)}\right).
\end{equation*}

\subsubsection*{G.2.3: Proof of Theorem \ref{thm: simple plug-in rules are optimal, lower bound}}

We will first prove 
  \begin{equation}
      \sup_{\theta\in \Theta, \widehat{L}\in \mathcal{F}_{L,n}, \widehat{U}\in \mathcal{F}_{U,n}}\mathcal{R}_{\text{upper}}(\widehat{f})-\mathcal{R}^\ast_{\text{upper}}\geq  n^{-\beta}, \label{eqn: first prove}
  \end{equation} 
then by symmetry it's easy to prove that 
$ \sup_{\theta\in \Theta, \widehat{L}\in \mathcal{F}_{L,n}, \widehat{U}\in \mathcal{F}_{U,n}}\mathcal{R}_{\text{upper}}(\widehat{f})-\mathcal{R}^\ast_{\text{upper}}\geq  n^{-\alpha}$.

To prove (\ref{eqn: first prove}), we will construct two situations where they share the same joint distribution of $(\textbf{X},Z,A,Y)$ and the same $\widehat{U}$ and $\widehat{L}$, but have different optimal rules. 
Let $\textbf{X}$ be an independent uniform random variable on $[0,1]$ and $Z$ an independent Bernoulli($1/2$) random variable. We construct $p_{y, a \mid z, \mathbf{X}}$ as in Table \ref{tbl: joint distribution 1} for two scenarios, where $\epsilon_n = 2n^{-\beta}$. 

\begin{table}[h]
\centering
\caption{}
\label{tbl: joint distribution 1}
\fbox{
\begin{tabular}{lcc}\\[-0.8em]
$n_{\text{train}}$  &\multicolumn{1}{c}{Scenario 1} &\multicolumn{1}{c}{Scenario 2} \\[-0.9em] \\ \\[-1em] \\[-0.8em]
$p_{1,1\mid 1,\textbf{X}}$ &$\frac{1}{2}\epsilon_{n}+\frac{1}{4}$ & $\frac{1}{2}\epsilon_{n}+\frac{1}{4}$\\
$p_{-1,1\mid 1,\textbf{X}}$ &$\epsilon_{n}+\frac{1}{4}$ & $\frac{1}{4}$  \\ 
$p_{1,-1\mid 1,\textbf{X}}$ & $-\epsilon_{n}+\frac{1}{4}$ & $\frac{1}{4}$\\
$p_{-1,-1\mid 1,\textbf{X}}$ & $-\frac{1}{2}\epsilon_{n}+\frac{1}{4}$ &$-\frac{1}{2}\epsilon_{n}+\frac{1}{4}$\\
$p_{1,1\mid -1,\textbf{X}}$ &$-\frac{1}{2}\epsilon_{n}+\frac{1}{4}$ &$-\frac{1}{2}\epsilon_{n}+\frac{1}{4}$\\
$p_{1,-1\mid -1,\textbf{X}}$ &$-\epsilon_{n}+\frac{1}{4}$ & $\frac{1}{4}$\\
$p_{1,-1\mid -1,\textbf{X}}$ &$\epsilon_{n}+\frac{1}{4}$ &$\frac{1}{4}$\\
$p_{-1,-1\mid -1,\textbf{X}}$ & $\frac{1}{2}\epsilon_{n}+\frac{1}{4}$ & $\frac{1}{2}\epsilon_{n}+\frac{1}{4}$\\ 
\end{tabular}}
\end{table}

It's easy to verify that both scenarios yield well-defined joint distributions of $(\textbf{X},Z,A,Y)$ that satisfy IV assumptions IV.A1-IV.A2. According to the Balke-Pearl Bound, in Scenario 1, $L^{(1)}(x)=\epsilon_n - \frac{1}{2}$ and $U^{(1)}(x)= \frac{1}{2}-2\epsilon_n$ for all $x\in [0,1]$. In Scenario 2, $L^{(2)}(x)=\epsilon_n - \frac{1}{2}$ and $U^{(2)}(x)=\frac{1}{2}$ for all $x\in [0,1]$. Let $\widehat{L}(x)=\epsilon_n -\frac{1}{2}$ and $\widehat{U}(x)=\frac{1}{2}-\epsilon_n$, and it holds that $\widehat{L}\in \mathcal{F}_{L,n}$ and $\widehat{U}\in \mathcal{F}_{U,n}$ for both scenarios.

Therefore in both scenarios we observe the same $\{(\textbf{X}_i , Z_i,A_i,Y_i),i=1,\dots,n\}$, and $\widehat{U}$ and $\widehat{L}$. This implies that we would have the same $\widehat{f}$. However, the optimal rules in these two scenarios are precisely the opposite: in Scenario 1, the optimal rule should be always negative while in Scenario 2 it should be always positive. Consider $\eta^{(1)}(\mathbf{x})$ and $\eta^{(2)}(\mathbf{x})$ as defined in proposition \ref{prop: Bayes decision rule} for both scenarios. Then
\begin{equation*}
    \eta^{(2)}(\mathbf{x})=\epsilon_n =-\eta^{(1)}(\mathbf{x}),
\end{equation*}
which implies that
\begin{equation}
    \mathbbm{1}\big\{\text{sgn}\{\widehat{f}(\mathbf{x})\}\neq \text{sgn}\{\eta^{(1)}(\mathbf{x})\}\big\}\cdot|\eta^{(1)}(\mathbf{x})|+ \mathbbm{1}\big\{\text{sgn}\{\widehat{f}(\mathbf{x})\}\neq \text{sgn}\{\eta^{(2)}(\mathbf{x})\}\big\}\cdot|\eta^{(2)}(\mathbf{x})|=\epsilon_n
\end{equation}
Let $\mathcal{R}^{\ast,(1)}_{\text{upper}}$ and $\mathcal{R}^{\ast,(2)}_{\text{upper}}$ denote the optimal risk for Scenario 1 and 2. Then we have
\begin{equation*}
    \begin{split}
        &\mathcal{R}_{\text{upper}}(\widehat{f};L^{(1)}(\cdot),U^{(1)}(\cdot))- \mathcal{R}^{\ast,(1)}_{\text{upper}}+\mathcal{R}_{\text{upper}}(\widehat{f};L^{(2)}(\cdot),U^{(2)}(\cdot))- \mathcal{R}^{\ast,(2)}_{\text{upper}}\\
        =& \mathbb{E}\left[\mathbbm{1}\big\{\text{sgn}\{\widehat{f}(\mathbf{X})\}\neq \text{sgn}\{\eta^{(1)}(\mathbf{X})\}\big\}\cdot|\eta^{(1)}(\mathbf{X})|\right]+ \mathbb{E}\left[\mathbbm{1}\big\{\text{sgn}\{\widehat{f}(\mathbf{X})\}\neq \text{sgn}\{\eta^{(2)}(\mathbf{X})\}\big\}\cdot|\eta^{(2)}(\mathbf{X})|\right]\\
        =&\epsilon_n.
    \end{split}
\end{equation*}
Finally, we have
\begin{equation*}
    \begin{split}
        &\sup_{\theta\in \Theta, \widehat{L}\in \mathcal{F}_{L,n}, \widehat{U}\in \mathcal{F}_{U,n}}\mathcal{R}_{\text{upper}}(\widehat{f})-\mathcal{R}^\ast_{\text{upper}}\\
        \geq &\frac{1}{2}\left[\mathcal{R}_{\text{upper}}(\widehat{f};L^{(1)}(\cdot),U^{(1)}(\cdot))- \mathcal{R}^{\ast,(1)}_{\text{upper}}+\mathcal{R}_{\text{upper}}(\widehat{f};L^{(2)}(\cdot),U^{(2)}(\cdot))- \mathcal{R}^{\ast,(2)}_{\text{upper}}\right]\\
        =&\frac{1}{2}\epsilon_n = n^{-\beta}
    \end{split}
\end{equation*}

Analogously, we can prove $\sup_{\theta\in \Theta, \widehat{L}\in \mathcal{F}_{L,n}, \widehat{U}\in \mathcal{F}_{U,n}}\mathcal{R}_{\text{upper}}(\widehat{f})-\mathcal{R}^\ast_{\text{upper}}\geq n^{-\alpha}$ by considering the conditional distributions in two scenarios as in Table \ref{tbl: joint distribution 2}:

\begin{table}[h]
\centering
\caption{}
\label{tbl: joint distribution 2}
\fbox{
\begin{tabular}{lcc}\\[-0.8em]
$n_{\text{train}}$  &\multicolumn{1}{c}{Scenario 1} &\multicolumn{1}{c}{Scenario 2} \\[-0.9em] \\ \\[-1em] \\[-0.8em]
$p_{1,1\mid 1,\textbf{X}}$ &$\frac{1}{4}$ & $\epsilon_{n}+\frac{1}{4}$\\
$p_{-1,1\mid 1,\textbf{X}}$ &$\frac{1}{2}\epsilon_{n}+\frac{1}{4}$ & $\frac{1}{2}\epsilon_{n}+\frac{1}{4}$ \\ 
$p_{1,-1\mid 1,\textbf{X}}$ &$-\frac{1}{2}\epsilon_{n}+\frac{1}{4}$ & $-\frac{1}{2}\epsilon_{n}+\frac{1}{4}$\\
$p_{-1,-1\mid 1,\textbf{X}}$ & $\frac{1}{4}$ &$-\epsilon_{n}+\frac{1}{4}$\\
$p_{1,1\mid -1,\textbf{X}}$ & $\frac{1}{4}$ &$-\epsilon_{n}+\frac{1}{4}$\\
$p_{1,-1\mid -1,\textbf{X}}$ &$-\frac{1}{2}\epsilon_{n}+\frac{1}{4}$ & $-\frac{1}{2}\epsilon_{n}+\frac{1}{4}$\\
$p_{1,-1\mid -1,\textbf{X}}$ &$\frac{1}{2}\epsilon_{n}+\frac{1}{4}$ & $\frac{1}{2}\epsilon_{n}+\frac{1}{4}$\\
$p_{-1,-1\mid -1,\textbf{X}}$ & $\frac{1}{4}$ & $\epsilon_{n}+\frac{1}{4}$\\ 
\end{tabular}}
\end{table}

It suffices to use the same proof as above in order to establish that \[
\sup_{\theta\in \Theta, \widehat{L}\in \mathcal{F}_{L,n}, \widehat{U}\in \mathcal{F}_{U,n}}\mathcal{R}_{\text{upper}}(\widehat{f})-\mathcal{R}^\ast_{\text{upper}}\geq n^{-\alpha}. 
\]
Finally, combine the results and we conclude:
\begin{equation*}
    \sup_{\theta\in \Theta, \widehat{L}\in \mathcal{F}_{L,n}, \widehat{U}\in \mathcal{F}_{U,n}}\mathcal{R}_{\text{upper}}(\widehat{f})-\mathcal{R}^\ast_{\text{upper}}\geq n^{-(\alpha \wedge \beta)}.
\end{equation*}
\end{document}